\newcommand{\splitcelltab}[1]{\begin{tabular}{@{}c@{}}#1\end{tabular}} 
\newcommand{\wrt}{w.\,r.\,t.}
\newcommand{\eg}{e.\,g.}
\newcommand{\st}{s.\,t.} 
\newcommand{\ie}{i.\,e.}
\newcommand{\etc}{etc.}
\newcommand{\resp}{resp.}
\newcommand{\formComma}{\,\text{,}}
\newcommand{\formPeriod}{\,\text{.}}
\newcommand{\surf}{\mathcal{S}}
\newcommand{\R}{\mathbb{R}}
\newcommand{\tangent}[2][]{\tensor{\operatorname{T}\!}{#1}#2}
\newcommand{\tangentS}[1][]{\tangent[#1]{\surf}}
\newcommand{\tangentR}[1][]{\tangent[#1]{\R^3\vert_{\surf}}}
\newcommand{\tangentQS}{\tensor{\operatorname{Q}\!}{^2}\surf}
\newcommand{\tangentQR}{\tensor{\operatorname{Q}\!}{^2}\R^3\vert_{\surf}}
\newcommand{\tangentCQR}{\tensor{\operatorname{C_{\surf}Q}\!}{^2}\R^3\vert_{\surf}}
\newcommand{\tangentAR}{\tensor{\operatorname{A}\!}{^2}\R^3\vert_{\surf}}
\newcommand{\tangentAS}{\tensor{\operatorname{A}\!}{^2}\surf}
\newcommand{\tangentSymS}{\tensor{\operatorname{Sym}\!}{^2}\surf}
\newcommand{\tangentStar}[2][]{\tensor*{\operatorname{T}\!}{#1}#2}
\newcommand{\paraC}{X}
\newcommand{\para}{\boldsymbol{\paraC}}
\newcommand{\normalC}{\nu}
\newcommand{\normal}{\boldsymbol{\normalC}}
\newcommand{\shopC}{I\!I}
\newcommand{\shop}{\boldsymbol{\shopC}}
\newcommand{\meanc}{\mathcal{H}}
\newcommand{\gaussc}{\mathcal{K}}
\newcommand{\landau}{\mathcal{O}}
\DeclareRobustCommand{\GGamma}{\text{\raisebox{\depth}{\scalebox{1}[-1]{$\mathbb{L}$}}}}
\newcommand{\Tr}{\operatorname{Tr}}
\renewcommand{\div}{\operatorname{div}}
\newcommand{\proj}{\operatorname{\Pi}}
\newcommand{\projS}[1][]{\proj_{\tangentS[#1]}}
\newcommand{\projQS}{\proj_{\tangentQS}}
\newcommand{\projCQR}{\proj_{\tangentCQR}}
\newcommand{\jau}{\mathcal{J}}
\newcommand{\timeJ}{\mathfrak{J}}
\newcommand{\timeLuu}{\mathfrak{L}^{\sharp\sharp}}
\newcommand{\timeLll}{\mathfrak{L}^{\flat\flat}}
\newcommand{\timeLu}{\mathfrak{L}^{\sharp}}
\newcommand{\timeLl}{\mathfrak{L}^{\flat}}
\newcommand{\Dt}[1][]{\operatorname{D}^{#1}_t}
\newcommand{\Dmat}{\Dt[\,\mfrak]}
\newcommand{\Dupp}{\Dt[\sharp]}
\newcommand{\Dlow}{\Dt[\flat]}
\newcommand{\Djau}{\Dt[\jau]}
\newcommand{\DCQmat}{\Dt[\operatorname{C}_{\surf},\mfrak]}
\newcommand{\dt}{\operatorname{d}_t}
\newcommand{\potenergy}{\mathfrak{U}}
\newcommand{\hil}{\operatorname{L}^{\!2}}
\newcommand{\hilspace}[1]{\hil(#1)}
\newcommand{\inner}[2]{\left\langle #2 \right\rangle_{#1}}
\newcommand{\normsq}[2]{\left\| #2 \right\|_{#1}^2}
\newcommand{\innerH}[2]{\inner{\hilspace{#1}}{#2}}
\newcommand{\normHsq}[2]{\normsq{\hilspace{#1}}{#2}}
\newcommand{\overdot}[1]{\dot{\overline{#1}}}
\newcommand{\overdotinner}[2]{\overdot{\inner{}{#2}}_{#1}} 
\newcommand{\gb}{\boldsymbol{g}}
\newcommand{\qb}{\boldsymbol{q}}
\newcommand{\Qb}{\boldsymbol{Q}}
\newcommand{\rb}{\boldsymbol{r}}
\newcommand{\Rb}{\boldsymbol{R}}
\newcommand{\pb}{\boldsymbol{p}}
\newcommand{\Pb}{\boldsymbol{P}}
\newcommand{\eb}{\boldsymbol{e}}
\newcommand{\Eb}{\boldsymbol{E}}
\newcommand{\bb}{\boldsymbol{b}}
\newcommand{\Gb}{\boldsymbol{G}}
\newcommand{\Sb}{\boldsymbol{S}}
\newcommand{\Ab}{\boldsymbol{A}}
\newcommand{\Psib}{\boldsymbol{\Psi}}
\newcommand{\Vb}{\boldsymbol{V}}
\newcommand{\vb}{\boldsymbol{v}}
\newcommand{\vnor}{v_{\bot}}
\newcommand{\lambdab}{\boldsymbol{\lambda}}
\newcommand{\ub}{\boldsymbol{u}}
\newcommand{\etab}{\boldsymbol{\eta}}
\newcommand{\Omegab}{\boldsymbol{\Omega}}
\newcommand{\sbb}{\boldsymbol{s}}
\newcommand{\chib}{\boldsymbol{\chi}}
\newcommand{\Gbcal}{\boldsymbol{\mathcal{G}}}
\newcommand{\Abcal}{\boldsymbol{\mathcal{A}}}
\newcommand{\Qbcal}{\boldsymbol{\mathcal{Q}}}
\newcommand{\Id}{\boldsymbol{Id}}
\newcommand{\IdS}{\Id_{\surf}}
\newcommand{\rot}{\operatorname{rot}}
\newcommand{\Comp}{\mathrm{C}}
\newcommand{\DeltaS}{\operatorname{\Delta}_{\Comp}}
\newcommand{\nablaS}{\operatorname{\nabla}_{\Comp}}
\newcommand{\nablahat}{\operatorname{\widehat{\nabla}}}
\newcommand{\DeltaCQS}{\operatorname{\Delta}_{\Comp}^{\operatorname{C}_{\surf}}}
\newcommand{\mfrak}{{\!\mathfrak{m}}}
\newcommand{\ofrak}{{\!\mathfrak{o}}}
\newcommand{\dS}{\operatorname{d}\surf}
\newtheorem{theorem}{Theorem}
\newtheorem{lemma}[theorem]{Lemma}
\newtheorem{corollary}[theorem]{Corollary}
\journal{Journal of Geometry and Physics}
\title{Tensorial time derivatives on moving surfaces: General concepts and a specific application for surface Landau-de Gennes models}
\author[1]{Ingo Nitschke\corref{cor1}}
\author[1,2,3]{Axel Voigt}
\address[1]{Institut f{\"u}r Wissenschaftliches Rechnen, Technische Universit{\"a}t Dresden, 01062 Dresden, Germany}
\address[2]{Dresden Center for Computational Materials Science (DCMS), Technische Universit{\"a}t Dresden, 01062 Dresden, Germany}
\address[3]{Center for Systems Biology Dresden (CSBD), Pfotenhauerstr. 108, 01307 Dresden, Germany}
\begin{document}

\begin{frontmatter}

\begin{abstract}
Observer-invariance is regarded as a minimum requirement for an appropriate definition of time derivatives. We systematically discuss such time derivatives for surface tensor field and provide explicit formulations for material, upper-convected, lower-convected and Jaumann/corotational time derivatives which all lead to different physical implications. We compare these results with the corresponding time derivatives for tangential tensor fields. As specific surface 2-tensor fields we consider surface Q-tensor fields and conforming surface Q-tensor fields and apply the results in surface Landau-de Gennes models for surface liquid crystals. 
\end{abstract}

\begin{keyword}
tensor fields \sep moving surface \sep embedded surface \sep observer-invariance \sep time derivative

\MSC[2020] 53A45 \sep 53A05 \sep 37C10 \sep 70G45
\end{keyword}

\end{frontmatter}

\section{Introduction}

Observer-invariant time derivatives for tensor-fields on moving surfaces $\surf \subset \R^3$ are important ingredients for various applications, such as fluid deformable surfaces and surface liquid crystals, see, e.g., \cite{Torres-SanchezMillanArroyo_JoFM_2019,reuther2020numerical,Krauseetal_arXiv_2022,nitschke2019hydrodynamic,Nitschkeetal_PRSA_2020}. They determine specific rates of change independently of their observation and specify transport mechanism reflecting a certain inertia in the considered quantity induced by material motions. For tangential tensor-fields, defined in the tangent bundle of $\surf$, denoted by $\tangentS[^n]$, time derivatives for arbitrary observer are discussed in detail in \cite{NitschkeVoigt_JoGaP_2022}. Unlike for scalar fields, where the time derivative is uniquely defined, severe differences in the evolution of the tangential tensor-field, e.g. a surface director field or a surface Q-tensor field in surface liquid crystal models \cite{NitschkeSadikVoigt_A_2022}, have been identified. The implications of these differences, e.g. in morphogenesis \cite{Maroudas-Sacks_NP_2021,Hoffmann_SA_2022,Morris_2022}, which can be modelled using fluid deformable surfaces and surface liquid crystals are not yet explored. The requirement of the surface tensor-fields to be tangential might be too strong for such applications. They require surface tensor-fields with an additional normal component, see, e.g., \cite{Bartelsetal_IFB_2012} for a director field on a flexible membrane and \cite{Golovatyetal_JNS_2017,Nitschke_2018,Nestler_2020,Boucketal_arXiv_2022} for surface Q-tensor-fields but on stationary surfaces. These surface tensor-fields are defined in $\tangentR[^n]$, and need slightly different time derivatives, which respect the embedding space $\R^3$ as well as the surface $\surf$. We will systematically discuss these time derivatives, their properties and relations and apply them to a surface Landau-de Gennes model.

Unlike in \cite{NitschkeVoigt_JoGaP_2022}, we do not take a spacetime manifold as a basis for observer-invariant time derivatives. The advantage is an improved readability due to lesser abstract concepts. The disadvantage is that we do not get any longer observer-invariance for free as a results of a covariance principle \wrt\ the choice of spacetime coordinates. The main issue to develop observer-invariant time derivatives is that the time $t$ is not a coordinate of $\surf$, but rather a parameter to describe time-dependencies \wrt\ an observer and the relation between time and space. We need to demonstrate that the time derivatives of tensor-fields on moving surfaces are invariant within the observer class depicting the moving surface. In parts we circumvent this issue by stipulating time derivatives for a material observer (Lagrange perspective) and transform these representation to an arbitrary observer. We only consider instantaneous tensor-fields. 

We introduce notation in subsection \ref{sec:notation} and provide a short tabular summary in subsection \ref{sec:summary}. The actual derivation of time derivatives is constituted in section \ref{sec:derivation}, which is organized in the following way.
Subsection \ref{sec:general_approach} describes the general approach to obtain time derivatives.
Basically, we use a differential quotient \wrt\ the time parameter $ t $ \st\ a time derivative yields a certain rate of the considered tensor-fields.
Such an approach is only sufficient for fixed choices of convenient pullbacks, which are capable of evaluating a ``future'' tensor-field on the current surface. We illustrate this approach for scalar fields in $ \tangentS[^0] $, where such a pullback seems to be uniquely given.
The situation changes for $ n $-tensor fields with $ n\ge 1 $, where different pullbacks lead to different time derivatives.
In subsection \ref{sec:Dmat} we derive the material time derivative,
in subsection \ref{sec:Dupp} the upper-convected time derivative, 
in subsection \ref{sec:Dlow} the lower-convected time derivative
and in subsection \ref{sec:Djau} the Jaumann/corotational time derivative.
The individual derivatives are build on each other. 
In all of these subsections we also consider vector- as well as 2-tensor-fields separately for the sake of readability. Additionally, we show at the end of each subsection that all of these time derivatives are thin-film limits of usual flat $ \R^3 $ time derivatives. With surface Landau-de Gennes models for surface liquid crystals in mind \cite{Nestler_2020}, we treat in 
subsection \ref{sec:qtensor} Q-tensor fields in $ \tangentQR $, which are symmetric and trace-free, as a special case of surface 2-tensor fields.
Here we consider only the material and Jaumann/corotational time derivative.
Moreover, we discuss surface conforming Q-tensor fields in $ \tangentCQR $, where the eigenvector spaces are aligned to the surface.
This gives the opportunity to modify the material derivative to a simpler representation. Using these tools we formulate surface Landau-de Gennes models on evolving surfaces which lead for the material time derivative to the same formulation as postulated in \cite{Nestler_2020}.

\subsection{Notation}
\label{sec:notation}

\begin{table}
\centering
\renewcommand{\arraystretch}{1.4}
\begin{tabular}{|ll|}
 \hline
 $ \tangentS[^0] = \tangentR[^0]  $ & scalar fields \\
 \hline\hline
 $ \tangentR  $ & vector fields \\
 \hline
 $ \tangentS < \tangentR $ & tangential vector fields \\
 \hline\hline
 $ \tangentR[^2] $ & 2-tensor fields \\
 \hline
 $ \tangentS[^2] < \tangentR[^2]$ & tangential 2-tensor fields \\
 \hline 
 $ \tangentQR < \tangentR[^2] $ & Q-tensor fields (trace-free, symmetric) \\
 \hline
 $ \tangentCQR < \tangentQR $ & surface conforming Q-tensor fields (only normal and tangential eigenvectors) \\
 \hline
 $ \tangentQS < \{\tangentCQR, \tangentS[^2] \} $ & tangential Q-tensor fields (trace-free, symmetric) \\
 \hline
\end{tabular}
\caption{Most used tensor field spaces and their local subtensor space relations.
        All Q-tensor related spaces are defined in section \ref{sec:qtensor}}
\label{tab:most_used_spaces}
\end{table}

\begin{table}[t]
\centering
\renewcommand{\arraystretch}{1.4}
\begin{tabular}{|ll|}
\hline
$ \para_{\mfrak} $, $ \para_{\ofrak} $
    & material and observer parameterization\\
\hline
\splitcelltab{$ g_{\mfrak ij}=\inner{\tangentS}{\partial_i\para_{\mfrak}, \partial_j\para_{\mfrak}} $,\\ 
               $ g_{\ofrak ij}=\inner{\tangentS}{\partial_i\para_{\ofrak}, \partial_j\para_{\ofrak}} $}
    & material and observer metric tensor proxy field\\
\hline
$ g_{\mfrak}^{ij} $, $ g_{\ofrak}^{ij} $
    & matrix inverse of proxy fields $ g_{\mfrak ij} $ and $ g_{\ofrak ij} $\\
\hline
$ \Gamma_{\mfrak ij}^{k} $, $ \Gamma_{\ofrak ij}^{k} $
    & Christoffel symbols of 2nd kind \wrt\ $ g_{\mfrak ij} $ and $ g_{\ofrak ij} $\\
\hline
$ \normal\in\tangentR $
    & normal field \\
\hline
$ \shop = - \nablaS\normal $, $ \meanc = \Tr\shop $
    & shape operator and mean curvature\\
\hline
$ \Vb_{\mfrak} = \vb_{\mfrak} + \vnor\normal\in\tangentR $
    & material velocity \\
\hline
$ \Vb_{\ofrak} = \vb_{\ofrak} + \vnor\normal\in\tangentR $
    & observer velocity    \\
\hline
$ \ub = \Vb_{\mfrak} - \Vb_{\ofrak} \in\tangentS $
    & relative velocity \\
\hline
$ \Gb[\Vb] = \nabla\vb - \vnor\shop\in\tangentS[^2] $
    & tangential gradient of $ \Vb\in\tangentR $ \\
\hline
$ \bb[\Vb] = \nabla\vnor + \shop\vb\in\tangentS $
    & non-tangential gradient of $ \Vb\in\tangentR $ \\
\hline
$ \nablaS\Vb = \Gb[\Vb] + \normal\otimes\bb[\Vb] $
    & surface gradient of $ \Vb\in\tangentR $ \\
\hline
$ \Ab[\Vb] = \frac{\Gb[\Vb] - \Gb^T[\Vb]}{2} = \frac{\nabla\vb - (\nabla\vb)^T}{2} $
    & antisymmetric part of $ \Gb[\Vb] $ \\
\hline
$ \Gbcal[\Vb] = \nablaS\Vb - \bb[\Vb]\otimes\normal $
    & adjusted surface gradient of $ \Vb\in\tangentR $ \\
\hline
$ \Abcal[\Vb] = \frac{\Gbcal[\Vb] - \Gbcal^T[\Vb]}{2} $
    & antisymmetric part of $ \Gbcal[\Vb] $ \\
\hline
\end{tabular}
\caption{Some frequently used quantities. Note that $ \Vb $ is used as a placeholder for $ \Vb_{\mfrak} $ and $ \Vb_{\ofrak} $ in case the observer choice is relevant.}
\label{tab:quantities_overview}
\end{table}

We mainly adopt notations from \cite{NitschkeSadikVoigt_A_2022}.
Nevertheless, we give a condensed summary in this section including some notational extensions.
A moving surface $ \surf $ is sufficiently described by parameterizations
\begin{align}\label{eq:para}
    \para:\quad \mathcal{T}\times\mathcal{U} \rightarrow \R^3:\quad (t,y^1,y^2)\mapsto\para(t,y^1,y^2)\in\surf\vert_{t}\formComma
\end{align}
where $ \mathcal{U}\subset\R^2 $ is the chart codomain and $ \mathcal{T}=[t_0,t_1]\subset\R $ the time domain.
For simplicity we assume that $ \para(t,\mathcal{U})=\surf\vert_{t} $ can be achieved by a single time-depending parameterization $ \para $ for all $ t\in\mathcal{T} $.
The results can be extended to the more general case considering subsets providing an open covering of $ \surf $.
We omit the time parameter $ t $ in the notation if it is clear that the considered term can be evaluated temporally locally.
A parameterization is not uniquely given for a moving surface. 
For instance comprises $ \para $ information about the observer. Due to this, we subscribe quantities with $ \mfrak $ if we consider the material observer and 
$ \ofrak $ if we consider an arbitrary observer, see \cite{NitschkeVoigt_JoGaP_2022} for more details about observer.
One could refer the material observer to the Lagrangian perspective/specification.
Since we also consider motion in normal direction of the surface and the observer has to follow the material in this direction,
a pure Eulerian perspective does not exist on moving surfaces generally.
Note that we assume that $ \para $ provides a sufficiently smooth embedding of $ \surf  $ into $ \R^3 $.

We write $ \tangentR[^n] $ as a shorthand for the space of sufficiently smooth $ n $-tensor field on $\surf\subset\R^3$, 
\ie\ for $ \Rb\in\tangentR[^n] $ and $(y^1,y^2)\in\mathcal{U}$ the quantity $\Rb(y^1,y^2)\in\tangentStar[^n_{\para(y^1,y^2)}]{\R^3}\cong(\R^3)^n$
is a usual $ \R^3 $-$n$-tensor defined at $\para(y^1,y^2)\in\surf$.
This means that we handle tensor bundles and fields (section of bundles) synonymously due to the assumed smooth structure.
We also does not distinguish between co- and contravariant tensor fields, and everything between, in index-free notations, since
they are isomorph by the musical isomorphisms ($ \flat $, $\sharp$) for a given metric and all operators used in this paper respect that.   
The space of tangential $ n $-tensor fields $ \tangentS[^n] $ is a subtensor field of $ \tangentR[^n] $,
\ie\ it holds the subtensor relation $ \tangentStar[^n_{\para(y^1,y^2)}]{\surf} < \tangentStar[^n_{\para(y^1,y^2)}]{\R^3} $
for all $(y^1,y^2)\in\mathcal{U}$.
The space  $ \tangentS[^n] $ contains only the fields from $ \tangentR[^n] $ that can be represented by a tangential frame.
We summarize the most used subtensor fields of $ \tangentR[^n] $ for $ n\in{0,1,2} $ in this paper in table \ref{tab:most_used_spaces}.
Some of them are defined in their associated section, where they are used.
For $ n=1 $ we omit the index, \eg\ it is $ \tangentS = \tangentS[^1] $.
Every subtensor field relation brings its uniquely defined orthogonal projection $ \proj_{(\cdot)} $ along,
which is labeled by its image, \ie\ the subtensor field space.
The orthogonal projection $ \projS[^n]:\tangentR[^n] \rightarrow \tangentS[^n] $ projects $ n $-tensor fields into tangential $ n $-tensor fields for instance. 
We use the global Cartesian as well as local tangential frames and thus, for a better readability, also different index notations (Ricci calculus) in accordance with their frame.
We apply capital Latin letters $ A,B,C,\ldots $ \wrt\ the Cartesian frame $ \{\eb_A\}$,
\eg\ we could use $ R^{AB}\eb_A\otimes\eb_B $ to describe a 2-tensor field $ \Rb\in\tangentR[^2] $.
Small Latin letters $ i,j,k,\ldots $ are used \wrt\ the tangential frame $ \{\partial_i\para\} $ derived from parameterization \eqref{eq:para}.
For instance we could write $ r^{ij}\partial_i\para\otimes\partial_j\para $ for a tangential 2-tensor field $ \rb\in\tangentS[^2] $.

We only use two kinds of spatial derivatives.
One is the covariant derivative $ \nabla:\tangentS[^n] \rightarrow \tangentS[^{n+1}] $ defined by the Christoffel symbols 
$ \Gamma_{ijk} = \frac{1}{2}( \partial_i g_{jk} + \partial_j g_{ik} - \partial_k g_{ij}) $ in a usual way, 
where $ g_{ij} $ is the covariant proxy of the metric tensor. 
In index notations, we represent $ \nabla $ with a stroke ``$ \vert $''.
For instance, we write $ \tensor{[\nabla\rb]}{^{ij}_k} = \tensor{r}{^{ij}_{|k}} = \partial_k r^{ij} + \Gamma_{kl}^i r^{lj} + \Gamma_{kl}^j r^{il} $
for $ \rb\in\tangentS[^2] $.
The other one is the surface derivative $ \nablaS:\tangentR[^n] \rightarrow \tangentR[^n]\otimes \tangentS < \tangentR[^{n+1}] $
defined as the covariant derivative on the Cartesian proxy components, which are scalar fields in $ \tangentS[^0] $.
As an example, it is $ \nablaS\Rb = \eb_A\otimes\eb_B\otimes\nabla R^{AB} $ valid for $ \Rb\in\tangentR[^2] $.
For readers from other communities, it holds $ \nablaS\Rb = (\nablahat\widehat{\Rb})\vert_{\surf} \IdS $,
where $ \widehat{\Rb}\in\tangent[^n]{\R^3} $ is an arbitrary smooth extension \st\ $ \widehat{\Rb}\vert_{\surf} = \Rb\in\tangentR $ is valid,
$ \nablahat:\tangent[^n]{\R^3} \rightarrow \tangent[^{n+1}]{\R^3} $ the usual $ \R^3 $-gradient
and $ \IdS\in\tangentS[^2] $ the surface identity, \resp\ tangential projection, tensor field, 
\eg\ given by $ [\IdS]^{AB} = \delta^{AB} - \normalC^A\normalC^{B} $ or $ [\IdS]^{ij} = g^{ij} $.  
Both derivatives are also related outside the Cartesian frame and we give these relations in the appropriated locations in this paper where they are needed.
Further definitions for covariant differential operators like $ \div $ (divergence), $ \rot $ (curl), $ \Delta $ (Bochner-Laplace), \etc,
are derived from $ \nabla $ in the usual way.  
In the example section \ref{sec:LdG} we use the surface Laplace operator $\DeltaS := (\Tr\nablaS^2):\tangentR[^2] \rightarrow \tangentR[^2]$ on 2-tensor fields,
which stated a kinda connection Laplace operator a priori \wrt\ surface derivative $ \nablaS $.
For more details see lemmas \ref{lem:laplace_equals_beltrami} and \ref{lem:laplace_equals_bochner},
where we show that  $ \DeltaS $ is the Laplace-Beltrami operator on the Cartesian Proxy components as well as a Bochner-like Laplace operator \wrt\ $ \nablaS $.
Additionally, corollary \ref{col:surface_laplace_decomposition} gives a relation to covariant differential operators  outside the Cartesian frame.
Inner products $ \inner{(\cdot)}{\cdot,\cdot} $ are written with angle brackets and labeled by its associated space.
For instance $ \inner{\tangentS[^2]}{\rb_1,\rb_2} = g_{ik}g_{jl}r_{1}^{ij}r_{2}^{kl}  $ is the local inner product,
or $ \innerH{\tangentS[^2]}{\rb_1,\rb_2} = \int_{\surf} \inner{\tangentS[^2]}{\rb_1,\rb_2} \dS $ is the global inner product 
of $ \rb_1,\rb_2\in\tangentS[^2] $. 
Note that inner products on tensor fields are backwards compatible with their subtensor fields, \eg\ it holds 
$ \inner{\tangentR[^2]}{\rb_1,\rb_2} = \inner{\tangentS[^2]}{\rb_1,\rb_2} $ for $ \rb_1,\rb_2\in\tangentS[^2] $.
Norms are given and written according to their inner products, \eg\ it is $ \normsq{\tangentS[^2]}{\rb} = \innerH{\tangentS[^2]}{\rb,\rb} $ valid for $ \rb\in\tangentS $.
We save writing an extra operation symbol, like a dot, for simple tensor-tensor multiplications $ \tangentR[^n]\times\tangentR[^m]\rightarrow\tangentR[^{n+m-2}] $,
\eg\ it is $ \Rb_1\Rb_2 = R_1^{AB} R_{2B} \eb_A\in\tangentR $ valid for $ \Rb_1\in\tangentR[^2] $ and $ \Rb_2\in\tangentR $.
However, we sometimes use the double-dot  symbol ``$ : $'' for the double-contraction product $  \tangentR[^n]\times\tangentR[^m]\rightarrow\tangentR[^{n+m-4}] $,
\eg\ it holds $ \Rb_1\operatorname{:}\Rb_2 = \inner{\tangentR[^2]}{\Rb_1,\Rb_2}\in\tangentS[^0]$ for $ \Rb_1,\Rb_2\in\tangentR[^2] $.
As in \cite{NitschkeSadikVoigt_A_2022}, we use arguments in square brackets to denote functional dependencies, 
\eg\ the scalar field 
$ f[\para_{\mfrak},\Vb_{\mfrak}] = \normsq{\R^3}{\para_{\mfrak}} + \normsq{\tangentR}{\Vb_{\mfrak}} \in\tangentS[^0] $ 
depends on the material surface parameterization $\para_{\mfrak}$ as a proxy for the surface, \wrt\ the material observer, as well as on its velocity $\Vb_{\mfrak}=\partial_t\para_{\mfrak}$.
Note that functional dependencies do not have to be mutual independent as we see in the former example.
In table \ref{tab:quantities_overview} frequently used quantities, also related to the chosen observer, are summarized.
We would also like to point out that \ref{sec:identities} contains a collection of lemmas, corollaries and their justifications that may be helpful 
for understanding the quantities in table \ref{tab:quantities_overview}.
For more details on observer related notations, see \cite{NitschkeVoigt_JoGaP_2022}.
Note that for the tangential material derivative, which is defined below, we use a dot over the field symbol.
A bar between a term and the dot parenthesize the term under the dot.
For instance, we write $ \dot{f}=\overdot{f_1  f_2} $ for $ f = f_1 f_2 $ in context of scalar fields. 

\subsection{Summary}\label{sec:summary}

In this section we provide a summary of the results in section \ref{sec:derivation} and relate them to the observer-invariant time derivatives derived in \cite{NitschkeVoigt_JoGaP_2022}.
Tables \ref{tab:vector_tangentialtimederivatives} and \ref{tab:ttensor_tangentialtimederivatives} give an overview about tangential time derivatives
on tangential vector fields in $ \tangentS $ and 2-tensor fields in $ \tangentS[^2] $, which are given in \cite{NitschkeVoigt_JoGaP_2022}.
We formulate these time derivatives \wrt\ an observer parameterization as well as a relation to the material time derivative.
Note that these time derivatives are special cases of instantaneous tensor fields in \cite{NitschkeVoigt_JoGaP_2022}.
We use the name prefix ``Jaumann'' synonymously to the prefix ``corotational''. 
Time derivatives on vector fields in $ \tangentR $ and 2-tensor fields in $ \tangentR[^2] $, which are derived in  section \ref{sec:derivation}, are summarized in 
table \ref{tab:vector_timederivatives} and \ref{tab:ttensor_timederivatives}.
We formulate these time derivatives in an orthogonal tangential-normal-decomposition for tangential-normal-decomposed tensor fields,
where we are able to use the corresponding tangential time-derivatives.
This representation could be useful for analytical perspectives.
In contrast, we give also a relation to the material derivative, which could be helpful for numerical implementations,
since it is possible to apply a Cartesian frame of the embedding space for the material derivative, 
\ie\ $ \Dmat\Rb=\dot{R}^A\eb_A $ for vector fields $ \Rb\in\tangentR $ and $ \Dmat\Rb=\dot{R}^{AB}\eb_A\otimes\eb_B $ for 2-tensor fields $ \Rb\in\tangentR $,
see \eqref{eq:tensor_matder_cartesian} for general $ n$-tensor fields.
A very useful property of the material and Jaumann derivative is their inner product compatibility,
\ie\ the inner product of tensor fields obey the product rule, see corollaries 
\ref{col:vector_innerprodcomp_Dmat}, \ref{col:vector_innerprodcomp_Djau} for vector fields and \ref{col:ttensor_innerprodcomp_Dmat}, \ref{col:ttensor_innerprodcomp_Djau}
for 2-tensor fields.
Likewise, they yield a compatible product rule with the tensor-vector product $ \tangentR[^2]\times\tangentR\rightarrow\tangentR $, 
see corollaries  \ref{col:ttensor_usualprodcomp_Dmat} and \ref{col:ttensor_usualprodcomp_Djau}.
Both convected derivatives do not exhibit these behaviors.
Note that the material derivative is not an extension of the tangential material derivative, contrary to all other time derivatives we present in this paper.
However, the pure tangential part of the material derivative yields such an extension,
which in turn describes an extension of \cite[Proposition 4]{NitschkeVoigt_JoGaP_2022} for non-tangential tensor fields.
\begin{table}[t]
\centering
\renewcommand{\arraystretch}{1.2}
\begin{tabular}{|c|c|c|c|}
\hline
\ldots derivative & identifier & \wrt\ observer coordinates & \wrt\ $ \dot{\rb} $\\
\hline
material 
    & $ \dot{\rb} $ 
        & $ (\partial_t r^i)\partial_i\para_{\ofrak} + \nabla_{\ub}\rb + \Gb[\Vb_{\ofrak}]\rb $ 
            & $ \dot{\rb} $ \\
\hline
upper-convected 
    & $ \timeLu\rb $
        & $ (\partial_t r^i)\partial_i\para_{\ofrak} + \nabla_{\ub}\rb - \nabla_{\rb}\ub $
            &  $ \dot{\rb} - \Gb[\Vb_{\mfrak}]\rb$\\
\hline
lower-convected 
    & $ \timeLl\rb $
        & $ g_{\ofrak}^{ij}(\partial_t r_j)\partial_i\para_{\ofrak} + \nabla_{\ub}\rb + \rb\nabla\ub $
            &  $ \dot{\rb} + \Gb^T[\Vb_{\mfrak}]\rb$\\
\hline
Jaumann
    & $ \timeJ\rb $
        &  $\frac{1}{2}(\timeLu\rb + \timeLl\rb)$
            &  \splitcelltab{$ \dot{\rb} -\Ab[\Vb_{\mfrak}]\rb$\\
                                $ = \dot{\rb} - \frac{1}{2} (\rot\vb_{\mfrak}) *\rb $}\\
\hline
\end{tabular}
\caption{Tangential time derivatives on tangential vector fields $ \rb = r^i\partial_i\para_{\ofrak} \in\tangentS $, taken from \cite{NitschkeVoigt_JoGaP_2022}.}
\label{tab:vector_tangentialtimederivatives}

\bigskip

\begin{tabular}{|c|c|c|c|}
\hline
\ldots derivative & identifier & \wrt\ tangential derivatives & \wrt\ $ \Dmat\Rb $\\
\hline
material 
    & $ \Dmat\Rb $ 
        &  $\dot{\rb} - \phi \bb[\Vb_{\mfrak}] + \left( \dot{\phi} + \inner{\tangentS}{\rb , \bb[\Vb_{\mfrak}]} \right) \normal$
            & $\Dmat\Rb $ \\
\hline
upper-convected 
    & $ \Dupp\Rb $
        & $ \timeLu\rb + \dot{\phi}\normal$
            & $ \Dmat\Rb - \Gbcal[\Vb_{\mfrak}]\Rb  $ \\
\hline
lower-convected 
    & $ \Dlow\Rb $
        & $ \timeLl\rb + \dot{\phi}\normal$
                    & $ \Dmat\Rb + \Gbcal^T[\Vb_{\mfrak}]\Rb  $  \\
\hline
Jaumann
    & $ \Djau\Rb $
        &  $\timeJ\rb + \dot{\phi}\normal$
            &  $ \Dmat\Rb - \Abcal[\Vb_{\mfrak}]\Rb  $\\
\hline
\end{tabular}
\caption{Time derivatives on vector fields $ \Rb = \rb + \phi\normal \in \tangentR $, where $ \rb\in\tangentS $ and $ \phi\in\tangentS[^0] $.}
\label{tab:vector_timederivatives}
\end{table}

\begin{table}[t]
\centering
\renewcommand{\arraystretch}{1.2}
\begin{tabular}{|c|c|c|c|}
\hline
\ldots derivative & identifier & \wrt\ observer coordinates & \wrt\ $ \dot{\rb} $\\
\hline
material 
    & $ \dot{\rb} $ 
        & \splitcelltab{$ (\partial_t r^{ij})\partial_i\para_{\ofrak}\otimes\partial_j\para_{\ofrak} + \nabla_{\ub}\rb $\\
                        $+ \Gb[\Vb_{\ofrak}]\rb + \rb\Gb^T[\Vb_{\ofrak}] $ }
            & $ \dot{\rb} $ \\
\hline
upper-convected 
    & $ \timeLuu\rb $
        & \splitcelltab{$ (\partial_t r^{ij})\partial_i\para_{\ofrak}\otimes\partial_j\para_{\ofrak} + \nabla_{\ub}\rb $\\
                        $- (\nabla\ub)\rb - \rb(\nabla\ub)^T  $ }
            &  $ \dot{\rb} - \Gb[\Vb_{\mfrak}]\rb - \rb\Gb^T[\Vb_{\mfrak}]$\\
\hline
lower-convected 
    & $ \timeLll\rb $
        & \splitcelltab{$ g_{\ofrak}^{ik}g_{\ofrak}^{jl}(\partial_t r_{kl})\partial_i\para_{\ofrak}\otimes\partial_j\para_{\ofrak} + \nabla_{\ub}\rb $\\
                        $+ (\nabla\ub)^T\rb + \rb(\nabla\ub)  $ }
            &  $ \dot{\rb} + \Gb^T[\Vb_{\mfrak}]\rb + \rb\Gb[\Vb_{\mfrak}]$\\
\hline
Jaumann
    & $ \timeJ\rb $
        &  $\frac{1}{2}(\timeLuu\rb + \timeLll\rb)$
            &  \splitcelltab{$ \dot{\rb} - \Ab[\Vb_{\mfrak}]\rb + \rb\Ab[\Vb_{\mfrak}]$\\
                                $ = \dot{\rb} - 2\Ab[\Vb_{\mfrak}]\projQS\rb$\\
                                $ = \dot{\rb} - \frac{1}{2} (\rot\vb_{\mfrak})(*_1\rb + *_2\rb)$}\\
\hline
\end{tabular}
\caption{Tangential time derivatives on tangential 2-tensor fields $ \rb = r^{ij}\partial_i\para_{\ofrak}\otimes\partial_j\para_{\ofrak} \in\tangentS[^2] $, taken from \cite{NitschkeVoigt_JoGaP_2022}.}
\label{tab:ttensor_tangentialtimederivatives}

\bigskip

\begin{tabular}{|c|c|c|c|}
\hline
\ldots derivative & identifier & \wrt\ tangential derivatives & \wrt\ $ \Dmat\Rb $\\
\hline
material 
    & $ \Dmat\Rb $ 
        &  \splitcelltab{$\dot{\rb} - \etab_L\otimes\bb[\Vb_{\mfrak}] - \bb[\Vb_{\mfrak}]\otimes\etab_R$\\
                        $ +\left( \dot{\etab}_{L} + \rb\bb[\Vb_{\mfrak}] - \phi\bb[\Vb_{\mfrak}] \right)\otimes\normal$\\
                        $ +\normal\otimes\left( \dot{\etab}_{R} + \bb[\Vb_{\mfrak}]\rb - \phi\bb[\Vb_{\mfrak}] \right) $\\
                        $ + \left( \dot{\phi} + \inner{\tangentS}{\etab_{L}+\etab_{R}, \bb[\Vb_{\mfrak}]} \right)\normal\otimes\normal $}
            & $\Dmat\Rb $ \\
\hline
upper-convected 
    & $ \Dupp\Rb $
        & $ \timeLuu\rb + \timeLu\etab_{L}\otimes\normal + \normal\otimes\timeLu\etab_{R} +  \dot{\phi}\normal\otimes\normal $
            & $ \Dmat\Rb - \Gbcal[\Vb_{\mfrak}]\Rb - \Rb\Gbcal^T[\Vb_{\mfrak}] $ \\
\hline
lower-convected 
    & $ \Dlow\Rb $
        & $ \timeLll\rb + \timeLl\etab_{L}\otimes\normal + \normal\otimes\timeLl\etab_{R} +  \dot{\phi}\normal\otimes\normal $
                    & $ \Dmat\Rb + \Gbcal^T[\Vb_{\mfrak}]\Rb + \Rb\Gbcal[\Vb_{\mfrak}] $  \\
\hline
Jaumann
    & $ \Djau\Rb $
        &  $\timeJ\rb + \timeJ\etab_{L}\otimes\normal + \normal\otimes\timeJ\etab_{R} +  \dot{\phi}\normal\otimes\normal$
            &  $ \Dmat\Rb - \Abcal[\Vb_{\mfrak}]\Rb + \Rb\Abcal[\Vb_{\mfrak}] $\\
\hline
\end{tabular}
\caption{Time derivatives on 2-tensor fields $ \Rb = \rb + \etab_L\otimes\normal + \normal\otimes\etab_R + \phi\normal\otimes\normal \in \tangentR[^2]$
where $ \rb\in\tangentS[^2] $, $ \etab_L,\etab_R \in\tangentS $ and $ \phi\in\tangentS[^0] $.}
\label{tab:ttensor_timederivatives}
\end{table}

In context of 2-tensor fields we consider Q-tensor fields $ \tangentQR < \tangentR[^2] $ as a subbundle,
where our attention is mainly directed to the material and Jaumann derivative.
Since $ \tangentQR $ is closed \wrt\ both derivatives, we could use the more general representations for $ \tangentR[^2] $ in table \ref{tab:ttensor_timederivatives}.
These time derivatives apply in the surface Landau-de Gennes model \eqref{eq:qtensor_LdG_flow}.
A more aligned formulation of the material and Jaumann derivative, \wrt\ the orthogonal decomposition \eqref{eq:qtensor_decomposition} for Q-tensor fields, can be found
in \ref{eq:qtensor_Dmat} and \ref{eq:qtensor_Djau}.
We also consider surface conforming Q-tensor fields $ \tangentCQR < \tangentQR $, which are not closed by the material derivative but by the Jaumann derivative.
Hence we present an adjusted material derivative $ \DCQmat:=\projCQR\circ\Dmat $ under the aid of the unique orthogonal
projection $ \projCQR: \tangentQR \rightarrow \tangentCQR   $.
This surface conforming material derivative and the Jaumann derivative are used in the surface conforming Landau-de Gennes model \eqref{eq:qtensor_LdG_confomalflow}.
An orthogonal decomposition of both time derivatives on surface conforming Q-tensor fields can be found in table \ref{tab:cqtensor_timederivatives}
and apply in the equivalent formulation \eqref{eq:qtensor_LdG_confomalflow_decomposed} of the  surface conforming Landau-de Gennes model.
\begin{table}[t]
\centering
\renewcommand{\arraystretch}{1.2}
\begin{tabular}{|c|c|c|}
\hline
\ldots Derivative & Identifier & \wrt\ tangential derivatives \\
\hline
Surface conforming Material 
    & $ \DCQmat\Qb $ 
        &  $\dot{\qb} + \dot{\beta} \left( \normal\otimes\normal - \frac{1}{2}\IdS \right)$ \\
\hline
Jaumann
    & $ \Djau\Qb $
        &  $\timeJ\qb + \dot{\beta} \left( \normal\otimes\normal - \frac{1}{2}\IdS \right)$\\
\hline
\end{tabular}
\caption{Time derivatives on surface conforming Q-tensor fields 
$ \Qb = \qb + \beta (\normal\otimes\normal - \frac{1}{2}\IdS)\in\tangentCQR<\tangentQR < \tangentR[^2] $,
where $ \qb\in\tangentQS $ and $ \beta\in\tangentS[^0] $.}
\label{tab:cqtensor_timederivatives}
\end{table}

\section{Derivations}\label{sec:derivation}

\subsection{General Approach and Scalar Fields}\label{sec:general_approach}

Formally, we could define an arbitrary time-derivative on $ \Rb\in\tangentR[^n] $ by
\begin{align}
    (\Dt\Rb)[\para_{\mfrak}](t,y_{\mfrak}^1,y_{\mfrak}^2) \label{eq:tensor_timeD_general}
            &:= \lim_{\tau\rightarrow 0} \frac{1}{\tau}\left( (\Phi^*_{t,\tau}\Rb[\para_{\mfrak}]\vert_{t+\tau})(t,y_{\mfrak}^1,y_{\mfrak}^2) 
                                                              - \Rb[\para_{\mfrak}](t,y_{\mfrak}^1,y_{\mfrak}^2) \right)
\end{align}
where 
$ \Phi^*_{t,\tau}:\tangent^n\R^3\vert_{\surf,t+\tau} \rightarrow \tangent^n\R^3\vert_{\surf,t} $ 
is a convenient pullback by the map
\begin{align*}
    \Phi_{t,\tau}:\surf\vert_{t} \rightarrow \surf\vert_{t+\tau}: 
        \quad\para_{\mfrak}(t,y_{\mfrak}^1,y_{\mfrak}^2) \mapsto \para_{\mfrak}(t+\tau,y_{\mfrak}^1,y_{\mfrak}^2) \formPeriod
\end{align*}
Even if  the time derivative is described by a material observer, \wrt\ its parameterization $ \para_{\mfrak} $, 
we are able to evaluate \eqref{eq:tensor_timeD_general} by an arbitrary observer, \wrt\ parameterization $ \para_{\ofrak} $ with the aid of
relation
\begin{align}\label{eq:tensor_observer_relation}
     \Rb[\para_{\mfrak}](t,y_{\mfrak}^1,y_{\mfrak}^2) 
        = \Rb[\para_{\ofrak}](t,(\para_{\ofrak}\vert_{t}^{-1}\circ\para_{\mfrak})(t,y_{\mfrak}^1,y_{\mfrak}^2))
            \in\tangent^n_{\para_{\mfrak}(t,y_{\mfrak}^1,y_{\mfrak}^2)}\R^3\vert_{\surf} \formComma
\end{align}
respectively, the inverse relation
\begin{align*}
    \Rb[\para_{\ofrak}](t, y_{\ofrak}^1,y_{\ofrak}^2)
        = \Rb[\para_{\mfrak}](t, (\para_{\mfrak}\vert_{t}^{-1} \circ \para_{\ofrak} )(t, y_{\ofrak}^1,y_{\ofrak}^2))
            \in\tangent^n_{\para_{\ofrak}(t,y_{\ofrak}^1,y_{\ofrak}^2)}\R^3\vert_{\surf}\formPeriod
\end{align*}
The general proceeding is to assume a pullback, conclude the associated time derivative \wrt\ the material observer and transform it \wrt\ an arbitrary observer to establish observer-invariance.

For scalar fields $ f\in\tangentR[^0]=\tangentS[^0] $, \ie\ $ n=0 $, the only noteworthy pullback is simply given by
\begin{align} \label{eq:scalar_pullback}
    (\Phi^{*_0}_{t,\tau}f[\para_{\mfrak}]\vert_{t+\tau})(t,y_{\mfrak}^1,y_{\mfrak}^2) 
        &=  f[\para_{\mfrak}](t+\tau,y_{\mfrak}^1,y_{\mfrak}^2)
           \in\tangent^0_{\para_{\mfrak}(t,y_{\mfrak}^1,y_{\mfrak}^2)}\surf\formPeriod
\end{align}  
Hence, with $ \dot{f}:= \Dt\vert_{\Phi^{*}_{t,\tau}=\Phi^{*_0}_{t,\tau}}f $, \eqref{eq:tensor_timeD_general} becomes
\begin{align*}
    \dot{f}[\para_{\mfrak}](t,y_{\mfrak}^1,y_{\mfrak}^2)
        &= \partial_t f  [\para_{\mfrak}](t,y_{\mfrak}^1,y_{\mfrak}^2) 
           \in\tangent^0_{\para_{\mfrak}(t,y_{\mfrak}^1,y_{\mfrak}^2)}\surf
\end{align*}
for a material observer. 
Getting the time derivative $ \dot{f}[\para_{\ofrak}] $ for an arbitrary observer given by parameterization $ \para_{\ofrak} $ is more difficult.
The time derivative \eqref{eq:tensor_timeD_general} as well as the pullback \eqref{eq:scalar_pullback} have to be evaluated \wrt\ the relation \eqref{eq:tensor_observer_relation}.
As we can see in appendix \ref{sec:scalar_timeder_scratch}, applying a Taylor expansion to this pullback at $ \tau=0 $ leads to 
\begin{align*}
    \dot{f}[\para_{\ofrak}](t,y_{\ofrak}^1,y_{\ofrak}^2)  
            &= \partial_t f[\para_{\ofrak}](t,y_{\ofrak}^1,y_{\ofrak}^2)  + (\nabla_{\ub} f)[\para_{\ofrak}](t,y_{\ofrak}^1,y_{\ofrak}^2) 
            \in\tangent^0_{\para_{\ofrak}(t,y_{\ofrak}^1,y_{\ofrak}^2)}\surf \formComma\\
   \text{where}\quad \ub
            =\ub[\para_\ofrak,\para_\mfrak](t,y^1_\ofrak,y^1_\ofrak)  
           &:= \Vb_{\mfrak}[\para_\mfrak](t,(\para_{\mfrak}\vert_{t}^{-1} \circ \para_{\ofrak} )(t, y_{\ofrak}^1,y_{\ofrak}^2)) - \Vb_\ofrak[\para_\ofrak](t,y^1_\ofrak,y^1_\ofrak)
\end{align*}
is the relative velocity, 
$\Vb_\ofrak[\para_\ofrak](t,y^1_\ofrak,y^1_\ofrak):= \partial_t \para_{\ofrak}(t,y^1_\ofrak,y^1_\ofrak)$ the observer velocity
and $\Vb_\mfrak[\para_\mfrak](t,y^1_\mfrak,y^1_\mfrak):= \partial_t \para_{\ofrak}(t,y^1_\mfrak,y^1_\mfrak)$ the material velocity.
This is also consistent with the scalar-valued time derivative given in   \cite{NitschkeVoigt_JoGaP_2022}.
Since the observer is arbitrary and for sake of simplicity, we also write 
\begin{align} \label{eq:scalar_timeder}
    \dot{f} &= \partial_t f + \nabla_{\ub}f \in\tangentS[^0]
\end{align}
for short, which is the common form in context of non(-Einstein)-relativistic settings \cite{Van_2008}, 
and ALE (Arbitrary Lagrangian–Eulerian) methods on non-stationary surfaces \cite{Sahu_2020, Ramaswamy_1987}.
A material perspective, \ie\ $ \ub=0 $, applies to Lagrangian particle methods \cite{Idelsohn_2004,Koh_2011} for instance.
If $f$ is extended in a volume around $\surf$, there are also alternative formulation of $\dot{f}$,
see  \cite{Dziuk2007} for instance.

Since we consider $ \R^3 $ quantities, even though restricted to the surface, we show at the end of each of the following subsections that all
considered time derivatives are consistent to their counterpart in a volume, 
\ie\ the thin film limit of a time derivative in a bulk equals its time derivatives on the surface.
We use the thin film parameterization $ \chib[\para] $, defined by
\begin{align}\label{eq:tf_para}
    \chib[\para](t,y^1,y^2,\xi) 
        &:= \para(t,y^1,y^2) + \xi\normal[\para](t,y^1,y^2)
\end{align}
with $ \xi\in[-h,h] $,
to describe the thin film $ \surf_h $ around $ \surf $, see \cite{Nitschke_2018} for more details.
Therefore, $\chib[\para_{\mfrak}]$ is the material and $ \chib[\para_{\ofrak}] $ an arbitrary observer thin film parameterization.
According to this, $ \widehat{\Vb}_{\mfrak} := \partial_t\chib[\para_{\mfrak}] $ is the material and 
$ \widehat{\Vb}_{\ofrak} := \partial_t\chib[\para_{\ofrak}] $ the observer thin film velocity.
We obtain the relative thin film velocity
\begin{align*}
    \widehat{\Vb}_{\mfrak} - \widehat{\Vb}_{\ofrak}
        &= \ub - \xi\shop\ub
\end{align*}
as a consequence by \eqref{eq:partialtimeder_normal}.
For extended scalar fields $ \widehat{f}\in\tangent[^0]{\surf_h} $, which are sufficing $ \widehat{f}\vert_{\xi=0} = f\in\tangentS[^0] $,
we use the Taylor expansion
\begin{align*}
    \widehat{f} 
        &= f + \xi(\partial_{\xi}\widehat{f})\vert_{\xi=0} + \landau(\xi^2)
\end{align*}
at $ \xi=0 $.
Note that the normal coordinate $ \xi $ and the time parameter $ t $ are mutually independent,
\ie\  $ \partial_{\xi}$ and $\partial_t  $ are commuting on scalar fields.
Eventually, this yields
\begin{align}\label{eq:tfl_scalar_timeder}
    \dot{\widehat{f}}
        &= \partial_t \widehat{f} + \nablahat_{\widehat{\Vb}_{\mfrak} - \widehat{\Vb}_{\ofrak}} \widehat{f}
         = \dot{f} + \landau(\xi)\formComma
\end{align}
\ie\ it holds $ \dot{\widehat{f}} \rightarrow \dot{f} $ for $ h\rightarrow 0 $.

\subsection{Material Derivative} \label{sec:Dmat}

In order to obtain the material time derivative we could simply use the Cartesian frame $ \{\eb_A\} $, which is Eulerian and constant in space.
Though it seems that an additional frame, which is not given by the chart through the parameterization, would complicate the situation at first glance, 
the material pullback becomes quite easy. 
The pullback implements the scalar pullback \eqref{eq:scalar_pullback} on each Cartesian component.
This yields the definition
\begin{align*}
    (\Phi^{*_m}_{t,\tau}\Rb[\para_{\mfrak}]\vert_{t+\tau})(t,y_{\mfrak}^1,y_{\mfrak}^2) 
        &:=  R^{A_1 \ldots A_n}[\para_{\mfrak}](t+\tau,y_{\mfrak}^1,y_{\mfrak}^2) \bigotimes_{\alpha=1}^n \eb_{A_\alpha} 
           \in\tangent^n_{\para_{\mfrak}(t,y_{\mfrak}^1,y_{\mfrak}^2)}\R^3\vert_{\surf}\formPeriod
\end{align*}  
Therefore the material derivative is given by
$ \Dmat := \Dt\vert_{\Phi^{*}_{t,\tau}=\Phi^{*_m}_{t,\tau}} $, \ie
\begin{align*}
    (\Dmat\Rb)[\para_{\mfrak}](t,y_{\mfrak}^1,y_{\mfrak}^2)  
        &= \partial_t R^{A_1 \ldots A_n}[\para_{\mfrak}](t,y_{\mfrak}^1,y_{\mfrak}^2)\bigotimes_{\alpha=1}^n \eb_{A_\alpha} 
        \in\tangent^n_{\para_{\mfrak}(t,y_{\mfrak}^1,y_{\mfrak}^2)}\R^3\vert_{\surf} \formComma
\end{align*}  
for the material observer.
Since the frame is constant, we only have to consider the scalar Cartesian proxy fields $ R^{A_1 \ldots A_n}[\para_{\mfrak}]\in\tangentS[^0] $.
For an arbitrary observer, \eqref{eq:scalar_timeder} yields 
\begin{align}\label{eq:tensor_matder_cartesian}
     \Dmat\Rb = \dot{R}^{A_1 \ldots A_n} \bigotimes_{\alpha=1}^n \eb_{A_\alpha} \formPeriod
\end{align}
One first observation of \eqref{eq:tensor_matder_cartesian} is that this time derivative equals the material time-derivative in a volume up to the restriction to the surface, 
\ie\ it does not depend on behaviors of the surface at all.
This is not to be expected by other time-derivatives.
Moreover, \eqref{eq:tensor_matder_cartesian}, contrary to \eqref{eq:tensor_timeD_general} in general, is now represented in context of an arbitrary observer chart,
\ie\ all we have to do is calculating \eqref{eq:tensor_matder_cartesian} also in terms of an arbitrary extended surface observer frame 
$ \{\partial_1\para_{\ofrak}, \partial_2\para_{\ofrak}, \normal\} $.
Note that the Cartesian frame
yields
\begin{align*} 
    \eb_A
        &= \delta_{AB}\left( g_{\ofrak}^{ij}\partial_j\paraC_{\ofrak}^B\partial_i\para_\ofrak + \normalC^{B}\normal \right)\formComma
\end{align*}
at all local events $ (t,y_{\ofrak}^1,y_{\ofrak}^2) $.
In the following subsections we transform the frame and the associated proxy fields to the extended observer frame especially for vector and 2-tensor fields. 

For extended tensor fields $ \widehat{\Rb}\in\tangent[^n]{\surf_h} $, which are sufficing $ \widehat{\Rb}\vert_{\xi=0} = \Rb\in\tangentR[^n] $,
we conclude from \eqref{eq:tensor_matder_cartesian} and \eqref{eq:tfl_scalar_timeder} that
\begin{align} \label{eq:tfl_Dmat}
    \dot{\widehat{\Rb}}
        &=  \dot{\widehat{R}}^{A_1 \ldots A_n} \bigotimes_{\alpha=1}^n \eb_{A_\alpha}
        \rightarrow \Dmat\Rb
\end{align}
is valid for $ h \rightarrow 0 $.

\subsubsection{Vector Fields}

To represent the material derivative $ \Dmat\Rb = \dot{R}^A \eb_A $ \eqref{eq:tensor_matder_cartesian} on vector fields $ \Rb\in\tangentR $, 
we use the orthogonal decomposition
\begin{align} \label{eq:vector_decomposition}
    \Rb &= \rb + \phi\normal \in \tangentR\formComma
\end{align}
where $ \rb\in\tangentS $ and $ \phi\in\tangentS[^0] $ are given by $ \Rb $ uniquely.
The tangential covariant observer proxy of $ \Dmat\Rb $ yields
\begin{align*}
    \inner{\tangentR}{ \Dmat\Rb, \partial_k\para_{\ofrak} }
        &= \delta_{AB} \dot{R}^A \partial_k\para_{\ofrak}^B
         = \overdotinner{\tangentR}{\Rb, \partial_k\para_{\ofrak}}
            - R_B \overdot{\partial_k\paraC_{\ofrak}^B}\\
        &= \partial_t r_k + u^i\partial_i r_k - \left( r_B + \phi \normalC_B \right) \left( \partial_k V_{\ofrak}^B +  u^i\partial_i\partial_k\paraC_{\ofrak}^B \right)
\end{align*}
by time derivative \eqref{eq:scalar_timeder} on scalar fields and decomposition \eqref{eq:vector_decomposition}.
With \eqref{eq:vector_partialtimeder_covar}, \eqref{eq:partialder_V} and \eqref{eq:partialpartialder_X} we obtain
\begin{align*}
    \inner{\tangentR}{ \Dmat\Rb, \partial_k\para_{\ofrak} }
        &= g_{\ofrak kj} \partial_t r^j + u^i\left( \partial_i r_k - \Gamma_{\ofrak ik}^j r_j \right)
            + G_{ij}[\Vb_{\ofrak}]r^j - \phi\left( u^i\shopC_{ik} + b_k[\Vb_{\ofrak}] \right) \\
        &=  g_{\ofrak kj} \partial_t r^j + u^i r_{k|i} + G_{ij}[\Vb_{\ofrak}]r^j -\phi b_k[\Vb_{\mfrak}]
        = \left[ \dot{\rb} - \phi \bb[\Vb_{\mfrak}] \right]_k \formComma
\end{align*}
where $ \dot{\rb}\in\tangentS $ is the material derivative of the tangential vector field $ \rb $ given in table \ref{tab:vector_tangentialtimederivatives}.
For the normal part of $ \Dmat\Rb $ we use the time derivative \eqref{eq:scalar_timeder} on scalar fields again and the rate of the normal field given in \eqref{eq:timederwithu_normal}.
This yields
\begin{align*}
    \inner{\tangentR}{ \Dmat\Rb, \normal }
        &= \delta_{AB} \dot{R}^{A} \normalC^B
         = \overdotinner{\tangentR}{\Rb, \normal} - R_{B}\dot{\normalC}^{B}
         = \dot{\phi} + \inner{\tangentS}{\rb , \bb[\Vb_{\mfrak}]} \formPeriod
\end{align*}
\begin{corollary}
    For all $ \Rb = \rb + \phi\normal \in \tangentR $, $ \rb\in\tangentS $ and $ \phi\in\tangentS[^0] $ holds
    \begin{align} \label{eq:vector_Dmat}
        \Dmat\Rb 
            &= \dot{\rb} - \phi \bb[\Vb_{\mfrak}] + \left( \dot{\phi} + \inner{\tangentS}{\rb , \bb[\Vb_{\mfrak}]} \right) \normal \formPeriod
    \end{align}
\end{corollary}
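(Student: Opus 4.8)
The plan is to exploit that $\Dmat\Rb$ is again a vector field in $\tangentR$, hence uniquely determined by its orthogonal decomposition into a tangential part plus a normal multiple of $\normal$. Since the extended observer frame $\{\partial_1\para_{\ofrak},\partial_2\para_{\ofrak},\normal\}$ spans $\tangentR$ pointwise, it suffices to compute the two covariant tangential components $\inner{\tangentR}{\Dmat\Rb,\partial_k\para_{\ofrak}}$ for $k=1,2$ together with the single normal component $\inner{\tangentR}{\Dmat\Rb,\normal}$, and then reassemble them. The entire argument rests on the Cartesian representation \eqref{eq:tensor_matder_cartesian}, $\Dmat\Rb=\dot{R}^A\eb_A$, which lets us pull the dot through inner products via the scalar rule \eqref{eq:scalar_timeder}.

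First I would pair $\Dmat\Rb$ with $\partial_k\para_{\ofrak}$. Because the Cartesian frame is constant, the pairing is differentiated by the product rule, shifting the dot off the frame vector: $\inner{\tangentR}{\Dmat\Rb,\partial_k\para_{\ofrak}}=\overdotinner{\tangentR}{\Rb,\partial_k\para_{\ofrak}}-R_B\overdot{\partial_k\paraC_{\ofrak}^B}$. Inserting $\Rb=\rb+\phi\normal$ and expanding each dot as $\partial_t(\cdot)+\nabla_{\ub}(\cdot)$ produces a raw expression in $\partial_t r^j$, $u^i$, and frame derivatives, which I then rewrite using the geometric identities for $\partial_t\para$, $\partial_i\partial_k\para$, and the covariant proxy of $\partial_t\Rb$, namely \eqref{eq:vector_partialtimeder_covar}, \eqref{eq:partialder_V} and \eqref{eq:partialpartialder_X}.

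I expect the main obstacle to be the bookkeeping that converts the observer-frame data into the \emph{material} quantity $\bb[\Vb_{\mfrak}]$ rather than $\bb[\Vb_{\ofrak}]$. The Christoffel and tangential-gradient contributions must combine so that $u^i r_k-\Gamma_{\ofrak ik}^j r_j$ collapses to the covariant component $u^i r_{k|i}$, while the mixed curvature–velocity pieces satisfy $u^i\shopC_{ik}+b_k[\Vb_{\ofrak}]=b_k[\Vb_{\mfrak}]$, precisely because $\ub=\Vb_{\mfrak}-\Vb_{\ofrak}$ is tangential. Recognizing the resulting tangential vector as the covariant components of $\dot{\rb}-\phi\bb[\Vb_{\mfrak}]$, with $\dot{\rb}$ the tangential material derivative from table \ref{tab:vector_tangentialtimederivatives}, then settles the tangential part.

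Finally I would treat the normal component the same way, shifting the dot onto the frame: $\inner{\tangentR}{\Dmat\Rb,\normal}=\overdotinner{\tangentR}{\Rb,\normal}-R_B\dot{\normalC}^B$. The key input here is the rate of the normal field \eqref{eq:timederwithu_normal}; since $\dot{\normalC}^B$ is tangential, the $\phi\normal$ contribution drops out and only $\rb$ survives, yielding the term $\inner{\tangentS}{\rb,\bb[\Vb_{\mfrak}]}$ alongside $\dot{\phi}$. Assembling the tangential vector $\dot{\rb}-\phi\bb[\Vb_{\mfrak}]$ with the normal multiple $(\dot{\phi}+\inner{\tangentS}{\rb,\bb[\Vb_{\mfrak}]})\normal$ gives exactly \eqref{eq:vector_Dmat}. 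The only genuine subtlety throughout is the passage from observer to material velocity gradients; everything else is direct index manipulation.
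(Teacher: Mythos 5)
Your proposal is correct and follows essentially the same route as the paper: decompose $\Dmat\Rb=\dot{R}^A\eb_A$ against the extended observer frame $\{\partial_1\para_{\ofrak},\partial_2\para_{\ofrak},\normal\}$, shift the dot via the product rule, and invoke \eqref{eq:vector_partialtimeder_covar}, \eqref{eq:partialder_V}, \eqref{eq:partialpartialder_X} and \eqref{eq:timederwithu_normal} to convert observer quantities into $\dot{\rb}$ and $\bb[\Vb_{\mfrak}]$. The only blemish is a typographical one where you write ``$u^i r_k-\Gamma_{\ofrak ik}^j r_j$'' for what should be $u^i\left(\partial_i r_k-\Gamma_{\ofrak ik}^j r_j\right)=u^i r_{k|i}$; the intended step is clearly the correct one.
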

Note that $\Dmat\Vb_{\mfrak} $ equals the material acceleration in an observer-invariant representation, see \cite{NitschkeVoigt_JoGaP_2022,YavariOzakinSadik_JoNS_2016}. 
To show inner product compatibility of the material derivative, we use that the proxy $ \delta_{AB} $ of the Cartesian metric tensor is in the kernel
of the scalar time derivative \eqref{eq:scalar_timeder}, 
\ie\ it holds $ \dot{\delta}_{AB} = \partial_t\delta_{AB} + u^k\partial_k\delta_{AB} = 0 $.
Hence, we obtain $ \overdotinner{\tangentR}{\Rb_1, \Rb_2} = \delta_{AB}( \dot{R}_1^A R_2^B + R_1^A \dot{R}_2^B ) $ 
for all $ \Rb_1,\Rb_2\in\tangentR $, which gives the following corollary.
\begin{corollary}\label{col:vector_innerprodcomp_Dmat}
    The material derivative on vector fields is compatible with the inner product,
    \ie\ for all $ \Rb_1 = \rb_1 + \phi_1\normal,\Rb_2 = \rb_2 + \phi_2\normal\in\tangentR $ holds
    \begin{align}
        \overdotinner{\tangentR}{\Rb_1, \Rb_2} \label{eq:vector_inner_Dmat}
            &= \inner{\tangentR}{ \Dmat\Rb_1 , \Rb_2 } + \inner{\tangentR}{ \Rb_1 , \Dmat\Rb_2 }\\
            &= \inner{\tangentS}{ \dot{\rb}_1 , \rb_2} + \inner{\tangentS}{ \rb_1 , \dot{\rb}_2} 
                + \dot{\phi}_1 \phi_2 + \phi_1 \dot{\phi}_2 \formPeriod \notag
    \end{align}
\end{corollary}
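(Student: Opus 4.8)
The plan is to reduce the statement to the Leibniz rule for the scalar material derivative \eqref{eq:scalar_timeder}, exploiting that by \eqref{eq:tensor_matder_cartesian} the operator $\Dmat$ acts componentwise in the fixed Cartesian frame, $\Dmat\Rb_i = \dot{R}_i^A\eb_A$. The decisive observation, already recorded just above the statement, is that the constant Cartesian metric proxy lies in the kernel of the scalar material derivative, $\dot{\delta}_{AB} = \partial_t\delta_{AB} + u^k\partial_k\delta_{AB} = 0$, so that time differentiation commutes with raising and contracting Cartesian indices.

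For the first equality I would write the $\R^3$ inner product in Cartesian proxies as $\inner{\tangentR}{\Rb_1,\Rb_2} = \delta_{AB}R_1^A R_2^B$ and differentiate. Since the scalar material derivative \eqref{eq:scalar_timeder} is a first-order operator obeying the ordinary product rule and annihilates $\delta_{AB}$, this yields $\overdotinner{\tangentR}{\Rb_1,\Rb_2} = \delta_{AB}(\dot{R}_1^A R_2^B + R_1^A\dot{R}_2^B)$. Re-reading the two summands as $\R^3$ inner products and inserting $\dot{R}_i^A\eb_A = \Dmat\Rb_i$ then gives $\inner{\tangentR}{\Dmat\Rb_1,\Rb_2} + \inner{\tangentR}{\Rb_1,\Dmat\Rb_2}$, as claimed.

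For the second equality I would substitute the tangential-normal representation \eqref{eq:vector_Dmat} of $\Dmat\Rb_1$ and $\Dmat\Rb_2$ and pair each against the decompositions $\Rb_2 = \rb_2 + \phi_2\normal$ and $\Rb_1 = \rb_1 + \phi_1\normal$, using orthogonality of the tangential and normal summands together with the backward compatibility of the inner product noted in the notation section (so that $\inner{\tangentR}{\rb_1,\rb_2} = \inner{\tangentS}{\rb_1,\rb_2}$ for the tangential parts). The purely tangential contributions collect to $\inner{\tangentS}{\dot{\rb}_1,\rb_2} + \inner{\tangentS}{\rb_1,\dot{\rb}_2}$ and the purely normal ones to $\dot{\phi}_1\phi_2 + \phi_1\dot{\phi}_2$, while the curvature coupling terms of the form $\pm\phi_i\inner{\tangentS}{\rb_j,\bb[\Vb_{\mfrak}]}$ enter with opposite signs and cancel by symmetry of the tangential inner product.

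There is no genuine analytical obstacle here: the whole content is the conceptual point that the embedding metric is constant in time, so that $\Dmat$ inherits inner-product compatibility directly from the scalar case. The only step demanding care is the sign bookkeeping of the $\bb[\Vb_{\mfrak}]$ coupling terms in the second equality, whose cancellation is precisely what produces the clean tangential-normal expression in \eqref{eq:vector_inner_Dmat}.
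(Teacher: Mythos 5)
Your proposal is correct and follows essentially the same route as the paper: the first equality rests on the observation that $\dot{\delta}_{AB}=0$ so the scalar product rule applies componentwise in the Cartesian frame, and the second follows by substituting \eqref{eq:vector_Dmat} into the orthogonal decomposition, with the $\pm\phi_i\inner{\tangentS}{\rb_j,\bb[\Vb_{\mfrak}]}$ cross terms cancelling between the two summands exactly as you describe. Nothing is missing.
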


\subsubsection{2-Tensor Fields}
To represent the material derivative \eqref{eq:tensor_matder_cartesian} on 2-tensor fields $ \Rb $, 
\ie\ $ \Dmat\Rb = \dot{R}^{AB} \eb_A\otimes\eb_B $, we use the orthogonal decomposition
\begin{align} \label{eq:ttensor_decomposition}
    \Rb &= \rb + \etab_L\otimes\normal + \normal\otimes\etab_R + \phi\normal\otimes\normal \in \tangentR[^2]\formComma
\end{align}
where $ \rb\in\tangentS[^2] $, $ \etab_L,\etab_R \in\tangentS $ and $ \phi\in\tangentS[^0] $ are given by $ \Rb $ uniquely.
The tangential covariant observer proxy of $ \Dmat\Rb $ yields
\begin{align*}
    \MoveEqLeft \inner{\tangentR[^2]}{\Dmat\Rb, \partial_m \para_{\ofrak} \otimes \partial_n\para_{\ofrak}}\\
     &= \dot{R}^{AB}\delta_{AC}\delta_{BD} \partial_m\paraC_{\ofrak}^C \partial_n\paraC_{\ofrak}^D\\
     &= \overdotinner{\tangentR[^2]}{\Rb, \partial_m \para_{\ofrak} \otimes \partial_n\para_{\ofrak}}
                  - R_{CD} \left( \overdot{\partial_m\paraC_{\ofrak}^C}\partial_n\paraC_{\ofrak}^D 
                  + \partial_m\paraC_{\ofrak}^C \overdot{\partial_n\paraC_{\ofrak}^D}\right)\\
     &= \partial_t r_{mn} + u^k \partial_k r_{mn}
        - R_{CD}
           \left( \partial_m V_{\ofrak}^C \partial_n\paraC_{\ofrak}^D + \partial_m\paraC_{\ofrak}^C \partial_n V_{\ofrak}^D 
                  +  u^k\partial_k\partial_m\paraC_{\ofrak}^C\partial_n \paraC_{\ofrak}^D  
                  + u^k \partial_m\paraC_{\ofrak}^C \partial_k\partial_n\paraC_{\ofrak}^D \right)
\end{align*}
by time derivative \eqref{eq:scalar_timeder} on scalar fields and decomposition \eqref{eq:ttensor_decomposition},
which is read $ R_{CD} = r_{CD} + \eta_{LC}\normalC_{D} + \normalC_{C}\eta_{RD} + \phi\normalC_{C}\normalC_{D} $
in the Cartesian proxy notation.
With \eqref{eq:ttensor_partialtimeder_covar}, \eqref{eq:partialder_V} and \eqref{eq:partialpartialder_X} we obtain
\begin{align*}
    \MoveEqLeft \inner{\tangentR[^2]}{\Dmat\Rb, \partial_m \para_{\ofrak} \otimes \partial_n\para_{\ofrak}} \\
    &= g_{\ofrak mi} g_{\ofrak nj} \partial_t r^{ij} + u^k \partial_k r_{mn}
            + r_{in} \tensor{G}{_m^i}[\Vb_{\ofrak}] + r_{mi}\tensor{G}{_n^i}[\Vb_{\ofrak}]
            -\eta_{R n} b_{m}[\Vb_{\ofrak}] - \eta_{L m} b_{n}[\Vb_{\ofrak}] \\
      &\quad\quad -u^k\left( \tensor{r}{^i_n}\Gamma_{\ofrak mki} + \tensor{r}{_m^i}\Gamma_{\ofrak nki} 
                            +\eta_{R n}\shopC_{mk} + \eta_{L m}\shopC_{nk} \right)\\
    &= g_{\ofrak mi} g_{\ofrak nj} \partial_t r^{ij} + u^k r_{mn|k}
       + r_{in} \tensor{G}{_m^i}[\Vb_{\ofrak}] + r_{mi}\tensor{G}{_n^i}[\Vb_{\ofrak}]
       - \eta_{R n} b_{m}[\Vb_{\mfrak}] - \eta_{L m} b_{n}[\Vb_{\mfrak}]\\
    &= \left[ \dot{\rb} - \etab_L\otimes\bb[\Vb_{\mfrak}] - \bb[\Vb_{\mfrak}]\otimes\etab_R  \right]_{mn}\formComma
\end{align*}
where $ \dot{\rb}\in\tangentS[^2] $ is the material derivative of the tangential 2-tensor field $ \rb $ given in table \ref{tab:ttensor_tangentialtimederivatives}.
In the same manner we calculate the covariant observer proxy of the tangential-normal part.
Hence, with \eqref{eq:scalar_timeder}, \eqref{eq:ttensor_decomposition}, \eqref{eq:vector_partialtimeder_covar}, \eqref{eq:partialder_V}, \eqref{eq:timederwithu_normal} and \eqref{eq:partialpartialder_X},
we get
\begin{align*}
    \MoveEqLeft \inner{\tangentR[^2]}{\Dmat\Rb, \partial_m \para_{\ofrak} \otimes \normal} \\
    &= \dot{R}^{AB} \delta_{AC}\delta_{BD}  \partial_m\paraC_{\ofrak}^C \normalC^D
     = \overdotinner{\tangentR[^2]}{\Rb, \partial_m \para_{\ofrak} \otimes \normal}
             - R_{CD} \left( \overdot{\partial_m\paraC_{\ofrak}^C}\normalC^D 
                             + \partial_m\paraC_{\ofrak}^C \dot{\normalC}^D\right)\\
    &= \partial_t \eta_{L m} + u^k\partial_k\eta_{L m}
        -R_{CD}\left( \partial_m V_{\ofrak}^C \normalC^D + u^k \partial_k \partial_m X_{\ofrak}^C \normalC^D - \partial_m\paraC_{\ofrak}^C b^D[\Vb_{\mfrak}]   \right)\\
     &= g_{\ofrak mi} \partial_t \eta_L^i + u^k\partial_k\eta_{L m}  - u^k \eta_L^i \Gamma_{\ofrak kmi}
                +  G_{mi}[\Vb_{\ofrak}]\eta_L^i - \phi b_{m}[\Vb_{\ofrak}]  -  \phi u^k \shopC_{mk}
                + r_{mi} b^i[\Vb_{\mfrak}]\\
     &= g_{\ofrak mi} \partial_t \eta_L^i + u^k \eta_{L m|k} +  G_{mi}[\Vb_{\ofrak}]\eta_L^i - \phi b_{m}[\Vb_{\mfrak}] + r_{mi} b^i[\Vb_{\mfrak}]
     = \left[ \dot{\etab}_{L} + \rb\bb[\Vb_{\mfrak}] - \phi\bb[\Vb_{\mfrak}] \right]_m \formComma
\end{align*}
where $ \dot{\etab}_{L}\in\tangentS $ is the material derivative of the tangential vector field $ \etab_{L} $ given in table \ref{tab:ttensor_tangentialtimederivatives}.
Since the material derivative is compatible with transposition, \ie\ it is $ \Dmat\Rb^T := \Dmat(\Rb^T) = (\Dmat\Rb)^T  $ valid,
we get the normal-tangential part by
\begin{align*}
    \inner{\tangentR[^2]}{\Dmat\Rb, \normal \otimes \partial_n \para_{\ofrak} }
        &= \inner{\tangentR[^2]}{\Dmat\Rb^T, \partial_n \para_{\ofrak} \otimes \normal}
        = \left[ \dot{\etab}_{R} + \bb[\Vb_{\mfrak}]\rb - \phi\bb[\Vb_{\mfrak}] \right]_n
\end{align*} 
as a consequence.
The pure normal part of the material derivative yields
\begin{align*}
    \inner{\tangentR[^2]}{\Dmat\Rb, \normal\otimes\normal}
        &= \dot{R}^{AB} \delta_{AC}\delta_{BD} \normalC^{C} \normalC^{D}
        =  \overdotinner{\tangentR[^2]}{\Rb, \normal \otimes \normal}  
                - R_{CD} \left(\dot{\normalC}^C\normalC^D 
               + \normalC^C \dot{\normalC}^D\right)\\
        &= \dot{\phi} + \inner{\tangentS}{\etab_{L}+\etab_{R}, \bb[\Vb_{\mfrak}]} \formPeriod
\end{align*}
by \eqref{eq:scalar_timeder} and \eqref{eq:timederwithu_normal}.
\begin{corollary}
    For all $ \Rb = \rb + \etab_L\otimes\normal + \normal\otimes\etab_R + \phi\normal\otimes\normal \in \tangentR[^2] $, 
   $ \rb\in\tangentS[^2] $, $ \etab_L,\etab_R \in\tangentS $ and $ \phi\in\tangentS[^0] $ holds
    \begin{align} \label{eq:ttensor_Dmat}
        \Dmat\Rb 
             &=   \dot{\rb} - \etab_L\otimes\bb[\Vb_{\mfrak}] - \bb[\Vb_{\mfrak}]\otimes\etab_R 
                + \left( \dot{\phi} + \inner{\tangentS}{\etab_{L}+\etab_{R}, \bb[\Vb_{\mfrak}]} \right)\normal\otimes\normal \notag\\
             &\quad\quad +\left( \dot{\etab}_{L} + \rb\bb[\Vb_{\mfrak}] - \phi\bb[\Vb_{\mfrak}] \right)\otimes\normal
                        +\normal\otimes\left( \dot{\etab}_{R} + \bb[\Vb_{\mfrak}]\rb - \phi\bb[\Vb_{\mfrak}] \right) \formPeriod
    \end{align}
\end{corollary}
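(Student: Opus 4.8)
The plan is to mirror the vector-field argument that produced \eqref{eq:vector_Dmat}: since $\Dmat\Rb$ is again a genuine $\R^3$-valued 2-tensor field restricted to $\surf$, it is uniquely determined by its four mutually orthogonal blocks under the tangential-normal splitting, \ie\ by the proxies of $\Dmat\Rb$ against $\partial_m\para_{\ofrak}\otimes\partial_n\para_{\ofrak}$, $\partial_m\para_{\ofrak}\otimes\normal$, $\normal\otimes\partial_n\para_{\ofrak}$ and $\normal\otimes\normal$. Decomposition \eqref{eq:ttensor_decomposition} records the corresponding four blocks of $\Rb$ as $(\rb,\etab_L,\etab_R,\phi)$, so it suffices to compute the four blocks of $\Dmat\Rb$ and reassemble them against the extended observer frame.

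First I would start from the Cartesian representation $\Dmat\Rb=\dot{R}^{AB}\eb_A\otimes\eb_B$ of \eqref{eq:tensor_matder_cartesian}, in which the scalar time derivative \eqref{eq:scalar_timeder} acts componentwise. For each block I would apply the product rule to move the dot off the fixed Cartesian components and onto the frame factors, \eg\ $\inner{\tangentR[^2]}{\Dmat\Rb,\partial_m\para_{\ofrak}\otimes\normal}=\overdotinner{\tangentR[^2]}{\Rb,\partial_m\para_{\ofrak}\otimes\normal}-R_{CD}\bigl(\overdot{\partial_m\paraC_{\ofrak}^C}\normalC^D+\partial_m\paraC_{\ofrak}^C\dot{\normalC}^D\bigr)$, and then insert the kinematic identities for $\overdot{\partial_m\paraC_{\ofrak}^C}$ via \eqref{eq:partialder_V} and \eqref{eq:partialpartialder_X}, and for $\dot{\normalC}^D=-b^D[\Vb_{\mfrak}]$ via \eqref{eq:timederwithu_normal}, exactly as in the four displays preceding the statement. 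The crucial bookkeeping step is to group the $\partial_t$ and $u^k\partial_k$ contributions together with the Christoffel-symbol terms into the covariant tangential material derivatives $\dot{\rb}$, $\dot{\etab}_L$, $\dot{\etab}_R$, $\dot{\phi}$ of tables \ref{tab:ttensor_tangentialtimederivatives} and \ref{tab:vector_tangentialtimederivatives}, while the remaining shape-operator and $\Gb[\Vb_{\ofrak}]$ terms reassemble into the curvature couplings. Observer-independence emerges precisely here: the $u^k\shopC_{mk}$ terms coming from $u^k\partial_k$ acting on the frame combine with the $\bb[\Vb_{\ofrak}]$ terms coming from $\partial_t$ of the frame, and the relation $b_k[\Vb_{\ofrak}]+\shopC_{ki}u^i=b_k[\Vb_{\mfrak}]$ turns them into material quantities.

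This produces the tangent-tangent block $\dot{\rb}-\etab_L\otimes\bb[\Vb_{\mfrak}]-\bb[\Vb_{\mfrak}]\otimes\etab_R$, the tangent-normal block $\dot{\etab}_L+\rb\bb[\Vb_{\mfrak}]-\phi\bb[\Vb_{\mfrak}]$, and the normal-normal block $\dot{\phi}+\inner{\tangentS}{\etab_L+\etab_R,\bb[\Vb_{\mfrak}]}$; the normal-tangent block I would obtain for free from the tangent-normal one using the transposition compatibility $\Dmat(\Rb^T)=(\Dmat\Rb)^T$, which holds because transposition commutes with the componentwise Cartesian scalar derivative. Reassembling the four blocks via the pattern of \eqref{eq:ttensor_decomposition} then yields \eqref{eq:ttensor_Dmat}.

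I expect the main obstacle to lie in the tangent-normal and normal-normal blocks rather than the tangent-tangent one. There the rate $\dot{\normal}=-\bb[\Vb_{\mfrak}]$ couples the interior tangential part $\rb$ of $\Rb$ into the normal direction: pairing the surviving tangential slot of $\rb$ with the tangential vector $\bb[\Vb_{\mfrak}]$ is exactly what generates the asymmetric cross-terms $\rb\bb[\Vb_{\mfrak}]$ and $\bb[\Vb_{\mfrak}]\rb$, and one must track carefully that the $\phi$-contribution and the projected $u^k\shopC_{mk}$ term land in the correct slot and with the correct sign. Everything else is routine index manipulation already licensed by the identities collected in \ref{sec:identities}.
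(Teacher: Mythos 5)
Your proposal follows the paper's own derivation essentially verbatim: computing the four orthogonal blocks of $\Dmat\Rb$ from the Cartesian representation \eqref{eq:tensor_matder_cartesian} via the product rule, inserting the frame rates from \eqref{eq:partialder_V}, \eqref{eq:partialpartialder_X} and \eqref{eq:timederwithu_normal}, regrouping into the tangential material derivatives, and obtaining the normal-tangent block from transposition compatibility. The approach and all key steps coincide with the paper's proof, including the observation that $b_k[\Vb_{\ofrak}]+\shopC_{ki}u^i=b_k[\Vb_{\mfrak}]$ is what renders the result observer-invariant.
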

To show inner product compatibility of the material derivative, we use $ \dot{\delta}_{AB} =0 $ as for the vector field case.
This gives the following corollary.
\begin{corollary} \label{col:ttensor_innerprodcomp_Dmat}
    The material derivative on 2-tensor fields is compatible with the inner product,
    \ie\ for all $ \Rb_{\alpha} = \rb_{\alpha} + \etab_{\alpha L}\otimes\normal + \normal\otimes\etab_{\alpha R} + \phi_{\alpha}\normal\otimes\normal \in \tangentR[^2] $, 
    with $ \alpha=1,2 $, holds
    \begin{align}
        \overdotinner{\tangentR[^2]}{\Rb_1, \Rb_2} \label{eq:ttensor_inner_Dmat}
            &= \inner{\tangentR[^2]}{ \Dmat\Rb_1 , \Rb_2 } + \inner{\tangentR[^2]}{ \Rb_1 , \Dmat\Rb_2 }\\
            &= \inner{\tangentS[^2]}{ \dot{\rb}_1 , \rb_2} + \inner{\tangentS[^2]}{ \rb_1 , \dot{\rb}_2} 
                + \dot{\phi}_1 \phi_2 + \phi_1 \dot{\phi}_2 \notag\\
            &\quad + \inner{\tangentS}{ \dot{\etab}_{1 L} ,  \etab_{2 L}} + \inner{\tangentS}{ \etab_{1 L} , \dot{\etab}_{2 L}} 
                   + \inner{\tangentS}{ \dot{\etab}_{1 R} ,  \etab_{2 R}} + \inner{\tangentS}{ \etab_{1 R} , \dot{\etab}_{2 R}}  \formPeriod \notag
    \end{align}
\end{corollary}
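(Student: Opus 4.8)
The plan is to exploit that the full inner product $\inner{\tangentR[^2]}{\Rb_1,\Rb_2}$ is itself a scalar field in $\tangentS[^0]$, so that its rate is governed by the scalar material derivative \eqref{eq:scalar_timeder}, which is an ordinary derivation (a first-order differential operator) and hence obeys the product rule. First I would write the inner product in the Cartesian proxy as $\inner{\tangentR[^2]}{\Rb_1,\Rb_2} = \delta_{AC}\delta_{BD}R_1^{AB}R_2^{CD}$ and apply $\overdot{\phantom{f}}$. Exactly as in the vector case, the Cartesian metric proxy lies in the kernel of \eqref{eq:scalar_timeder}, i.e. $\dot{\delta}_{AB}=0$, so only the factors $R_1^{AB}$ and $R_2^{CD}$ get differentiated; recalling $\Dmat\Rb = \dot{R}^{AB}\eb_A\otimes\eb_B$ from \eqref{eq:tensor_matder_cartesian}, this yields the first equality $\overdotinner{\tangentR[^2]}{\Rb_1,\Rb_2} = \inner{\tangentR[^2]}{\Dmat\Rb_1,\Rb_2} + \inner{\tangentR[^2]}{\Rb_1,\Dmat\Rb_2}$ at once, without ever touching the surface geometry.

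For the second equality I would pass to the orthogonal decomposition \eqref{eq:ttensor_decomposition}. Since the inner product is backwards compatible with the subtensor relations, it splits as $\inner{\tangentR[^2]}{\Rb_1,\Rb_2} = \inner{\tangentS[^2]}{\rb_1,\rb_2} + \inner{\tangentS}{\etab_{1L},\etab_{2L}} + \inner{\tangentS}{\etab_{1R},\etab_{2R}} + \phi_1\phi_2$. Applying the scalar product rule termwise and invoking inner-product compatibility of the \emph{tangential} material derivatives on vector fields (as in corollary \ref{col:vector_innerprodcomp_Dmat}, specialized to $\phi=0$) and on tangential $2$-tensor fields (from \cite{NitschkeVoigt_JoGaP_2022}) produces the claimed right-hand side directly. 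A more self-contained alternative, which I would prefer, is to substitute the explicit representation \eqref{eq:ttensor_Dmat} of $\Dmat\Rb_1$ and $\Dmat\Rb_2$ into the first equality and expand both inner products using orthogonality of the four component types.

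The main obstacle in this second route is pure bookkeeping: \eqref{eq:ttensor_Dmat} carries several cross terms involving the non-tangential gradient $\bb[\Vb_{\mfrak}]$, and one must verify that they all cancel when the two contributions $\inner{\tangentR[^2]}{\Dmat\Rb_1,\Rb_2}$ and $\inner{\tangentR[^2]}{\Rb_1,\Dmat\Rb_2}$ are added. The cancellations rest on two elementary facts: the symmetry of $\inner{\tangentS}{\cdot,\cdot}$, and the adjointness identities $\inner{\tangentS[^2]}{\etab\otimes\bb[\Vb_{\mfrak}],\rb} = \inner{\tangentS}{\etab,\rb\bb[\Vb_{\mfrak}]}$ and $\inner{\tangentS[^2]}{\bb[\Vb_{\mfrak}]\otimes\etab,\rb} = \inner{\tangentS}{\etab,\bb[\Vb_{\mfrak}]\rb}$ relating the tangential tensor–vector contraction to an inner product of vectors. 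With these, the terms $-\etab_{1L}\otimes\bb[\Vb_{\mfrak}]$ and $-\bb[\Vb_{\mfrak}]\otimes\etab_{1R}$ from the tangential block pair off against the $\rb_1\bb[\Vb_{\mfrak}]$ and $\bb[\Vb_{\mfrak}]\rb_1$ contributions coming from the mixed blocks, while the $\phi$-weighted terms cancel by symmetry of the inner product; what survives is exactly the sum of tangential, mixed and normal rates asserted in the corollary.
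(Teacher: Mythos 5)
Your proposal is correct and takes essentially the same route as the paper: the first equality is obtained there exactly as you describe, from the Cartesian proxy representation together with $\dot{\delta}_{AB}=0$ and the Leibniz rule for the scalar material derivative, while the second equality is left as implicit bookkeeping. Your explicit verification of the cross-term cancellations via the adjointness identities $\inner{\tangentS[^2]}{\etab\otimes\bb[\Vb_{\mfrak}],\rb} = \inner{\tangentS}{\etab,\rb\,\bb[\Vb_{\mfrak}]}$ (and its transpose) supplies precisely the detail the paper omits, and both of your suggested routes for the second line are sound.
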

Since for all $ \Pb\in\tangentR $ is 
$ \Dmat(\Rb\Pb) = \overdot{R^{AB}P_B}\eb_A = (\dot{R}^{AB}P_B + \tensor{\dot{R}}{^A_B}\dot{P}^B)\eb_A$ valid, we obtain the following corollary.
\begin{corollary}\label{col:ttensor_usualprodcomp_Dmat}
    The material derivative is compatible with the 2-tensor-vector product, 
    \ie\ for all $ \Rb = \rb + \etab_L\otimes\normal + \normal\otimes\etab_R + \phi\normal\otimes\normal \in \tangentR[^2] $
    and $ \Pb=\pb+\psi\normal\in\tangentS $, 
   $ \rb\in\tangentS[^2] $, $ \etab_L,\etab_R,\pb \in\tangentS $ and $ \phi,\psi\in\tangentS[^0] $ holds
    \begin{align} \label{eq:ttensor_prodwithvector_Dmat}
        \Dmat(\Rb\Pb) 
             &= (\Dmat\Rb)\Pb + \Rb(\Dmat\Pb) \\
             &= \dot{\rb}\pb + \rb\dot{\pb}
                +\dot{\psi}\etab_{L} + \psi\dot{\etab}_L
                -( \phi\psi + \inner{\tangentS}{\etab_{R}, \pb})\bb[\Vb_{\mfrak}] \notag\\
             &\quad +\left( \dot{\phi}\psi + \phi\dot{\psi} 
                        + \inner{\tangentS}{\dot{\etab}_R , \pb} + \inner{\tangentS}{\etab_R , \dot{\pb}} 
                        +\inner{\tangentS}{\rb\pb + \psi\etab_{L}, \bb[\Vb_{\mfrak}]}\right)\normal \formPeriod \notag
    \end{align}
\end{corollary}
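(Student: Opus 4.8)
\emph{Proof plan.} The plan is to establish the Leibniz identity in the constant Cartesian frame first, and then to unpack it into the decomposed form. For the first equality I would use the componentwise description \eqref{eq:tensor_matder_cartesian}, under which $\Dmat$ acts on each scalar Cartesian proxy by the scalar time derivative \eqref{eq:scalar_timeder}. Writing the tensor-vector product as $\Rb\Pb = R^{AB}P_B\,\eb_A$ and using that \eqref{eq:scalar_timeder} is a derivation on $\tangentS[^0]$, together with $\dot{\delta}_{AB}=0$ so that raising and lowering indices commutes with $\dot{(\cdot)}$, I obtain
\begin{align*}
\Dmat(\Rb\Pb) = \overdot{R^{AB}P_B}\,\eb_A = \big(\dot{R}^{AB}P_B + R^{AB}\dot{P}_B\big)\eb_A = (\Dmat\Rb)\Pb + \Rb(\Dmat\Pb).
\end{align*}
This is the conceptual core of the statement and is immediate once the Cartesian representation is available.

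For the second equality I would first contract the decompositions of $\Rb$ and $\Pb$. Using $\rb\normal = \nullb$, $\inner{\tangentR}{\normal,\pb}=0$ and $\inner{\tangentR}{\normal,\normal}=1$, the product collapses to a single vector field,
\begin{align*}
\Rb\Pb = \underbrace{\rb\pb + \psi\etab_L}_{=:\,\qb\,\in\,\tangentS} + \big(\inner{\tangentS}{\etab_R,\pb} + \phi\psi\big)\normal \in \tangentR,
\end{align*}
with tangential part $\qb$ and normal coefficient $\chi := \inner{\tangentS}{\etab_R,\pb} + \phi\psi$. I would then apply the already-proven vector formula \eqref{eq:vector_Dmat} to $\qb + \chi\normal$, which reduces everything to the tangential material derivatives $\dot{\qb}$ and $\dot{\chi}$.

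These are evaluated by the product rules for the tangential material derivative: $\dot{\qb} = \dot{\rb}\pb + \rb\dot{\pb} + \dot{\psi}\etab_L + \psi\dot{\etab}_L$ from the tangential tensor-vector and scalar-vector Leibniz rules, and $\dot{\chi} = \inner{\tangentS}{\dot{\etab}_R,\pb} + \inner{\tangentS}{\etab_R,\dot{\pb}} + \dot{\phi}\psi + \phi\dot{\psi}$, where the inner-product term is differentiated by the tangential inner-product compatibility contained in Corollary \ref{col:vector_innerprodcomp_Dmat}. Substituting $\dot{\qb}$, $\dot{\chi}$ and $\chi$ into \eqref{eq:vector_Dmat} and separating the tangential terms from the coefficient of $\normal$ yields \eqref{eq:ttensor_prodwithvector_Dmat}. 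I do not expect a genuine obstacle: the only care needed is the bookkeeping, in particular noting that every $\bb[\Vb_{\mfrak}]$-contribution arises solely from applying \eqref{eq:vector_Dmat} to $\qb + \chi\normal$ rather than from the scalar product rules, and that the tangential Leibniz rules used here are themselves instances of the same Cartesian argument as in the first paragraph (after projecting out the spurious normal parts of $\Dmat\rb$). As a fully self-contained alternative one may instead substitute the explicit formulas \eqref{eq:ttensor_Dmat} and \eqref{eq:vector_Dmat} directly into $(\Dmat\Rb)\Pb + \Rb(\Dmat\Pb)$ and carry out the two contractions; the $\inner{\tangentS}{\bb[\Vb_{\mfrak}],\pb}\etab_L$ and $\psi\,\rb\bb[\Vb_{\mfrak}]$ terms then cancel pairwise to give the same result.
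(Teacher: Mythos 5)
Your proposal is correct and matches the paper's argument: the paper proves the first equality exactly as you do, via the constant Cartesian frame and the scalar derivation property, $\Dmat(\Rb\Pb)=\overdot{R^{AB}P_B}\,\eb_A=(\dot{R}^{AB}P_B+\tensor{R}{^A_B}\dot{P}^B)\eb_A$, and then simply states the decomposed form without further computation. Your derivation of the second line --- contracting the decompositions to $\Rb\Pb=(\rb\pb+\psi\etab_L)+(\inner{\tangentS}{\etab_R,\pb}+\phi\psi)\normal$ and applying \eqref{eq:vector_Dmat} together with the tangential Leibniz rules --- is a valid and clean way to supply the bookkeeping the paper omits, and the cancellations you anticipate in the alternative direct substitution do occur as claimed.
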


\subsection{Upper-Convected Derivative} \label{sec:Dupp}

In order to obtain the upper-convected derivative, we choose a pullback for the time derivative \eqref{eq:tensor_timeD_general}, 
which adhere to the contravariant material proxy instead of the Cartesian proxy as it is stipulated for the material derivative.
We give the exact definition for the vector and tensor field case in its associated subsections.
In contrast to \cite{NitschkeVoigt_JoGaP_2022}, we use the short naming ``upper-convected''
for ``upper-upper-convected'' or ``fully-upper-convected'', 
since we do not treat any mixed-convected derivative in this paper.

\subsubsection{Vector Fields}

We consider the upper-convected pullback $  \Phi^{*_\sharp}_{t,\tau}:\tangent\R^3\vert_{\surf,t+\tau} \rightarrow \tangent\R^3\vert_{\surf,t}  $
given by 
\begin{align*}
    (\Phi^{*_\sharp}_{t,\tau}\Rb[\para_{\mfrak}]\vert_{t+\tau})(t,y_{\mfrak}^1,y_{\mfrak}^2) 
        &:= r^{i}[\para_{\mfrak}](t+\tau,y_{\mfrak}^1,y_{\mfrak}^2) \partial_i\para_{\mfrak}(t,y_{\mfrak}^1,y_{\mfrak}^2)
            + \phi[\para_{\mfrak}](t+\tau,y_{\mfrak}^1,y_{\mfrak}^2) \normal [\para_{\mfrak}](t,y_{\mfrak}^1,y_{\mfrak}^2)
\end{align*}
for decompositions \eqref{eq:vector_decomposition} of vector fields $ \Rb $.
With this we define the upper-convected derivative by $ \Dupp := \Dt\vert_{\Phi^{*}_{t,\tau}=\Phi^{*_\sharp}_{t,\tau}} $,
\ie\ the time derivative \eqref{eq:tensor_timeD_general} yields
\begin{align} \label{eq:vector_Dupp_material_observer}
    \Dupp\Rb[\para_{\mfrak}]
        &= (\partial_t r^{i}[\para_{\mfrak}]) \partial_i\para_{\mfrak}
            + (\partial_t\phi[\para_{\mfrak}])\normal[\para_{\mfrak}]
\end{align} 
\wrt\ the material observer locally at material events $ (t,y_{\mfrak}^1,y_{\mfrak}^2) $.
Instead of transforming the frame to an arbitrary observer frame, we simply relate this to the material derivative \eqref{eq:vector_Dmat},
where we already know the observer-invariant description.
Term by term, we obtain
\begin{align*}
    (\partial_t r^{i}[\para_{\mfrak}]) \partial_i\para_{\mfrak}
        &= \partial_t ( r^A[\para_{\mfrak}]\eb_A ) - r^{i}[\para_{\mfrak}] \partial_i \Vb_{\mfrak}\\
        &= \Dmat \rb [\para_{\mfrak}] - \Gb[\para_{\mfrak},\Vb_{\mfrak}]\rb[\para_{\mfrak}] 
                - \inner{\tangentS}{\rb[\para_{\mfrak}], \bb[\para_{\mfrak},\Vb_{\mfrak}]} \normal[\para_{\mfrak}]\\
    (\partial_t\phi[\para_{\mfrak}])\normal[\para_{\mfrak}]
        &=  \partial_t( \phi[\para_{\mfrak}]\normal[\para_{\mfrak}] ) - \phi[\para_{\mfrak}]\partial_t \normal[\para_{\mfrak}]
         = \Dmat( \phi\normal)[\para_{\mfrak}] + \phi[\para_{\mfrak}]\bb[\para_{\mfrak}, \Vb_{\mfrak}]
\end{align*}
with \eqref{eq:partialder_V} and \eqref{eq:partialtimeder_normal}.
The first summands are adding up to the material derivative $ \Dmat\Rb[\para_{\mfrak}] $, which has an observer-invariant representation.
The remaining summands are instantaneous and hence are observer-invariant a priori.
Therefore we can express the upper-convected derivative by an arbitrary observer.
This justified to omit the parameterization argument $ \para_{\ofrak} $. 
For instance, we write $ \Dupp\Rb = \Dupp\Rb[\para_{\ofrak}] $ for short. 
Moreover, we could relate $\dot{\rb}$ to the tangential upper-convected derivative by $ \timeLu\rb = \dot{\rb} - \Gb[\Vb_{\mfrak}]\rb\in\tangentS $ 
given in table \ref{tab:vector_tangentialtimederivatives}.
We summarize this in the following corollary under the aid of the tensor field
\begin{align}
    \Gbcal[\Vb_{\mfrak}]  \label{eq:Gbcal_def}
            &:= \Gb[\Vb_{\mfrak}] + \normal \otimes \bb[\Vb_{\mfrak}] - \bb[\Vb_{\mfrak}] \otimes \normal\\ 
            &\phantom{:}= \nabla\vb_{\mfrak} - \vnor\shop + \normal\otimes\left( \nabla\vnor + \shop\vb_{\mfrak} \right) - \left( \nabla\vnor + \shop\vb_{\mfrak} \right)\otimes\normal
                    \in\tangentR[^2] \formPeriod \notag
\end{align} 
\begin{corollary}
    For all $ \Rb = \rb + \phi\normal \in \tangentR $, $ \rb\in\tangentS $ and $ \phi\in\tangentS[^0] $ holds
    \begin{align} \label{eq:vector_Dupp}
        \Dupp\Rb 
            &= \Dmat\Rb - \Gbcal[\Vb_{\mfrak}]\Rb 
             = \timeLu\rb + \dot{\phi}\normal\formPeriod
    \end{align}
\end{corollary}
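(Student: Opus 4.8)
The plan is to start from the material-observer representation \eqref{eq:vector_Dupp_material_observer} and to reorganize the two term-by-term identities already established just above the corollary into a single observer-invariant expression, and then to confirm both claimed equalities by direct comparison. First I would add the two displayed identities for $(\partial_t r^{i}[\para_{\mfrak}])\partial_i\para_{\mfrak}$ and $(\partial_t\phi[\para_{\mfrak}])\normal[\para_{\mfrak}]$, which were obtained via \eqref{eq:partialder_V} and \eqref{eq:partialtimeder_normal}. By linearity of $\Dmat$ the pieces carrying a material derivative assemble into $\Dmat\rb[\para_{\mfrak}] + \Dmat(\phi\normal)[\para_{\mfrak}] = \Dmat\Rb[\para_{\mfrak}]$, which by \eqref{eq:vector_Dmat} already possesses an observer-invariant form. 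The leftover contributions are $-\Gb[\Vb_{\mfrak}]\rb - \inner{\tangentS}{\rb, \bb[\Vb_{\mfrak}]}\normal + \phi\bb[\Vb_{\mfrak}]$, which are purely instantaneous and hence observer-invariant a priori. This justifies dropping the material parameterization argument and writing $\Dupp\Rb = \Dmat\Rb - \Gb[\Vb_{\mfrak}]\rb - \inner{\tangentS}{\rb, \bb[\Vb_{\mfrak}]}\normal + \phi\bb[\Vb_{\mfrak}]$.

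For the first equality I would expand $\Gbcal[\Vb_{\mfrak}]\Rb$ using the definition \eqref{eq:Gbcal_def} together with the decomposition $\Rb = \rb + \phi\normal$. Contracting the right slot against $\Rb$, and using that $\Gb[\Vb_{\mfrak}]$ and $\bb[\Vb_{\mfrak}]$ are tangential, that $\rb$ is orthogonal to $\normal$, and that $\normal$ is a unit vector, the three summands of $\Gbcal[\Vb_{\mfrak}]$ collapse to $\Gb[\Vb_{\mfrak}]\rb$, to $\inner{\tangentS}{\rb, \bb[\Vb_{\mfrak}]}\normal$, and to $-\phi\bb[\Vb_{\mfrak}]$ respectively. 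This is precisely the negative of the instantaneous remainder above, so that $\Dupp\Rb = \Dmat\Rb - \Gbcal[\Vb_{\mfrak}]\Rb$.

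For the second equality I would substitute the explicit material derivative \eqref{eq:vector_Dmat} and the tangential upper-convected derivative $\timeLu\rb = \dot{\rb} - \Gb[\Vb_{\mfrak}]\rb$ from table \ref{tab:vector_tangentialtimederivatives}. The terms $\pm\phi\bb[\Vb_{\mfrak}]$ then cancel in the tangential part and the terms $\pm\inner{\tangentS}{\rb, \bb[\Vb_{\mfrak}]}\normal$ cancel in the normal part, leaving $\dot{\rb} - \Gb[\Vb_{\mfrak}]\rb + \dot{\phi}\normal = \timeLu\rb + \dot{\phi}\normal$.

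I do not expect any genuine obstacle here, since the heavy lifting is already contained in the pre-corollary computation. The only delicate point is the contraction bookkeeping in expanding $\Gbcal[\Vb_{\mfrak}]\Rb$: one must respect that the tensor-vector product contracts the right slot and verify that every mixed tangential-normal cross-term vanishes by orthogonality and tangentiality. Once this is handled, both identities reduce to routine term cancellation.
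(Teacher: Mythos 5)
Your proposal is correct and follows essentially the same route as the paper: it reuses the two term-by-term identities preceding the corollary, assembles the material-derivative pieces into $\Dmat\Rb$, observes that the instantaneous remainder is observer-invariant and equals $-\Gbcal[\Vb_{\mfrak}]\Rb$ by expanding definition \eqref{eq:Gbcal_def} against the decomposition $\Rb=\rb+\phi\normal$, and then obtains the second equality from \eqref{eq:vector_Dmat} and $\timeLu\rb=\dot{\rb}-\Gb[\Vb_{\mfrak}]\rb$. The contraction bookkeeping you flag is handled exactly as you describe, so no gap remains.
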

Note that in contrast to the material or Jaumann derivative, the upper-convected derivative is not compatible with the inner product in general.
Substituting \eqref{eq:vector_Dupp} into \eqref{eq:vector_inner_Dmat} yields
\begin{align*}
    \overdotinner{\tangentR}{\Rb_1, \Rb_2}
          &= \inner{\tangentR}{ \Dupp\Rb_1 , \Rb_2 } + \inner{\tangentR}{ \Rb_1 , \Dupp\Rb_2 }
                + \inner{\tangentR[^{2}]}{ \Gb[\Vb_{\mfrak}] +  \Gb^T[\Vb_{\mfrak}] , \Rb_1 \otimes \Rb_2 }
\end{align*}
for all $ \Rb_1,\Rb_2\in\tangentR $, 
where $ \Gb[\Vb_{\mfrak}] +  \Gb^T[\Vb_{\mfrak}] $ is vanishing if and only if the material carries out a rigid body motion.

For extended vector fields $ \widehat{\Rb}\in\tangent{\surf_h} $, which are sufficing $ \widehat{\Rb}\vert_{\xi=0} = \Rb\in\tangentR $,
we conclude from \eqref{eq:tfl_Dmat} and \eqref{eq:tfl_deformation} that for the upper-convected $ \R^3 $-time derivative
\begin{align*}
   \dot{\widehat{\Rb}} - (\nablahat\widehat{\Vb}_{\mfrak})\widehat{\Rb}
        &\rightarrow \Dupp\Rb
\end{align*}
is valid for $ h \rightarrow 0 $.

\subsubsection{2-Tensor Fields}
We consider the upper-convected pullback $  \Phi^{*_\sharp}_{t,\tau}:\tangent^2\R^3\vert_{\surf,t+\tau} \rightarrow \tangent^2\R^3\vert_{\surf,t}  $
given by 
\begin{align*}
    (\Phi^{*_\sharp}_{t,\tau}\Rb[\para_{\mfrak}]\vert_{t+\tau})(t,y_{\mfrak}^1,y_{\mfrak}^2)
       &:= r^{ij}[\para_{\mfrak}](t+\tau,y_{\mfrak}^1,y_{\mfrak}^2) 
                \partial_i\para_{\mfrak}(t,y_{\mfrak}^1,y_{\mfrak}^2) \otimes \partial_j\para_{\mfrak}(t,y_{\mfrak}^1,y_{\mfrak}^2) \\
       &\quad\quad + \eta_L^i [\para_{\mfrak}](t+\tau,y_{\mfrak}^1,y_{\mfrak}^2) 
                \partial_i\para_{\mfrak}(t,y_{\mfrak}^1,y_{\mfrak}^2) \otimes \normal [\para_{\mfrak}](t,y_{\mfrak}^1,y_{\mfrak}^2)\\
       &\quad\quad + \eta_R^j [\para_{\mfrak}](t+\tau,y_{\mfrak}^1,y_{\mfrak}^2) 
                 \normal [\para_{\mfrak}](t,y_{\mfrak}^1,y_{\mfrak}^2) \otimes \partial_j\para_{\mfrak}(t,y_{\mfrak}^1,y_{\mfrak}^2) \\
       &\quad\quad +\phi[\para_{\mfrak}](t+\tau,y_{\mfrak}^1,y_{\mfrak}^2)
                 \normal [\para_{\mfrak}](t,y_{\mfrak}^1,y_{\mfrak}^2) \otimes \normal [\para_{\mfrak}](t,y_{\mfrak}^1,y_{\mfrak}^2)
\end{align*}
for decompositions \eqref{eq:ttensor_decomposition} of 2-tensor fields $ \Rb $.
With this we define the upper-convected derivative by $ \Dupp := \Dt\vert_{\Phi^{*}_{t,\tau}=\Phi^{*_\sharp}_{t,\tau}} $,
\ie\ the time derivative \eqref{eq:tensor_timeD_general} yields
\begin{align} 
    \Dupp\Rb[\para_{\mfrak}]
        &= (\partial_t r^{ij}[\para_{\mfrak}]) \partial_i\para_{\mfrak} \otimes \partial_j\para_{\mfrak}
            + (\partial_t\phi[\para_{\mfrak}]) \normal [\para_{\mfrak}] \otimes \normal [\para_{\mfrak}] \notag\\
        &\quad\quad + (\partial_t  \eta_L^i [\para_{\mfrak}]) \partial_i\para_{\mfrak} \otimes \normal [\para_{\mfrak}]
                    + (\partial_t  \eta_R^j [\para_{\mfrak}]) \normal [\para_{\mfrak}] \otimes \partial_j\para_{\mfrak}
\label{eq:ttensor_Dupp_material_observer}
\end{align}
\wrt\ the material observer locally at material events $ (t,y_{\mfrak}^1,y_{\mfrak}^2) $.
Similar to the proceeding for vector fields, we just relate this to the material derivative.
Term by term, we obtain
\begin{align*}
    (\partial_t r^{ij}[\para_{\mfrak}]) \partial_i\para_{\mfrak} \otimes \partial_j\para_{\mfrak}
       &= \partial_t\left( r^{AB}[\para_{\mfrak}] \eb_A \otimes\eb_B \right)\\
       &\quad\quad - r^{ij}[\para_{\mfrak}] \left( \tensor{G}{^k_i}[\para_{\mfrak},\Vb_{\mfrak}] \partial_k\para_{\mfrak} 
                                                            + b_i[\para_{\mfrak},\Vb_{\mfrak}]\normal [\para_{\mfrak}] \right) 
                \otimes \partial_j\para_{\mfrak} \\
       &\quad\quad -r^{ij}[\para_{\mfrak}] \partial_i\para_{\mfrak} 
                \otimes\left( \tensor{G}{^k_j}[\para_{\mfrak},\Vb_{\mfrak}] \partial_k\para_{\mfrak} + b_j[\para_{\mfrak},\Vb_{\mfrak}]\normal [\para_{\mfrak}] \right)\\
       &= \Dmat\rb[\para_{\mfrak}] 
                - \Gb[\para_{\mfrak},\Vb_{\mfrak}]\rb[\para_{\mfrak}] - \rb[\para_{\mfrak}]\Gb^{T}[\para_{\mfrak},\Vb_{\mfrak}] \\
       &\quad\quad - \normal [\para_{\mfrak}] \otimes \bb[\para_{\mfrak},\Vb_{\mfrak}] \rb[\para_{\mfrak}]
                    - \rb[\para_{\mfrak}]\bb[\para_{\mfrak},\Vb_{\mfrak}]\otimes\normal [\para_{\mfrak}]\\
   (\partial_t  \eta_L^i [\para_{\mfrak}]) \partial_i\para_{\mfrak} \otimes \normal [\para_{\mfrak}]
        &= \Dupp\etab_{L}[\para_{\mfrak}] \otimes \normal [\para_{\mfrak}]
         = \left( \Dmat\etab_{L}[\para_{\mfrak}] -\Gbcal[\para_{\mfrak},\Vb_{\mfrak}]\etab_{L}[\para_{\mfrak}] \right)\otimes \normal [\para_{\mfrak}]\\
        &= \Dmat(\etab_{L} \otimes \normal)[\para_{\mfrak}] + \etab_{L}[\para_{\mfrak}]\otimes\bb[\para_{\mfrak},\Vb_{\mfrak}]
            - \Gb[\para_{\mfrak},\Vb_{\mfrak}]\etab_{L}[\para_{\mfrak}]\otimes \normal [\para_{\mfrak}]\\
        &\quad\quad - \inner{\tangentS}{ \etab_{L}[\para_{\mfrak}], \bb[\para_{\mfrak},\Vb_{\mfrak}] } \normal [\para_{\mfrak}] \otimes \normal [\para_{\mfrak}]\\
    (\partial_t  \eta_R^j [\para_{\mfrak}]) \normal [\para_{\mfrak}] \otimes \partial_j\para_{\mfrak}
         &= \left( (\partial_t  \eta_R^i [\para_{\mfrak}]) \partial_i\para_{\mfrak} \otimes \normal [\para_{\mfrak}] \right)^{T}\\
         &= \Dmat(\normal\otimes\etab_{R})[\para_{\mfrak}]
                 + \bb[\para_{\mfrak},\Vb_{\mfrak}] \otimes \etab_{R}[\para_{\mfrak}]  -  \normal [\para_{\mfrak}] \otimes \Gb[\para_{\mfrak},\Vb_{\mfrak}]\etab_R[\para_{\mfrak}] \\
         &\quad\quad - \inner{\tangentS}{\etab_R[\para_{\mfrak}],\bb[\para_{\mfrak},\Vb_{\mfrak}]}\normal [\para_{\mfrak}] \otimes \normal [\para_{\mfrak}]\\
    (\partial_t\phi[\para_{\mfrak}]) \normal [\para_{\mfrak}] \otimes \normal [\para_{\mfrak}]
        &= \dot{\phi}[\para_{\mfrak}] \normal [\para_{\mfrak}] \otimes \normal [\para_{\mfrak}] \\
        &= \Dmat( \phi\normal\otimes\normal )[\para_{\mfrak}] 
                + \phi[\para_{\mfrak}]\left( \bb[\para_{\mfrak},\Vb_{\mfrak}] \otimes \normal [\para_{\mfrak}] 
                                                + \normal [\para_{\mfrak}] \otimes \bb[\para_{\mfrak},\Vb_{\mfrak}] \right)
\end{align*}
due to \eqref{eq:partialder_V}, \eqref{eq:vector_Dupp_material_observer}, \eqref{eq:vector_Dupp} and \eqref{eq:ttensor_Dmat}.
The first summands are adding up to the material derivative $ \Dmat\Rb[\para_{\mfrak}] $, which has an observer-invariant representation.
The remaining summands are instantaneous and hence observer-invariant a priori.
Therefore we can express the upper-convected derivative by an arbitrary observer.
This justifies to omit observer arguments in square brackets.
\begin{corollary}
    For all $ \Rb = \rb + \etab_L\otimes\normal + \normal\otimes\etab_R + \phi\normal\otimes\normal \in \tangentR[^2] $, 
   $ \rb\in\tangentS[^2] $, $ \etab_L,\etab_R \in\tangentS $ and $ \phi\in\tangentS[^0] $ holds
    \begin{align} \label{eq:ttensor_Dupp}
        \Dupp\Rb 
             &=  \Dmat\Rb - \Gbcal[\Vb_{\mfrak}]\Rb - \Rb \Gbcal^T[\Vb_{\mfrak}]
             = \timeLuu\rb + \timeLu\etab_{L}\otimes\normal + \normal\otimes\timeLu\etab_R + \dot{\phi}\normal\otimes\normal \formPeriod
    \end{align}
\end{corollary}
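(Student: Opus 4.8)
The plan is to start from the material-observer representation \eqref{eq:ttensor_Dupp_material_observer}, which follows immediately from the upper-convected pullback, since the proxy components $r^{ij}$, $\eta_L^i$, $\eta_R^j$, $\phi$ are frozen at the later time $t+\tau$ while the frame $\{\partial_i\para_{\mfrak}, \normal\}$ is evaluated at the current time $t$. The essential mechanism is that the upper-convected derivative differs from the material derivative precisely by the rate at which the material frame turns, and this frame rate enters once per tensor slot. Concretely, for each of the four blocks of the decomposition \eqref{eq:ttensor_decomposition} I would rewrite the product of $\partial_t(\text{proxy})$ with the frozen frame as the material derivative $\Dmat$ of that block plus correction terms, using $\partial_t\partial_i\para_{\mfrak} = \partial_i\Vb_{\mfrak}$ decomposed via \eqref{eq:partialder_V} into a tangential part carried by $\Gb[\Vb_{\mfrak}]$ and a normal part carried by $\bb[\Vb_{\mfrak}]$, together with $\partial_t\normal = -\bb[\Vb_{\mfrak}]$ from \eqref{eq:partialtimeder_normal}; these are exactly the four term-by-term identities displayed above.

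Next I would add the four contributions. By linearity of $\Dmat$ and the decomposition \eqref{eq:ttensor_decomposition}, the leading pieces sum to $\Dmat\Rb$, whose observer-invariant form is already known from \eqref{eq:ttensor_Dmat}. Every remaining correction term is instantaneous --- it involves only $\Gb[\Vb_{\mfrak}]$, $\bb[\Vb_{\mfrak}]$, the frame, and the proxy fields evaluated at the current time, with no surviving $\partial_t$ --- and is therefore observer-invariant a priori. Consequently the whole right-hand side is observer-invariant, which justifies dropping the material parameterization argument and writing $\Dupp\Rb = \Dupp\Rb[\para_{\ofrak}]$ with respect to any observer.

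The main work, and the step I expect to be the genuine obstacle, is to verify that the collected correction terms assemble exactly into $-\Gbcal[\Vb_{\mfrak}]\Rb - \Rb\Gbcal^T[\Vb_{\mfrak}]$. This is delicate because $\Gbcal$ is not symmetric: by definition \eqref{eq:Gbcal_def} it bundles the tangential $\Gb$-correction together with a $+\normal\otimes\bb[\Vb_{\mfrak}]$ and a $-\bb[\Vb_{\mfrak}]\otimes\normal$ piece, so the sign and placement of the normal contributions in each of the four blocks is where errors would creep in. I would organise the bookkeeping slot by slot, checking that the left tensor slot of $\Rb$ receives a $\Gbcal[\Vb_{\mfrak}]\Rb$ correction and the right slot an $\Rb\Gbcal^T[\Vb_{\mfrak}]$ correction, with the $\etab$- and $\phi$-generated $\inner{\tangentS}{\cdot\,,\bb[\Vb_{\mfrak}]}$ terms filling in the mixed normal entries consistently with \eqref{eq:ttensor_Dmat}.

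Finally, for the second, fully decomposed equality I would regroup the same expression by tangential--normal type rather than by source block. The tangential--tangential part is $\dot{\rb} - \Gb[\Vb_{\mfrak}]\rb - \rb\Gb^T[\Vb_{\mfrak}]$, which is exactly the tangential upper-convected derivative $\timeLuu\rb$ of table \ref{tab:ttensor_tangentialtimederivatives}; each off-diagonal block reduces to a tangential vector upper-convected derivative $\timeLu\etab_L$ or $\timeLu\etab_R$ of table \ref{tab:vector_tangentialtimederivatives} tensored with $\normal$; and the normal--normal part collapses to $\dot{\phi}\,\normal\otimes\normal$ because the $\inner{\tangentS}{\etab_L+\etab_R,\bb[\Vb_{\mfrak}]}$ contribution inside $\Dmat\Rb$ is cancelled by the correction terms produced by the frame rate. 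This matches the stated decomposition and completes the proof.
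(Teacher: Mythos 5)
Your proposal is correct and follows essentially the same route as the paper: expand \eqref{eq:ttensor_Dupp_material_observer} block by block using \eqref{eq:partialder_V} and \eqref{eq:partialtimeder_normal}, collect the leading pieces into $\Dmat\Rb$, observe that the instantaneous remainders are observer-invariant a priori, and identify them as $-\Gbcal[\Vb_{\mfrak}]\Rb - \Rb\Gbcal^T[\Vb_{\mfrak}]$, with the regrouping by tangential--normal type giving the $\timeLuu$/$\timeLu$ form. The only cosmetic difference is that the paper reuses the already-proved vector-field identity \eqref{eq:vector_Dupp} for the $\etab_L\otimes\normal$ and $\normal\otimes\etab_R$ blocks (and transposition for the latter) rather than redoing those frame-rate computations from scratch.
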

Tangential upper-convected derivatives $ \timeLuu $ and $ \timeLu $ on tangential tensor fields are given 
in table \ref{tab:ttensor_tangentialtimederivatives} and \ref{tab:vector_tangentialtimederivatives}.
Note that in contrast to the material or Jaumann derivative, the upper-convected derivative is neither compatible with the inner nor the tensor product in general.
Substituting \eqref{eq:ttensor_Dupp} into \eqref{eq:ttensor_inner_Dmat}, 
\resp\ \eqref{eq:ttensor_Dupp} and \eqref{eq:vector_Dupp} into \eqref{eq:ttensor_prodwithvector_Dmat},  yields
\begin{align*}
    \overdotinner{\tangentR[^2]}{\Rb_1, \Rb_2}
          &= \inner{\tangentR[^2]}{ \Dupp\Rb_1 , \Rb_2 } + \inner{\tangentR[^2]}{ \Rb_1 , \Dupp\Rb_2 }
                + \inner{\tangentR[^{2}]}{ \Gb[\Vb_{\mfrak}] +  \Gb^T[\Vb_{\mfrak}] , \Rb_1 \Rb_2^T + \Rb_1^T \Rb_2 }\\
    \Dupp(\Rb\Pb)
        &= (\Dupp\Rb)\Pb + \Rb(\Dupp\Pb) + \Rb\left( \Gb[\Vb_{\mfrak}] +  \Gb^T[\Vb_{\mfrak}] \right)\Pb
\end{align*}
for all $ \Rb_1,\Rb_2,\Rb\in\tangentR[^2] $ and $ \Pb\in\tangentR $, 
where $ \Gb[\Vb_{\mfrak}] +  \Gb^T[\Vb_{\mfrak}] $ is vanishing if and only if the material carries out a rigid body motion.

For extended 2-tensor fields $ \widehat{\Rb}\in\tangent[^2]{\surf_h} $, which are sufficing $ \widehat{\Rb}\vert_{\xi=0} = \Rb\in\tangentR[^2] $,
we conclude from \eqref{eq:tfl_Dmat} and \eqref{eq:tfl_deformation} that that for the upper-convected $ \R^3 $-time derivative
\begin{align*}
    \dot{\widehat{\Rb}} - (\nablahat\widehat{\Vb}_{\mfrak})\widehat{\Rb} - \widehat{\Rb}(\nablahat\widehat{\Vb}_{\mfrak})^T
        &\rightarrow \Dupp\Rb
\end{align*}
is valid for $ h \rightarrow 0 $.

\subsection{Lower-Convected Derivative} \label{sec:Dlow}

In order to obtain the lower-convected derivative, we choose a pullback for the time derivative \eqref{eq:tensor_timeD_general}, 
which adhere to the covariant material proxy instead of the contravariant material proxy as it is stipulated for the upper-convected derivative.
We give the exact definition for the vector and tensor field case in its associated subsections.
In contrast to \cite{NitschkeVoigt_JoGaP_2022}, we use the short naming ``lower-convected''
for ``lower-lower-convected'' or ``fully-lower-convected'', 
since we do not treat any mixed-convected derivative in this paper.

\subsubsection{Vector Fields}

We consider the lower-convected pullback 
$ \Phi^{*_\flat}_{t,\tau}:\tangent\R^3\vert_{\surf,t+\tau} \rightarrow \tangent\R^3\vert_{\surf,t} $ 
given by 
\begin{align*}
    (\Phi^{*_\flat}_{t,\tau}\Rb[\para_{\mfrak}]\vert_{t+\tau})(t,y_{\mfrak}^1,y_{\mfrak}^2) 
        :=  r_{i}[\para_{\mfrak}](t+\tau,y_{\mfrak}^1,y_{\mfrak}^2) \partial^i\para_{\mfrak}(t,y_{\mfrak}^1,y_{\mfrak}^2)
            + \phi[\para_{\mfrak}](t+\tau,y_{\mfrak}^1,y_{\mfrak}^2) \normal [\para_{\mfrak}](t,y_{\mfrak}^1,y_{\mfrak}^2)
\end{align*}
for decompositions \eqref{eq:vector_decomposition} of vector fields $ \Rb $,
where $ \partial^i\para_{\mfrak} := g_{\mfrak}^{ij}\partial_j\para_{\mfrak} $ at all $ (t,y_{\mfrak}^1,y_{\mfrak}^2) $.
With this we define the upper-convected derivative by $ \Dlow := \Dt\vert_{\Phi^{*}_{t,\tau}=\Phi^{*_\flat}_{t,\tau}} $,
\ie\ the time derivative \eqref{eq:tensor_timeD_general} yields
\begin{align*} 
    \Dlow\Rb[\para_{\mfrak}]
        &= g_{\mfrak}^{ij} (\partial_t r_{j}[\para_{\mfrak}]) \partial_i\para_{\mfrak}
            + (\partial_t\phi[\para_{\mfrak}])\normal[\para_{\mfrak}]
\end{align*} 
\wrt\ the material observer locally at material events $ (t,y_{\mfrak}^1,y_{\mfrak}^2) $.
With \eqref{eq:vector_partialtimeder_covar}, this is relatable to the upper-convected derivative \eqref{eq:vector_Dupp_material_observer} by
\begin{align*}
    \Dlow\Rb[\para_{\mfrak}]
        &= \Dupp\Rb[\para_{\mfrak}] + (\Gb[\para_{\mfrak},\Vb_{\mfrak}]+\Gb^T[\para_{\mfrak},\Vb_{\mfrak}])\rb[\para_{\mfrak}] \formPeriod
\end{align*}
This expression can be represented by an observer-invariant formulation.
Therefore, we omit the observer argument in square brackets. 
Moreover, since all normal parts of $ \Gbcal[\Vb_{\mfrak}] $ \eqref{eq:Gbcal_def} are antisymmetric, 
\ie\ it is $  \Gbcal[\Vb_{\mfrak}] +  \Gbcal^T[\Vb_{\mfrak}] = \Gb[\Vb_{\mfrak}] +  \Gb^T[\Vb_{\mfrak}] $ valid,
we conclude the following corollary.
\begin{corollary}
    For all $ \Rb = \rb + \phi\normal \in \tangentR $, $ \rb\in\tangentS $ and $ \phi\in\tangentS[^0] $ holds
    \begin{align} \label{eq:vector_Dlow}
        \Dlow\Rb 
            &= \Dmat\Rb + \Gbcal^T[\Vb_{\mfrak}]\Rb 
             = \timeLl\rb + \dot{\phi}\normal\formPeriod
    \end{align}
\end{corollary}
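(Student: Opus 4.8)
The plan is to build directly on the two identities already in hand: the material-observer representation $\Dlow\Rb[\para_{\mfrak}] = \Dupp\Rb[\para_{\mfrak}] + (\Gb[\para_{\mfrak},\Vb_{\mfrak}]+\Gb^T[\para_{\mfrak},\Vb_{\mfrak}])\rb[\para_{\mfrak}]$ derived immediately above, and the upper-convected corollary \eqref{eq:vector_Dupp}. Since the correction term $(\Gb[\Vb_{\mfrak}]+\Gb^T[\Vb_{\mfrak}])\rb$ is instantaneous and tangential, it is observer-invariant a priori, so I would drop the material-observer arguments and work with an arbitrary observer throughout.

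For the first equality I would substitute $\Dupp\Rb = \Dmat\Rb - \Gbcal[\Vb_{\mfrak}]\Rb$ from \eqref{eq:vector_Dupp}, giving $\Dlow\Rb = \Dmat\Rb - \Gbcal[\Vb_{\mfrak}]\Rb + (\Gb[\Vb_{\mfrak}]+\Gb^T[\Vb_{\mfrak}])\rb$. It then remains to verify $(\Gb[\Vb_{\mfrak}]+\Gb^T[\Vb_{\mfrak}])\rb = (\Gbcal[\Vb_{\mfrak}]+\Gbcal^T[\Vb_{\mfrak}])\Rb$. Here I would invoke the symmetry observation $\Gbcal[\Vb_{\mfrak}]+\Gbcal^T[\Vb_{\mfrak}] = \Gb[\Vb_{\mfrak}]+\Gb^T[\Vb_{\mfrak}]$ stated just before the corollary, together with the tangentiality $\Gb[\Vb_{\mfrak}]\in\tangentS[^2]$, which forces $\Gb[\Vb_{\mfrak}]\normal = \Gb^T[\Vb_{\mfrak}]\normal = \nullb$; applying $(\Gb[\Vb_{\mfrak}]+\Gb^T[\Vb_{\mfrak}])$ to $\Rb = \rb + \phi\normal$ then annihilates the normal part and collapses it onto the action on $\rb$. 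This delivers $\Dlow\Rb = \Dmat\Rb + \Gbcal^T[\Vb_{\mfrak}]\Rb$.

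For the second, tangential-normal-decomposed equality I would expand $\Gbcal^T[\Vb_{\mfrak}]\Rb$ explicitly from the definition \eqref{eq:Gbcal_def}, namely $\Gbcal^T[\Vb_{\mfrak}] = \Gb^T[\Vb_{\mfrak}] + \bb[\Vb_{\mfrak}]\otimes\normal - \normal\otimes\bb[\Vb_{\mfrak}]$, and contract against $\Rb = \rb + \phi\normal$. Using $\inner{\tangentR}{\normal,\rb}=0$, $\inner{\tangentS}{\bb[\Vb_{\mfrak}],\normal}=0$ and $\Gb^T[\Vb_{\mfrak}]\normal = \nullb$, only three summands survive, yielding $\Gbcal^T[\Vb_{\mfrak}]\Rb = \Gb^T[\Vb_{\mfrak}]\rb + \phi\bb[\Vb_{\mfrak}] - \inner{\tangentS}{\bb[\Vb_{\mfrak}],\rb}\normal$. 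Adding the material derivative \eqref{eq:vector_Dmat}, the $\pm\phi\bb[\Vb_{\mfrak}]$ tangential contributions cancel and the $\pm\inner{\tangentS}{\rb,\bb[\Vb_{\mfrak}]}\normal$ normal contributions cancel, leaving the tangential part $\dot{\rb} + \Gb^T[\Vb_{\mfrak}]\rb = \timeLl\rb$ (from table \ref{tab:vector_tangentialtimederivatives}) and the normal part $\dot{\phi}\normal$.

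The computation is essentially bookkeeping; the only genuine content is the symmetry identity $\Gbcal[\Vb_{\mfrak}]+\Gbcal^T[\Vb_{\mfrak}] = \Gb[\Vb_{\mfrak}]+\Gb^T[\Vb_{\mfrak}]$, which encodes that the off-diagonal normal blocks of $\Gbcal[\Vb_{\mfrak}]$ are antisymmetric. The main thing to get right is the repeated use of tangentiality ($\Gb[\Vb_{\mfrak}]\normal=\nullb$, $\inner{\tangentR}{\normal,\rb}=0$, $\inner{\tangentS}{\bb[\Vb_{\mfrak}],\normal}=0$) to discard the cross-terms; once those are tracked consistently, both equalities drop out.
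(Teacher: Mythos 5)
Your proposal is correct and takes essentially the same route as the paper: both start from the material-observer relation $\Dlow\Rb[\para_{\mfrak}] = \Dupp\Rb[\para_{\mfrak}] + (\Gb[\Vb_{\mfrak}]+\Gb^T[\Vb_{\mfrak}])\rb$, substitute \eqref{eq:vector_Dupp}, and use the antisymmetry of the normal blocks of $\Gbcal[\Vb_{\mfrak}]$ (i.e.\ $\Gbcal[\Vb_{\mfrak}]+\Gbcal^T[\Vb_{\mfrak}]=\Gb[\Vb_{\mfrak}]+\Gb^T[\Vb_{\mfrak}]$) together with tangentiality to arrive at $\Dmat\Rb + \Gbcal^T[\Vb_{\mfrak}]\Rb$. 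Your explicit expansion of $\Gbcal^T[\Vb_{\mfrak}]\Rb$ against the decomposition $\Rb=\rb+\phi\normal$ to confirm the second equality is merely a spelled-out version of what the paper reads off from \eqref{eq:vector_Dupp} and the tangential relation $\timeLl\rb=\timeLu\rb+(\Gb[\Vb_{\mfrak}]+\Gb^T[\Vb_{\mfrak}])\rb$.
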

The tangential lower-convected derivative $ \timeLl $ on tangential vector fields is given 
in table \ref{tab:vector_tangentialtimederivatives}.
Note that in contrast to the material or Jaumann derivative, the lower-convected derivative is not compatible with the inner product in general.
Substituting \eqref{eq:vector_Dlow} into \eqref{eq:vector_inner_Dmat} yields
\begin{align*}
    \overdotinner{\tangentR}{\Rb_1, \Rb_2}
          &= \inner{\tangentR}{ \Dlow\Rb_1 , \Rb_2 } + \inner{\tangentR}{ \Rb_1 , \Dlow\Rb_2 }
                - \inner{\tangentR[^{2}]}{ \Gb[\Vb_{\mfrak}] +  \Gb^T[\Vb_{\mfrak}] , \Rb_1 \otimes \Rb_2 }
\end{align*}
for all $ \Rb_1,\Rb_2\in\tangentR $, 
where $ \Gb[\Vb_{\mfrak}] +  \Gb^T[\Vb_{\mfrak}] $ is vanishing if and only if the material carries out a rigid body motion.

For extended vector fields $ \widehat{\Rb}\in\tangent{\surf_h} $, which are sufficing $ \widehat{\Rb}\vert_{\xi=0} = \Rb\in\tangentR $,
we conclude from \eqref{eq:tfl_Dmat} and \eqref{eq:tfl_deformation} that for the lower-convected $ \R^3 $-time derivative
\begin{align*}
     \dot{\widehat{\Rb}} + (\nablahat\widehat{\Vb}_{\mfrak})^T\widehat{\Rb}
        &\rightarrow \Dlow\Rb
\end{align*}
is valid for $ h \rightarrow 0 $.

\subsubsection{2-Tensor Fields}
We consider the lower-convected pullback 
$ \Phi^{*_\flat}_{t,\tau}:\tangent^2\R^3\vert_{\surf,t+\tau} \rightarrow \tangent^2\R^3\vert_{\surf,t} $ 
given by
\begin{align*}
    (\Phi^{*_\flat}_{t,\tau}\Rb[\para_{\mfrak}]\vert_{t+\tau})(t,y_{\mfrak}^1,y_{\mfrak}^2)
      &=r_{ij}[\para_{\mfrak}](t+\tau,y_{\mfrak}^1,y_{\mfrak}^2) 
               \partial^i\para_{\mfrak}(t,y_{\mfrak}^1,y_{\mfrak}^2) \otimes \partial^j\para_{\mfrak}(t,y_{\mfrak}^1,y_{\mfrak}^2) \\
      &\quad\quad + \eta_{L i} [\para_{\mfrak}](t+\tau,y_{\mfrak}^1,y_{\mfrak}^2) 
               \partial^i\para_{\mfrak}(t,y_{\mfrak}^1,y_{\mfrak}^2) \otimes \normal [\para_{\mfrak}](t,y_{\mfrak}^1,y_{\mfrak}^2)\\
      &\quad\quad + \eta_{R j} [\para_{\mfrak}](t+\tau,y_{\mfrak}^1,y_{\mfrak}^2) 
                \normal [\para_{\mfrak}](t,y_{\mfrak}^1,y_{\mfrak}^2) \otimes \partial^j\para_{\mfrak}(t,y_{\mfrak}^1,y_{\mfrak}^2) \\
      &\quad\quad +\phi[\para_{\mfrak}](t+\tau,y_{\mfrak}^1,y_{\mfrak}^2)
                \normal [\para_{\mfrak}](t,y_{\mfrak}^1,y_{\mfrak}^2) \otimes \normal [\para_{\mfrak}](t,y_{\mfrak}^1,y_{\mfrak}^2)
\end{align*}
for decompositions \eqref{eq:ttensor_decomposition} of 2-tensor fields $ \Rb $.
With this we define the upper-convected derivative by $ \Dlow := \Dt\vert_{\Phi^{*}_{t,\tau}=\Phi^{*_\flat}_{t,\tau}} $,
\ie\ the time derivative \eqref{eq:tensor_timeD_general} yields
\begin{align*}
    \Dlow\Rb[\para_{\mfrak}]
        &= (\partial_t r_{ij}[\para_{\mfrak}]) g_{\mfrak}^{ik} g_{\mfrak}^{jl} \partial_k\para_{\mfrak} \otimes \partial_l\para_{\mfrak}
            + (\partial_t\phi[\para_{\mfrak}]) \normal [\para_{\mfrak}] \otimes \normal [\para_{\mfrak}] \\
        &\quad\quad + (\partial_t  \eta_{L i} [\para_{\mfrak}]) g_{\mfrak}^{ik} \partial_k\para_{\mfrak} \otimes \normal [\para_{\mfrak}]
                    + (\partial_t  \eta_{R j} [\para_{\mfrak}])g_{\mfrak}^{jl} \normal [\para_{\mfrak}] \otimes \partial_l\para_{\mfrak}
\end{align*}
\wrt\ the material observer locally at material events $ (t,y_{\mfrak}^1,y_{\mfrak}^2) $.
With \eqref{eq:ttensor_partialtimeder_covar} and \eqref{eq:vector_partialtimeder_covar}, this is relatable to the upper-convected derivative \eqref{eq:ttensor_Dupp_material_observer} by
\begin{align*}
    \Dlow\Rb[\para_{\mfrak}]
        &= \Dupp\Rb[\para_{\mfrak}] 
                + 2\Sb[\para_{\mfrak},\Vb_{\mfrak}]\rb[\para_{\mfrak}] + 2\rb[\para_{\mfrak}] \Sb[\para_{\mfrak},\Vb_{\mfrak}]\\
        &\quad\quad + 2\Sb[\para_{\mfrak},\Vb_{\mfrak}] \etab_{L}[\para_{\mfrak}] \otimes \normal[\para_{\mfrak}]
                    + 2\normal[\para_{\mfrak}] \otimes \Sb[\para_{\mfrak},\Vb_{\mfrak}] \etab_{R}[\para_{\mfrak}] \formComma
\end{align*}
where $ 2\Sb[\para_{\mfrak},\Vb_{\mfrak}] := \Gb[\para_{\mfrak},\Vb_{\mfrak}] + \Gb^T[\para_{\mfrak},\Vb_{\mfrak}] = \Gbcal[\Vb_{\mfrak}] +  \Gbcal^T[\Vb_{\mfrak}]$.
This expression can be represented by an observer-invariant formulation.
Therefore, we omit the observer argument in square brackets and conclude the following corollary.
\begin{corollary}
    For all $ \Rb = \rb + \etab_L\otimes\normal + \normal\otimes\etab_R + \phi\normal\otimes\normal \in \tangentR[^2] $, 
   $ \rb\in\tangentS[^2] $, $ \etab_L,\etab_R \in\tangentS $ and $ \phi\in\tangentS[^0] $ holds
    \begin{align} \label{eq:ttensor_Dlow}
        \Dlow\Rb 
             &=  \Dmat\Rb + \Gbcal^T[\Vb_{\mfrak}]\Rb + \Rb \Gbcal[\Vb_{\mfrak}]
             = \timeLll\rb + \timeLl\etab_{L}\otimes\normal + \normal\otimes\timeLl\etab_R + \dot{\phi}\normal\otimes\normal \formPeriod
    \end{align}
\end{corollary}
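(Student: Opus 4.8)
The plan is to mirror, almost verbatim, the argument already carried out for the upper-convected $2$-tensor derivative and for the lower-convected vector derivative, turning the material-observer formula for $\Dlow\Rb$ into an observer-invariant statement and then reducing it to the material derivative. First I would take the lower-convected pullback expression for $\Dlow\Rb[\para_{\mfrak}]$ written in contravariant form, i.e.\ each summand carries a time derivative of a \emph{covariant} proxy component dressed with raised material-metric factors $g_{\mfrak}^{ik}g_{\mfrak}^{jl}$ (resp.\ $g_{\mfrak}^{ik}$ on the mixed blocks). The central analytic input is the covariant-to-contravariant conversion of time derivatives, \eqref{eq:ttensor_partialtimeder_covar} and \eqref{eq:vector_partialtimeder_covar}: differentiating the raised metric releases exactly the symmetric deformation rate $2\Sb[\para_{\mfrak},\Vb_{\mfrak}] = \Gb[\para_{\mfrak},\Vb_{\mfrak}] + \Gb^T[\para_{\mfrak},\Vb_{\mfrak}]$. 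Collecting the tangential, tangential-normal and normal-tangential contributions reproduces the material-observer identity displayed immediately above the statement, namely $\Dlow\Rb[\para_{\mfrak}] = \Dupp\Rb[\para_{\mfrak}] + 2\Sb\rb + 2\rb\Sb + 2\Sb\etab_L\otimes\normal + 2\normal\otimes\Sb\etab_R$.

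Second, I would note that every added $\Sb$-summand is an instantaneous tensor field and therefore observer-invariant a priori, while $\Dupp\Rb$ already possesses the observer-invariant form \eqref{eq:ttensor_Dupp}. Hence the entire right-hand side admits an arbitrary-observer representation and the material argument $[\para_{\mfrak}]$ may be suppressed, exactly as in the preceding subsections.

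Third, for the closed index-free identity I would substitute $\Dupp\Rb = \Dmat\Rb - \Gbcal[\Vb_{\mfrak}]\Rb - \Rb\Gbcal^T[\Vb_{\mfrak}]$ from \eqref{eq:ttensor_Dupp} and invoke the relation $\Gbcal[\Vb_{\mfrak}] + \Gbcal^T[\Vb_{\mfrak}] = \Gb[\Vb_{\mfrak}] + \Gb^T[\Vb_{\mfrak}] = 2\Sb$, valid because the normal parts of $\Gbcal[\Vb_{\mfrak}]$ in \eqref{eq:Gbcal_def} are antisymmetric. Since $\Sb\in\tangentS[^2]$ is tangential one has $\Sb\normal = \nullb$, so $2\Sb\Rb = 2\Sb\rb + 2(\Sb\etab_L)\otimes\normal$ and $2\Rb\Sb = 2\rb\Sb + 2\normal\otimes(\Sb\etab_R)$; adding these recovers precisely the four $\Sb$-summands, whence
\[
\Dlow\Rb = \Dupp\Rb + 2\Sb\Rb + 2\Rb\Sb = \Dmat\Rb - \Gbcal\Rb - \Rb\Gbcal^T + (\Gbcal+\Gbcal^T)\Rb + \Rb(\Gbcal+\Gbcal^T) = \Dmat\Rb + \Gbcal^T\Rb + \Rb\Gbcal \formPeriod
\]

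Finally, the tangential-decomposed form follows either by inserting \eqref{eq:ttensor_Dmat} and regrouping through the ``\wrt\ $\dot{\rb}$'' columns of Tables \ref{tab:ttensor_tangentialtimederivatives} and \ref{tab:vector_tangentialtimederivatives} (that is, $\timeLll\rb = \dot{\rb} + \Gb^T[\Vb_{\mfrak}]\rb + \rb\Gb[\Vb_{\mfrak}]$ and $\timeLl\etab = \dot{\etab} + \Gb^T[\Vb_{\mfrak}]\etab$), or more directly by repeating the term-by-term bookkeeping of the upper-convected case with every upper-convected tangential derivative replaced by its lower-convected counterpart. I expect no conceptual difficulty; the only step that genuinely demands care is tracking the left/right placement of $\Sb$ and systematically using $\Sb\normal = \nullb$ so that the mixed tangential-normal blocks collapse to a single term each. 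That purely computational bookkeeping is the main hurdle.
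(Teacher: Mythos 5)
Your proposal is correct and follows essentially the same route as the paper: relate $\Dlow\Rb[\para_{\mfrak}]$ to $\Dupp\Rb[\para_{\mfrak}]$ via the covariant-to-contravariant conversions \eqref{eq:vector_partialtimeder_covar}, \eqref{eq:ttensor_partialtimeder_covar}, invoke observer-invariance of the instantaneous $\Sb$-terms, and reduce to $\Dmat$ using $\Gbcal[\Vb_{\mfrak}]+\Gbcal^T[\Vb_{\mfrak}]=\Gb[\Vb_{\mfrak}]+\Gb^T[\Vb_{\mfrak}]$. The only difference is that you spell out the bookkeeping ($\Sb\normal=\nullb$ collapsing the mixed blocks) that the paper leaves implicit.
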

Tangential lower-convected derivatives $ \timeLll $ and $ \timeLl $ on tangential tensor fields are given 
in table \ref{tab:ttensor_tangentialtimederivatives} and \ref{tab:vector_tangentialtimederivatives}.
Note that in contrast to the material or Jaumann derivative, the upper-convected derivative is neither compatible with the inner nor the tensor product in general.
Substituting \eqref{eq:ttensor_Dlow} into \eqref{eq:ttensor_inner_Dmat}, 
\resp\ \eqref{eq:ttensor_Dlow} and \eqref{eq:vector_Dlow} into \eqref{eq:ttensor_prodwithvector_Dmat},  yields
\begin{align*}
    \overdotinner{\tangentR[^2]}{\Rb_1, \Rb_2}
          &= \inner{\tangentR[^2]}{ \Dlow\Rb_1 , \Rb_2 } + \inner{\tangentR[^2]}{ \Rb_1 , \Dlow\Rb_2 }
                - \inner{\tangentR[^{2}]}{ \Gb[\Vb_{\mfrak}] +  \Gb^T[\Vb_{\mfrak}] , \Rb_1 \Rb_2^T + \Rb_1^T \Rb_2 }\\
 \Dlow(\Rb\Pb)
        &= (\Dlow\Rb)\Pb + \Rb(\Dlow\Pb) - \Rb\left( \Gb[\Vb_{\mfrak}] +  \Gb^T[\Vb_{\mfrak}] \right)\Pb
\end{align*}
for all $ \Rb_1,\Rb_2,\Rb\in\tangentR[^2] $ and $ \Pb\in\tangentR $, 
where $ \Gb[\Vb_{\mfrak}] +  \Gb^T[\Vb_{\mfrak}] $ is vanishing if and only if the material carries out a rigid body motion.

For extended 2-tensor fields $ \widehat{\Rb}\in\tangent[^2]{\surf_h} $, which are sufficing $ \widehat{\Rb}\vert_{\xi=0} = \Rb\in\tangentR[^2] $,
we conclude from \eqref{eq:tfl_Dmat} and \eqref{eq:tfl_deformation} that for the lower-convected $ \R^3 $-time derivative
\begin{align*}
    \dot{\widehat{\Rb}} + (\nablahat\widehat{\Vb}_{\mfrak})^T\widehat{\Rb} + \widehat{\Rb}(\nablahat\widehat{\Vb}_{\mfrak})
        &\rightarrow \Dlow\Rb
\end{align*}
is valid for $ h \rightarrow 0 $.

\subsection{Jaumann Derivative} \label{sec:Djau}

For the sake of simplicity, we define the Jaumann derivative by
\begin{align}\label{eq:tensor_Djau}
\Djau\Rb := \frac{1}{2}\left( \Dupp\Rb + \Dlow\Rb \right)
\end{align}
for all $ \Rb\in\tangentR[^n] $ instead stipulating a pullback  $ \Phi^{*_\jau}_{t,\tau}$, see the discussion in section \ref{sec:discussion_Jau_approach} for more details.
Therefore, the Jaumann derivative is observer-invariant a priori, since the upper- and lower-convected are observer-invariant.
Note that the Jaumann derivative is also often named corotational derivative.

\subsubsection{Vector Fields}

By defining the skew-symmetric tensor field
\begin{align}
    \Abcal[\Vb_{\mfrak}] \label{eq:Abcal_def}
        &:= \frac{\Gbcal[\Vb_{\mfrak}] - \Gbcal^T[\Vb_{\mfrak}]}{2}
          = \frac{\nabla\vb_{\mfrak} - (\nabla\vb_{\mfrak})^T}{2} + \normal\otimes\left( \nabla\vnor + \shop\vb_{\mfrak} \right) - \left( \nabla\vnor + \shop\vb_{\mfrak} \right)\otimes\normal
          \in\tangentR[^2] \formComma
\end{align}
we deduce the following corollary by \eqref{eq:tensor_Djau}, \eqref{eq:vector_Dupp} and \eqref{eq:vector_Dlow}.
\begin{corollary}
    For all $ \Rb = \rb + \phi\normal \in \tangentR $, $ \rb\in\tangentS $ and $ \phi\in\tangentS[^0] $ holds
    \begin{align} \label{eq:vector_Djau}
        \Djau\Rb 
            &= \Dmat\Rb - \Abcal[\Vb_{\mfrak}]\Rb 
             = \timeJ\rb + \dot{\phi}\normal\formPeriod
    \end{align}
\end{corollary}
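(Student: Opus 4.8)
The plan is to derive both equalities by directly averaging the upper- and lower-convected representations already established in \eqref{eq:vector_Dupp} and \eqref{eq:vector_Dlow}, invoking only the definition \eqref{eq:tensor_Djau} of the Jaumann derivative. Since both convected derivatives have already been shown to admit observer-invariant formulations, their arithmetic mean inherits this property automatically, so no further transformation between the material and an arbitrary observer is needed. This makes the argument essentially mechanical.

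First I would prove the first equality using the $\Dmat$-representations. Substituting $\Dupp\Rb = \Dmat\Rb - \Gbcal[\Vb_{\mfrak}]\Rb$ and $\Dlow\Rb = \Dmat\Rb + \Gbcal^T[\Vb_{\mfrak}]\Rb$ into \eqref{eq:tensor_Djau} gives
\begin{align*}
    \Djau\Rb = \frac{1}{2}\left( \Dupp\Rb + \Dlow\Rb \right) = \Dmat\Rb - \frac{1}{2}\left( \Gbcal[\Vb_{\mfrak}] - \Gbcal^T[\Vb_{\mfrak}] \right)\Rb\formComma
\end{align*}
and the bracketed factor is exactly the skew-symmetric field $\Abcal[\Vb_{\mfrak}]$ introduced in \eqref{eq:Abcal_def}, which yields $\Djau\Rb = \Dmat\Rb - \Abcal[\Vb_{\mfrak}]\Rb$ at once.

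For the second equality I would instead substitute the tangential-normal representations $\Dupp\Rb = \timeLu\rb + \dot{\phi}\normal$ and $\Dlow\Rb = \timeLl\rb + \dot{\phi}\normal$. The normal parts coincide and average to $\dot{\phi}\normal$, while the tangential parts average to $\frac{1}{2}(\timeLu\rb + \timeLl\rb)$, which is precisely the tangential Jaumann derivative $\timeJ\rb$ as recorded in table \ref{tab:vector_tangentialtimederivatives}. Hence $\Djau\Rb = \timeJ\rb + \dot{\phi}\normal$.

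Since every ingredient is already available from the preceding corollaries, there is no genuine obstacle. The only point worth a line of comment is the internal consistency of the two displayed forms, namely that $\Dmat\Rb - \Abcal[\Vb_{\mfrak}]\Rb$ and $\timeJ\rb + \dot{\phi}\normal$ agree; this holds automatically because both are the same average $\frac{1}{2}(\Dupp\Rb + \Dlow\Rb)$ computed under the two equivalent representations of the convected derivatives, so no separate verification is required.
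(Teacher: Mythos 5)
Your proposal is correct and follows exactly the paper's own route: the corollary is deduced directly from the definition \eqref{eq:tensor_Djau} by averaging the two representations in \eqref{eq:vector_Dupp} and \eqref{eq:vector_Dlow}, with $\Abcal[\Vb_{\mfrak}]$ recognized as $\frac{1}{2}(\Gbcal[\Vb_{\mfrak}]-\Gbcal^T[\Vb_{\mfrak}])$ and $\timeJ\rb$ as $\frac{1}{2}(\timeLu\rb+\timeLl\rb)$ from table \ref{tab:vector_tangentialtimederivatives}. Nothing is missing.
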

The tangential Jaumann derivative $ \timeJ $ on tangential vector fields is given 
in table \ref{tab:vector_tangentialtimederivatives}.
Since $ \Abcal[\Vb_{\mfrak}] $ is skew-symmetric, substituting \eqref{eq:vector_Djau} into \eqref{eq:vector_inner_Dmat} yields the following corollary.
\begin{corollary} \label{col:vector_innerprodcomp_Djau}
    The Jaumann derivative on vector fields is compatible with the inner product,
    \ie\ for all $ \Rb_1 = \rb_1 + \phi_1\normal,\Rb_2 = \rb_2 + \phi_2\normal\in\tangentR $ holds
    \begin{align*}
        \overdotinner{\tangentR}{\Rb_1, \Rb_2}
            &= \inner{\tangentR}{ \Djau\Rb_1 , \Rb_2 } + \inner{\tangentR}{ \Rb_1 , \Djau\Rb_2 }\\
            &= \inner{\tangentS}{ \timeJ\rb_1 , \rb_2} + \inner{\tangentS}{ \rb_1 , \timeJ\rb_2} 
                + \dot{\phi}_1 \phi_2 + \phi_1 \dot{\phi}_2 \formPeriod \notag
    \end{align*}
\end{corollary}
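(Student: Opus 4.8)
The plan is to reduce everything to the already-established material-derivative inner product compatibility \eqref{eq:vector_inner_Dmat} together with the relation \eqref{eq:vector_Djau} between the Jaumann and material derivatives. First I would rewrite \eqref{eq:vector_Djau} as $\Dmat\Rb = \Djau\Rb + \Abcal[\Vb_{\mfrak}]\Rb$ and substitute this for both $\Dmat\Rb_1$ and $\Dmat\Rb_2$ on the right-hand side of \eqref{eq:vector_inner_Dmat}. By bilinearity of the $\tangentR$ inner product this produces the desired pair of Jaumann-derivative terms plus the two correction terms $\inner{\tangentR}{\Abcal[\Vb_{\mfrak}]\Rb_1, \Rb_2}$ and $\inner{\tangentR}{\Rb_1, \Abcal[\Vb_{\mfrak}]\Rb_2}$.

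The key step is to observe that these two correction terms cancel. Since $\Abcal[\Vb_{\mfrak}]$ is skew-symmetric by its definition \eqref{eq:Abcal_def}, the adjoint identity $\inner{\tangentR}{\Abcal[\Vb_{\mfrak}]\Rb_1, \Rb_2} = \inner{\tangentR}{\Rb_1, \Abcal^T[\Vb_{\mfrak}]\Rb_2} = -\inner{\tangentR}{\Rb_1, \Abcal[\Vb_{\mfrak}]\Rb_2}$ holds, so their sum vanishes. This yields the first claimed equality $\overdotinner{\tangentR}{\Rb_1, \Rb_2} = \inner{\tangentR}{\Djau\Rb_1, \Rb_2} + \inner{\tangentR}{\Rb_1, \Djau\Rb_2}$.

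Finally, to obtain the explicit tangential-normal form I would insert the decomposition $\Djau\Rb_{\alpha} = \timeJ\rb_{\alpha} + \dot{\phi}_{\alpha}\normal$ from \eqref{eq:vector_Djau} and use that the tangential part $\timeJ\rb_{\alpha} \in \tangentS$ is orthogonal to the normal field, together with $\inner{\tangentR}{\normal, \normal} = 1$ and the backward compatibility of the inner product with its subtensor fields. Each mixed Jaumann inner product then splits as $\inner{\tangentR}{\Djau\Rb_1, \Rb_2} = \inner{\tangentS}{\timeJ\rb_1, \rb_2} + \dot{\phi}_1 \phi_2$, and adding its symmetric counterpart gives the stated expression.

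I do not anticipate a genuine obstacle here: the only essential ingredient is the skew-symmetry of $\Abcal[\Vb_{\mfrak}]$, which is immediate from \eqref{eq:Abcal_def}, and the orthogonality of the tangential-normal splitting. The argument is entirely parallel to the material-derivative case, the antisymmetric correction now \emph{cancelling} in place of the symmetric term $\Gb[\Vb_{\mfrak}] + \Gb^T[\Vb_{\mfrak}]$ that \emph{survives} and spoils compatibility for the convected derivatives.
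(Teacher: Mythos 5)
Your proposal is correct and follows essentially the same route as the paper: the paper likewise obtains this corollary by substituting \eqref{eq:vector_Djau} into the material-derivative identity \eqref{eq:vector_inner_Dmat} and invoking the skew-symmetry of $\Abcal[\Vb_{\mfrak}]$ to cancel the correction terms, with the second equality read off from $\Djau\Rb_{\alpha}=\timeJ\rb_{\alpha}+\dot{\phi}_{\alpha}\normal$ and the orthogonality of the tangential--normal splitting.
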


For extended vector fields $ \widehat{\Rb}\in\tangent{\surf_h} $, which are sufficing $ \widehat{\Rb}\vert_{\xi=0} = \Rb\in\tangentR $,
we conclude from \eqref{eq:tfl_Dmat} and \eqref{eq:tfl_deformation} that for the Jaumann $ \R^3 $-time derivative
\begin{align*}
    \dot{\widehat{\Rb}} - \left(\nablahat\widehat{\Vb}_{\mfrak}-(\nablahat\widehat{\Vb}_{\mfrak})^T \right)\widehat{\Rb}
        &\rightarrow \Djau\Rb
\end{align*}
is valid for $ h \rightarrow 0 $.

\subsubsection{2-Tensor Fields}

Using the tensor field $ \Abcal[\Vb_{\mfrak}] $ \eqref{eq:Abcal_def},
we deduce the following corollary by \eqref{eq:tensor_Djau}, \eqref{eq:ttensor_Dupp} and \eqref{eq:ttensor_Dlow}.
\begin{corollary}
    For all $ \Rb = \rb + \etab_L\otimes\normal + \normal\otimes\etab_R + \phi\normal\otimes\normal \in \tangentR[^2] $, 
   $ \rb\in\tangentS[^2] $, $ \etab_L,\etab_R \in\tangentS $ and $ \phi\in\tangentS[^0] $ holds
    \begin{align} \label{eq:ttensor_Djau}
        \Djau\Rb 
             &=  \Dmat\Rb - \Abcal[\Vb_{\mfrak}]\Rb + \Rb \Abcal[\Vb_{\mfrak}]
             = \timeJ\rb + \timeJ\etab_{L}\otimes\normal + \normal\otimes\timeJ\etab_R + \dot{\phi}\normal\otimes\normal \formPeriod
    \end{align}
\end{corollary}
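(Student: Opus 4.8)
The plan is to obtain both equalities directly from the definition of the Jaumann derivative, $\Djau\Rb = \frac{1}{2}(\Dupp\Rb + \Dlow\Rb)$ in \eqref{eq:tensor_Djau}, by inserting the two representations of the upper- and lower-convected derivatives already established in \eqref{eq:ttensor_Dupp} and \eqref{eq:ttensor_Dlow}. No new geometric input is needed; the corollary is a bookkeeping consequence of the two preceding results together with the definition \eqref{eq:Abcal_def} of the skew-symmetric field $\Abcal[\Vb_{\mfrak}]$.

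First I would treat the material-derivative representation. Averaging $\Dupp\Rb = \Dmat\Rb - \Gbcal[\Vb_{\mfrak}]\Rb - \Rb\Gbcal^T[\Vb_{\mfrak}]$ with $\Dlow\Rb = \Dmat\Rb + \Gbcal^T[\Vb_{\mfrak}]\Rb + \Rb\Gbcal[\Vb_{\mfrak}]$ keeps both copies of $\Dmat\Rb$, while the convective contributions combine into $-\frac{1}{2}\left(\Gbcal[\Vb_{\mfrak}] - \Gbcal^T[\Vb_{\mfrak}]\right)\Rb$ acting from the left and $+\frac{1}{2}\Rb\left(\Gbcal[\Vb_{\mfrak}] - \Gbcal^T[\Vb_{\mfrak}]\right)$ acting from the right. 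Recognizing the half-difference as $\Abcal[\Vb_{\mfrak}]$ via \eqref{eq:Abcal_def} yields at once $\Djau\Rb = \Dmat\Rb - \Abcal[\Vb_{\mfrak}]\Rb + \Rb\Abcal[\Vb_{\mfrak}]$.

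For the second equality I would instead average the tangential-decomposition forms. Both \eqref{eq:ttensor_Dupp} and \eqref{eq:ttensor_Dlow} are already split along the orthogonal decomposition \eqref{eq:ttensor_decomposition}, so the summands add componentwise. The purely normal parts both equal $\dot{\phi}\normal\otimes\normal$ and are untouched by averaging. The tangential block gives $\frac{1}{2}(\timeLuu\rb + \timeLll\rb)$, which is exactly $\timeJ\rb$ by the defining entry for tangential $2$-tensor fields in table \ref{tab:ttensor_tangentialtimederivatives}. The two mixed blocks give $\frac{1}{2}(\timeLu\etab_L + \timeLl\etab_L)\otimes\normal$ and $\normal\otimes\frac{1}{2}(\timeLu\etab_R + \timeLl\etab_R)$, and the vector-field entry of table \ref{tab:vector_tangentialtimederivatives} identifies these half-sums with $\timeJ\etab_L$ and $\timeJ\etab_R$. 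Collecting the four blocks produces the claimed tangential representation.

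There is essentially no obstacle here, since the statement is a formal consequence of \eqref{eq:tensor_Djau} and of the two corollaries it cites; the only care required is to confirm that the left and right tensor-field products $\Gbcal[\Vb_{\mfrak}]\Rb$ and $\Rb\Gbcal[\Vb_{\mfrak}]$ distribute over the decomposition \eqref{eq:ttensor_decomposition} in the same way used to fix the tangential definitions of $\timeJ$ in the tables. That compatibility was already settled when deriving \eqref{eq:ttensor_Dupp} and \eqref{eq:ttensor_Dlow}, so both equalities follow without further computation.
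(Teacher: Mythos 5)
Your proposal is correct and follows exactly the paper's route: the corollary is deduced by averaging the representations \eqref{eq:ttensor_Dupp} and \eqref{eq:ttensor_Dlow} via the definition \eqref{eq:tensor_Djau}, identifying $\tfrac{1}{2}(\Gbcal[\Vb_{\mfrak}]-\Gbcal^T[\Vb_{\mfrak}])$ with $\Abcal[\Vb_{\mfrak}]$ from \eqref{eq:Abcal_def}, and reading off the tangential Jaumann derivatives as the half-sums of the convected ones from the tables. No gaps.
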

The tangential Jaumann derivative $ \timeJ $ is given 
in table \ref{tab:ttensor_tangentialtimederivatives} for tangential 2-tensor fields and in table \ref{tab:vector_tangentialtimederivatives} for tangential vector fields.
Since $ \Abcal[\Vb_{\mfrak}] $ is skew-symmetric, substituting \eqref{eq:ttensor_Djau} into \eqref{eq:ttensor_inner_Dmat} yields the following corollary.
\begin{corollary}\label{col:ttensor_innerprodcomp_Djau}
    The Jaumann derivative on 2-tensor fields is compatible with the inner product,
    \ie\ for all $ \Rb_{\alpha} = \rb_{\alpha} + \etab_{\alpha L}\otimes\normal + \normal\otimes\etab_{\alpha R} + \phi_{\alpha}\normal\otimes\normal \in \tangentR[^2] $, 
    with $ \alpha=1,2 $, holds
    \begin{align*}
        \overdotinner{\tangentR[^2]}{\Rb_1, \Rb_2}
            &= \inner{\tangentR[^2]}{ \Djau\Rb_1 , \Rb_2 } + \inner{\tangentR[^2]}{ \Rb_1 , \Djau\Rb_2 }\\
            &= \inner{\tangentS[^2]}{ \timeJ\rb_1 , \rb_2} + \inner{\tangentS[^2]}{ \rb_1 , \timeJ\rb_2} 
                + \dot{\phi}_1 \phi_2 + \phi_1 \dot{\phi}_2 \notag\\
            &\quad + \inner{\tangentS}{ \timeJ\etab_{1 L} ,  \etab_{2 L}} + \inner{\tangentS}{ \etab_{1 L} , \timeJ\etab_{2 L}} 
                   + \inner{\tangentS}{ \timeJ\etab_{1 R} ,  \etab_{2 R}} + \inner{\tangentS}{ \etab_{1 R} , \timeJ\etab_{2 R}}  \formPeriod \notag
    \end{align*}
\end{corollary}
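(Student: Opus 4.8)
The plan is to reduce the statement to the material-derivative compatibility \eqref{eq:ttensor_inner_Dmat} already established in Corollary~\ref{col:ttensor_innerprodcomp_Dmat}, exploiting the skew-symmetry of $\Abcal[\Vb_{\mfrak}]$. First I would rewrite \eqref{eq:ttensor_Djau} as $\Dmat\Rb = \Djau\Rb + \Abcal[\Vb_{\mfrak}]\Rb - \Rb\Abcal[\Vb_{\mfrak}]$ and substitute this for both $\Dmat\Rb_1$ and $\Dmat\Rb_2$ on the right-hand side of \eqref{eq:ttensor_inner_Dmat}. This produces the desired pairing $\inner{\tangentR[^2]}{\Djau\Rb_1,\Rb_2} + \inner{\tangentR[^2]}{\Rb_1,\Djau\Rb_2}$ together with four correction terms containing $\Abcal[\Vb_{\mfrak}]$, so that it only remains to show these four terms cancel.

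Second, I would establish the two adjointness identities governing the corrections. For the $\tangentR[^2]$-inner product one has the standard transpose relations $\inner{\tangentR[^2]}{\Ab\Rb_1,\Rb_2} = \inner{\tangentR[^2]}{\Rb_1,\Ab^T\Rb_2}$ and $\inner{\tangentR[^2]}{\Rb_1\Ab,\Rb_2} = \inner{\tangentR[^2]}{\Rb_1,\Rb_2\Ab^T}$ for any $\Ab\in\tangentR[^2]$, which in the Cartesian proxy are immediate from relabelling a single contracted index (using $\dot{\delta}_{AB}=0$ nowhere here, only the purely algebraic structure). Since \eqref{eq:Abcal_def} exhibits $\Abcal[\Vb_{\mfrak}]$ as an antisymmetric part, it satisfies $\Abcal^T[\Vb_{\mfrak}] = -\Abcal[\Vb_{\mfrak}]$, so both left and right multiplication by $\Abcal[\Vb_{\mfrak}]$ are skew-adjoint. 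The four correction terms then group into exactly the two skew-adjoint pairs,
\[
    \inner{\tangentR[^2]}{\Abcal[\Vb_{\mfrak}]\Rb_1,\Rb_2} + \inner{\tangentR[^2]}{\Rb_1,\Abcal[\Vb_{\mfrak}]\Rb_2} = 0, \qquad \inner{\tangentR[^2]}{\Rb_1\Abcal[\Vb_{\mfrak}],\Rb_2} + \inner{\tangentR[^2]}{\Rb_1,\Rb_2\Abcal[\Vb_{\mfrak}]} = 0,
\]
which leaves the first asserted equality $\overdotinner{\tangentR[^2]}{\Rb_1,\Rb_2} = \inner{\tangentR[^2]}{\Djau\Rb_1,\Rb_2} + \inner{\tangentR[^2]}{\Rb_1,\Djau\Rb_2}$. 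This mirrors the vector case of Corollary~\ref{col:vector_innerprodcomp_Djau}, now with an extra pair arising from the right factor.

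Finally, for the explicit block form I would insert the orthogonal decomposition \eqref{eq:ttensor_decomposition} of $\Rb_1,\Rb_2$ together with the decomposed Jaumann derivative from \eqref{eq:ttensor_Djau}. Because that decomposition is orthogonal, the inner product splits into independent tangential--tangential, two tangential--normal, and normal--normal contributions; the scalar block reproduces $\dot{\phi}_1\phi_2 + \phi_1\dot{\phi}_2$ directly, while the remaining blocks are exactly the tangential $\timeJ$-compatibility statements inherited from \cite{NitschkeVoigt_JoGaP_2022}, giving the terms in $\timeJ\rb$, $\timeJ\etab_{L}$ and $\timeJ\etab_{R}$. Equivalently, one may simply replace each tangential material derivative in the decomposed line of \eqref{eq:ttensor_inner_Dmat} by its Jaumann counterpart, which is legitimate precisely because those tangential derivatives are themselves inner-product compatible. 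The only point demanding care is the index bookkeeping in the right-multiplication pair, whose contraction pattern differs from the left pair; once the two skew-adjointness identities are written cleanly the cancellation is immediate and everything else is substitution.
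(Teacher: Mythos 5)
Your proposal is correct and follows exactly the paper's route: the paper also obtains this corollary by substituting \eqref{eq:ttensor_Djau} into the material-derivative compatibility \eqref{eq:ttensor_inner_Dmat} and cancelling the left- and right-multiplication correction terms via the skew-symmetry of $\Abcal[\Vb_{\mfrak}]$. Your handling of the decomposed line through the orthogonality of \eqref{eq:ttensor_decomposition} and the tangential $\timeJ$-compatibility is likewise what the stated formula implicitly relies on.
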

Substituting \eqref{eq:ttensor_Djau} and \eqref{eq:vector_Djau} into \eqref{eq:ttensor_prodwithvector_Dmat} results in the following corollary.
\begin{corollary}\label{col:ttensor_usualprodcomp_Djau}
    The Jaumann derivative is compatible with the 2-tensor-vector product, 
    \ie\ for all $ \Rb = \rb + \etab_L\otimes\normal + \normal\otimes\etab_R + \phi\normal\otimes\normal \in \tangentR[^2] $
    and $ \Pb=\pb+\psi\normal\in\tangentS $, 
   $ \rb\in\tangentS[^2] $, $ \etab_L,\etab_R,\pb \in\tangentS $ and $ \phi,\psi\in\tangentS[^0] $ holds
    \begin{align} \label{eq:ttensor_prodwithvector_Djau}
        \Djau(\Rb\Pb) 
             &= (\Djau\Rb)\Pb + \Rb(\Djau\Pb) \\
             &= (\timeJ\rb)\pb + \rb(\timeJ\pb) 
                + \psi\timeJ\etab_{L} + \dot{\psi}\etab_{L}
                +\left( \dot{\phi}\psi + \phi\dot{\psi} 
                        +\inner{\tangentS}{ \timeJ\etab_{R}, \pb } + \inner{\tangentS}{ \etab_{R}, \timeJ\pb }\right)\normal\formPeriod \notag
    \end{align}
\end{corollary}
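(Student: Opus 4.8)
The plan is to read \eqref{eq:ttensor_prodwithvector_Djau} as two separate assertions: the first equality is the genuine Leibniz rule for $\Djau$, which I would obtain by a short formal telescoping argument, and the second equality is an explicit tangential--normal decomposition obtained by bookkeeping on top of the already-proven material product rule \eqref{eq:ttensor_prodwithvector_Dmat}.

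For the product rule I would substitute the Jaumann--material relation \eqref{eq:ttensor_Djau} for the $2$-tensor $\Rb$ and \eqref{eq:vector_Djau} for the vector $\Pb$ into the right-hand side and expand:
\begin{align*}
    (\Djau\Rb)\Pb + \Rb(\Djau\Pb)
        &= \left( \Dmat\Rb - \Abcal[\Vb_{\mfrak}]\Rb + \Rb\Abcal[\Vb_{\mfrak}] \right)\Pb
            + \Rb\left( \Dmat\Pb - \Abcal[\Vb_{\mfrak}]\Pb \right)\\
        &= (\Dmat\Rb)\Pb + \Rb(\Dmat\Pb) - \Abcal[\Vb_{\mfrak}](\Rb\Pb)\formComma
\end{align*}
where the two $\Rb\Abcal[\Vb_{\mfrak}]\Pb$ contributions cancel. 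Since $\Rb\Pb\in\tangentR$ is a vector field, \eqref{eq:vector_Djau} gives $\Djau(\Rb\Pb) = \Dmat(\Rb\Pb) - \Abcal[\Vb_{\mfrak}](\Rb\Pb)$, while the material product rule \eqref{eq:ttensor_prodwithvector_Dmat} supplies $\Dmat(\Rb\Pb) = (\Dmat\Rb)\Pb + \Rb(\Dmat\Pb)$. Combining the three identities yields the first equality at once.

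For the second equality I would start from the explicit decomposition of $\Dmat(\Rb\Pb)$ in \eqref{eq:ttensor_prodwithvector_Dmat} and subtract the explicit form of $\Abcal[\Vb_{\mfrak}](\Rb\Pb)$. The latter I would compute by first writing the product vector as $\Rb\Pb = \rb\pb + \psi\etab_L + \left(\inner{\tangentS}{\etab_R,\pb} + \phi\psi\right)\normal$, and then applying $\Abcal[\Vb_{\mfrak}] = \Ab[\Vb_{\mfrak}] + \normal\otimes\bb[\Vb_{\mfrak}] - \bb[\Vb_{\mfrak}]\otimes\normal$, using that the tangential block $\Ab[\Vb_{\mfrak}]$ annihilates $\normal$ and that $\pb,\etab_L$ are tangential. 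Finally I would convert the remaining tangential material derivatives to Jaumann ones via the table relations $\dot{\rb} = \timeJ\rb + \Ab[\Vb_{\mfrak}]\rb - \rb\Ab[\Vb_{\mfrak}]$, $\dot{\pb} = \timeJ\pb + \Ab[\Vb_{\mfrak}]\pb$, and the analogous identities for $\etab_L,\etab_R$.

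The delicate point, which I expect to be the main obstacle, is verifying that all the non-Jaumann remainders cancel; this is mechanical once the cancellation mechanisms are isolated. Two of them are at work. The $\bb[\Vb_{\mfrak}]$-terms cancel because the off-diagonal part $\normal\otimes\bb[\Vb_{\mfrak}] - \bb[\Vb_{\mfrak}]\otimes\normal$ of $\Abcal[\Vb_{\mfrak}]$, applied to $\Rb\Pb$, reproduces exactly the $\bb[\Vb_{\mfrak}]$-contributions already present in \eqref{eq:ttensor_prodwithvector_Dmat} but with opposite sign. The purely tangential $\Ab[\Vb_{\mfrak}]$-terms cancel in turn by two sub-mechanisms: in the normal component through the skew-symmetry identity $\inner{\tangentS}{\Ab[\Vb_{\mfrak}]\etab_R,\pb} = -\inner{\tangentS}{\etab_R,\Ab[\Vb_{\mfrak}]\pb}$, and in the tangential component through the telescoping $\Ab[\Vb_{\mfrak}]\rb\pb - \rb\Ab[\Vb_{\mfrak}]\pb + \rb\Ab[\Vb_{\mfrak}]\pb - \Ab[\Vb_{\mfrak}]\rb\pb = 0$ that appears after expanding $\dot{\rb}\pb + \rb\dot{\pb}$ and subtracting $\Ab[\Vb_{\mfrak}](\rb\pb)$. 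What remains is precisely the claimed right-hand side.
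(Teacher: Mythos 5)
Your proposal is correct and follows essentially the same route as the paper, which simply substitutes \eqref{eq:ttensor_Djau} and \eqref{eq:vector_Djau} into the material product rule \eqref{eq:ttensor_prodwithvector_Dmat}; your telescoping cancellation $-\Abcal[\Vb_{\mfrak}]\Rb\Pb + \Rb\Abcal[\Vb_{\mfrak}]\Pb - \Rb\Abcal[\Vb_{\mfrak}]\Pb = -\Abcal[\Vb_{\mfrak}](\Rb\Pb)$ and the identification of the result with $\Djau(\Rb\Pb)$ via \eqref{eq:vector_Djau} is exactly what makes that substitution work. The cancellation mechanisms you identify for the explicit decomposition (the $\bb[\Vb_{\mfrak}]$-terms against the off-diagonal part of $\Abcal[\Vb_{\mfrak}]$, and the skew-symmetry of $\Ab[\Vb_{\mfrak}]$ in the normal component) are the right ones and reproduce the stated formula.
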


For extended 2-tensor fields $ \widehat{\Rb}\in\tangent[^2]{\surf_h} $, which are sufficing $ \widehat{\Rb}\vert_{\xi=0} = \Rb\in\tangentR[^2] $,
we conclude from \eqref{eq:tfl_Dmat} and \eqref{eq:tfl_deformation} that for the Jaumann $ \R^3 $-time derivative
\begin{align*}
    \dot{\widehat{\Rb}} - \left(\nablahat\widehat{\Vb}_{\mfrak}-(\nablahat\widehat{\Vb}_{\mfrak})^T \right)\widehat{\Rb}
                                      + \widehat{\Rb} \left(\nablahat\widehat{\Vb}_{\mfrak}-(\nablahat\widehat{\Vb}_{\mfrak})^T \right)
        &\rightarrow \Djau\Rb
\end{align*}
is valid for $ h \rightarrow 0 $.

\subsubsection{Discussion of Approach \eqref{eq:tensor_Djau}}\label{sec:discussion_Jau_approach}

    The choice of a pullback in \eqref{eq:tensor_timeD_general} is indeed sufficient to determine the associated time derivative,
    but it is not necessary, \ie\ we can find other pullbacks, which define the same time derivative. 
    The fully Taylor expansion
    \begin{align*}
        (\Phi^*_{t,\tau}\Rb[\para_{\mfrak}]\vert_{t+\tau})(t,y_{\mfrak}^1,y_{\mfrak}^2)
            = \sum_{\alpha=0}^{\infty}  \frac{\tau^{\alpha}}{\alpha !}(\Dt[(\alpha)]\Rb)[\para_{\mfrak}](t,y_{\mfrak}^1,y_{\mfrak}^2)
    \end{align*}
    at $ \tau=0 $ might be the easiest way to see this, where $ \Dt[(\alpha)] $ is a time derivative of $ \alpha $th order.
    The reason we mention this is that we stipulate the identity \eqref{eq:tensor_Djau} to define the Jaumann derivative in relation to the upper- and lower-convected derivative.
    Indeed, averaging the associated pullbacks to $ \Phi^{*_\jau}_{t,\tau} := \frac{1}{2}\left( \Phi^{*_\sharp}_{t,\tau} + \Phi^{*_\flat}_{t,\tau} \right) $ in the same way
    would be sufficient to obtain the Jaumann derivative \eqref{eq:tensor_Djau} also by $   \Djau = \Dt\vert_{\Phi^{*}_{t,\tau}=\Phi^{*_\jau}_{t,\tau}}  $. 
    However, despite this proceeding would be reasoned, it is not very intuitive. 
    More tangible would be a pullback structurally given by 
    $ \Phi^{*_\jau}_{t,\tau}\Rb[\para_{\mfrak}]\vert_{t+\tau} := \Omegab_{t,\tau}^{T}[\para_{\mfrak}]\Rb[\para_{\mfrak}]\vert_{t+\tau}  $ for vector fields,
    \resp\ $ \Phi^{*_\jau}_{t,\tau}\Rb[\para_{\mfrak}]\vert_{t+\tau} 
        := \Omegab_{t,\tau}^{T}[\para_{\mfrak}]\Rb[\para_{\mfrak}]\vert_{t+\tau} + \Rb[\para_{\mfrak}]\vert_{t+\tau}\Omegab_{t,\tau}[\para_{\mfrak}]   $
    for 2-tensor fields,
    where $ \Omegab_{t,\tau}[\para_{\mfrak}]\in\tangentR[^2] $ is the rotation tensor, which rotate every local tangential plane and normal according to the material deformation
    $ \surf\vert_{t} \rightarrow \surf\vert_{t+\tau} $.
    It is only ensured that this pullback equals the former pullback first orderly \wrt\ the Taylor expansion above.
    For the sake of simplicity we decided to approach the Jaumann derivative by \eqref{eq:tensor_Djau} rather than determining  $ \Omegab_{t,\tau}^{-1} $ and its necessary derivatives.

\subsection{Q-Tensor Fields} \label{sec:qtensor}

\subsubsection{General Q-Tensor Fields}

Beside the orthogonal decomposition 
$ \tangentR[^2] = \tangentS[^2] \oplus (\tangentS\otimes\normal) \oplus (\normal\otimes\tangentS) \oplus (\tangentS[^0]\normal\otimes\normal) $ 
realized by \eqref{eq:ttensor_decomposition}, there is another useful orthogonal decomposition for 2-tensor fields, namely
$ \tangentR[^2] = (\tangentS[^0]\Id)\oplus\tangentAR\oplus\tangentQR $,
where $ \Id $ is the Euclidean identity tensor fields, \eg\ implemented by $ \Id = \delta^{AB}\eb_A\otimes\eb_B $ \wrt\ a Cartesian frame,
$ \tangentAR := \{ \Ab\in\tangentR[^2] : \Ab = -\Ab^T \} $ is the space of skew-symmetric tensor fields and 
\begin{align*}
    \tangentQR &:= \left\{ \Qb\in\tangentR[^2] : \Qb = \Qb^T \text{ and } \Tr\Qb=0 \right\}
\end{align*}
is the space of symmetric and trace-free tensor fields, also called Q-tensor fields.
In this section we examine the latter in context of time derivatives in more detail.

To describe $ \tangentQR $, which established a 5-dimensional vector bundle on $ \surf $, by tangential quantities,
we introduce the orthogonal decomposition
\begin{align}\label{eq:qtensor_decomposition}
    \Qb &= \Qbcal\left[\qb,\etab,\beta\right]
        := \qb + \etab\otimes\normal + \normal\otimes\etab + \beta\left( \normal\otimes\normal - \frac{1}{2}\IdS \right) \in \tangentQR\formComma
\end{align} 
where $ \qb\in\tangentQS := \{ \qb\in\tangentS[^2] : \qb=\qb^T \text{ and } \Tr\qb=0 \}$ is a tangential Q-tensor field,
and $ \etab\in\tangentS $ and $ \beta=\tangentS[^0] $ are determined uniquely for all $ \Qb\in\tangentQR $.
$ \IdS $ is the tangential identity tensor fields, \eg\ implemented by $ \IdS = g^{ij}\partial_i\para\otimes\partial_j\para $ \wrt\ the local tangential frame
or $ \IdS = (\delta^{AB} - \normalC^A\otimes\normalC^B )\eb_A\otimes\eb_B$ \wrt\ a Cartesian frame.
This decomposition is consistent to decomposition \eqref{eq:ttensor_decomposition} for $ \Rb = \Qb $, $ \rb=\qb-\frac{\beta}{2}\IdS $, 
$ \etab_{L} = \etab_{R} = \etab $ and $ \phi=\beta $.
Therefore, \eqref{eq:ttensor_Dmat}, \eqref{eq:ttensor_Djau}, $\dot{\qb}\in\tangentQS$, $ \timeJ\qb\in\tangentQS $ and $ \dot{\IdS} = \timeJ\IdS = 0 $ yields
\begin{align}\label{eq:qtensor_Dmat}
    \Dmat\Qb
        &= \Qbcal\left[\dot{\qb} - 2\projQS(\etab\otimes\bb[\Vb_{\mfrak}]), 
                    \dot{\etab} + \qb\bb[\Vb_{\mfrak}] - \frac{3\beta}{2}\bb[\Vb_{\mfrak}],
                    \dot{\beta} + 2\inner{\tangentS}{\etab,\bb[\Vb_{\mfrak}]}\right] \formComma\\
    \Djau\Qb\label{eq:qtensor_Djau}
        &= \Qbcal\left[\timeJ\qb,\timeJ\etab,\dot{\beta}\right] \formComma
\end{align}
where $ \projQS: \tangentS[^2]\rightarrow\tangentQS $ is the orthogonal projection given by $ \projQS\rb=\frac{1}{2}(\rb+\rb^T-(\Tr\rb)\IdS) $ for all 
$ \rb\in\tangentS $.
As a consequence, the space of Q-tensor fields is closed by the material derivative as well as the Jaumann derivative.
Unfortunately, the upper- and lower-convected derivative fail this behavior.
Symmetric tensor fields are closed by them, but trace-free tensor fields are not, since \eqref{eq:ttensor_Dupp} and \eqref{eq:ttensor_Dlow} yield
\begin{align*}
    \Tr\Dupp\Qb = -\Tr\Dlow\Qb
        &= -2\inner{\tangentR[^2]}{ \Gbcal[\Vb_{\mfrak}], \Qb}
         = \beta\Tr\Gb[\Vb_{\mfrak}] - 2\inner{\tangentS[^2]}{ \Gb[\Vb_{\mfrak}], \qb}\formComma
\end{align*} 
which is only vanishing for rigid body motions in general.

Note that for an eigenvector field $ \Pb\in\tangentR $ with eigenvalue field $ \lambda\in\tangentS[^0] $, 
\ie\ it holds $ \Qb\Pb=\lambda\Pb $, yields
\begin{align*}
    (\Dmat\Qb)\Pb
        &= \dot{\lambda}\Pb - (\Qb - \lambda\Id)\Dmat\Pb\formComma \\
    (\Djau\Qb)\Pb
            &= \dot{\lambda}\Pb - (\Qb - \lambda\Id)\Djau\Pb
\end{align*}
by \eqref{eq:ttensor_prodwithvector_Dmat} and \eqref{eq:ttensor_prodwithvector_Djau}.
Since $ \Qb $ is a real-valued symmetric 2-tensor field, the union of all eigenspaces is spanning $ \tangentR $.
As a consequence we obtain the following corollary.
\begin{corollary} \label{cor:qtensor_kernel}
    For all $ \Qb\in\tangentQR $, $ \Pb\in\tangentR $ and $ \lambda\in\tangentS[^0] $, 
    \st\ $ \Qb\Pb=\lambda\Pb $ and $ \dot{\lambda}=0 $ is valid, holds
    \begin{align*}
        \Dmat\Pb &= 0 &&\Longrightarrow & \Dmat\Qb &= 0 \formComma \\
        \Djau\Pb &= 0 &&\Longrightarrow & \Djau\Qb &= 0 \formPeriod
    \end{align*}
\end{corollary}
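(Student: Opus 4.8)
The plan is to read the conclusion directly off the two contraction identities recorded immediately before the statement, namely
$(\Dmat\Qb)\Pb = \dot{\lambda}\Pb - (\Qb - \lambda\Id)\Dmat\Pb$ and its $\Djau$-analogue, and then to upgrade the resulting per-eigenvector information to a statement about the full $2$-tensor by a spanning argument. Since those identities are already established in the text, the corollary is essentially a substitution followed by a linear-algebra observation.

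First I would invoke (or briefly re-derive) the two identities from the product rules \eqref{eq:ttensor_prodwithvector_Dmat} and \eqref{eq:ttensor_prodwithvector_Djau}. Applying $\Dmat$ to both sides of $\Qb\Pb = \lambda\Pb$ and combining $\Dmat(\Qb\Pb) = (\Dmat\Qb)\Pb + \Qb\,\Dmat\Pb$ with the scalar product rule $\Dmat(\lambda\Pb) = \dot{\lambda}\,\Pb + \lambda\,\Dmat\Pb$ gives $(\Dmat\Qb)\Pb + \Qb\,\Dmat\Pb = \dot{\lambda}\,\Pb + \lambda\,\Dmat\Pb$, which rearranges into $(\Dmat\Qb)\Pb = \dot{\lambda}\,\Pb - (\Qb - \lambda\Id)\Dmat\Pb$; the Jaumann case is verbatim with $\Dmat$ replaced by $\Djau$. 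Substituting the hypotheses $\dot{\lambda}=0$ and $\Dmat\Pb = 0$ (\resp\ $\Djau\Pb = 0$) into the right-hand side yields $(\Dmat\Qb)\Pb = \nullb$ (\resp\ $(\Djau\Qb)\Pb = \nullb$) for the eigenpair $(\Pb,\lambda)$.

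The decisive step is then the passage from one eigendirection to the whole fibre. Because $\Qb\in\tangentQR$ is a real-valued symmetric $\R^3$-$2$-tensor field, at each point of $\surf$ its eigenvectors span $\tangentR$; applying the previous step to a complete eigenframe $\{\Pb_1,\Pb_2,\Pb_3\}$, each member of which satisfies the hypotheses, produces $(\Dmat\Qb)\Pb_k = \nullb$ for $k=1,2,3$. Since the map $\Pb\mapsto(\Dmat\Qb)\Pb$ on $\tangentR[^2]\times\tangentR\to\tangentR$ is linear in $\Pb$ and $\{\Pb_k\}$ spans $\tangentR$, the $2$-tensor $\Dmat\Qb$ annihilates every vector, i.e. $[\Dmat\Qb]^{AB}P_B\eb_A = \nullb$ for all $\Pb$, which forces $[\Dmat\Qb]^{AB}=0$ and hence $\Dmat\Qb = \nullb$. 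The identical argument with $\Djau$ gives $\Djau\Qb = \nullb$.

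I expect the only genuine subtlety — rather than a computational obstacle — to be the reading of the hypotheses: the implication cannot hold if $\Dmat\Pb = 0$ is assumed for a single eigendirection only, so it must be understood across a spanning family of eigenpairs, and it is exactly the spanning of $\tangentR$ by the eigenspaces of the symmetric tensor $\Qb$ that licenses the conclusion (this is the role of the remark preceding the statement). A minor technical point worth noting is that a globally smooth eigenframe need not exist at points of eigenvalue degeneracy, but this is harmless because $\Dmat\Qb$ and $\Djau\Qb$ are frame-independent tensor fields and their vanishing is verified pointwise against the pointwise-available eigenbasis.
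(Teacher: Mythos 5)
Your proposal is correct and follows essentially the same route as the paper: the contraction identities $(\Dmat\Qb)\Pb = \dot{\lambda}\Pb - (\Qb - \lambda\Id)\Dmat\Pb$ and its Jaumann analogue are exactly the ones the paper derives from the product rules \eqref{eq:ttensor_prodwithvector_Dmat} and \eqref{eq:ttensor_prodwithvector_Djau} immediately before the corollary, and the paper likewise concludes by noting that the eigenspaces of the real symmetric field $\Qb$ span $\tangentR$. Your added remarks — that the hypothesis must be read across a spanning family of eigenpairs and that the pointwise eigenbasis suffices since no smooth eigenframe is needed — correctly make explicit what the paper leaves implicit.
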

The converses are not true without further ado. 
The main reason is that the eigenvector fields of a Q-tensor field do not have to be differentiable, neither spatially nor temporally.
If the Q-tensor field comprises $ \pm \frac{1}{2} $-defects, eigenvector fields even has to be discontinuous to represent such defects.
Certainly, it is feasible to show the converses by modifying the time-derivatives \wrt\ sign-sensitivity for instance.
However, for the sake simplicity, we leave this issue as an open question in this paper.
Note that corollary \ref{cor:qtensor_kernel} would not hold in the same way for the upper- and lower convected derivative.

\subsubsection{Surface Conforming Q-Tensor Fields} \label{sec:cqtensor}

One  useful subset of the space of Q-tensor fields is the space of surface conforming Q-tensor fields
$ \tangentCQR := \Qbcal[\tangentQS, 0, \tangentS[^0]]$, which is a subtensor field of the Q-tensor field space,  \ie\ $ \tangentCQR < \tangentQR < \tangentR[^2] $, see \cite{Nestler_2020, Nitschkeetal_PRSA_2020, Boucketal_arXiv_2022}.
The associated orthogonal projection $ \projCQR:\tangentQR \rightarrow \tangentCQR $ is given by
\begin{align}\label{eq:cqtensor_projection}
    \projCQR\Qb &:= \Qb - \projS(\Qb\normal)\otimes\normal - \normal\otimes\projS(\Qb\normal)
\end{align}
for all $\Qb\in\tangentQR$.
Since decomposition \eqref{eq:qtensor_decomposition} yields $ \Qb\normal = \etab + \beta\normal $,
we could summarize the situation in the following corollary.
\begin{corollary}\label{col:cqtensor}
    A Q-tensor field $ \Qb = \qb + \etab\otimes\normal + \normal\otimes\etab + \beta\left( \normal\otimes\normal - \frac{1}{2}\IdS \right) \in\tangentQR $ 
    with $ \qb\in\tangentQS$, $ \etab\in\tangentS $ and $ \beta=\tangentS[^0] $, is surface conforming,
    if and only if one of the following equivalent statements is true:
    \begin{enumerate}[(i)]
    \item $ \Qb = \projCQR\Qb \in\tangentCQR $,
    \item $ \projS(\Qb\vb) = \etab = 0 $,
    \item $ \normal $ is an eigenvector field of  $ \Qb $ and $ \beta $ is its associated eigenvalue.
    \end{enumerate}
\end{corollary}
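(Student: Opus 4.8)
The plan is to show that each of the three statements (i), (ii), (iii) is equivalent to the single scalar/vectorial condition $\etab = 0$; since they all reduce to the same condition, they are mutually equivalent. The whole argument rests on the contraction identity $\Qb\normal = \etab + \beta\normal$ that is recorded just above the statement. First I would (re)derive this identity from the decomposition \eqref{eq:qtensor_decomposition}: contracting $\Qb$ on the right by $\normal$, the tangential block contributes $\qb\normal = 0$ because $\qb\in\tangentQS<\tangentS[^2]$; the term $\etab\otimes\normal$ contributes $\etab$ because $\normal$ is a unit field; the term $\normal\otimes\etab$ contributes a multiple of $\normal$ that vanishes because $\etab$ is orthogonal to $\normal$; and $\beta(\normal\otimes\normal - \frac{1}{2}\IdS)$ contributes $\beta\normal$ because $\IdS\normal = 0$. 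Since $\etab\in\tangentS$ and $\beta\normal$ is purely normal, splitting this identity into its tangential and normal parts immediately gives $\projS(\Qb\normal) = \etab$ as well.

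Next I would dispatch (ii) and (i). For (ii): the computation above shows $\projS(\Qb\normal) = \etab$ holds identically, so the condition stated in (ii) is literally $\etab = 0$. For (i): substituting $\projS(\Qb\normal) = \etab$ into the projection formula \eqref{eq:cqtensor_projection} gives $\projCQR\Qb = \Qb - \etab\otimes\normal - \normal\otimes\etab$, so $\Qb = \projCQR\Qb$ holds if and only if $\etab\otimes\normal + \normal\otimes\etab = 0$. Because the two summands lie in the orthogonal subbundles $\tangentS\otimes\normal$ and $\normal\otimes\tangentS$, this forces $\etab = 0$, and the converse is trivial. Equivalently, one may simply invoke the definition $\tangentCQR = \Qbcal[\tangentQS, 0, \tangentS[^0]]$, which is by construction the set of decompositions \eqref{eq:qtensor_decomposition} with vanishing $\etab$; membership is thus again $\etab = 0$.

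For (iii): if $\etab = 0$, the identity collapses to $\Qb\normal = \beta\normal$, so $\normal$ is an eigenvector field with eigenvalue $\beta$. Conversely, if $\normal$ is an eigenvector, say $\Qb\normal = \lambda\normal$, then comparing with $\Qb\normal = \etab + \beta\normal$ and projecting onto the tangential and normal directions forces $\etab = 0$ and $\lambda = \beta$ simultaneously. In particular the eigenvalue is automatically $\beta$ once $\normal$ is an eigenvector, which explains why the two assertions bundled in (iii) are not logically independent.

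The main obstacle is essentially nonexistent at the analytic level: once the contraction identity $\Qb\normal = \etab + \beta\normal$ is in hand, every implication is a purely algebraic, pointwise statement. The only points that genuinely require care are the orthogonal tangential/normal splitting used to conclude $\etab = 0$ from $\etab\otimes\normal + \normal\otimes\etab = 0$ and to separate the eigenvalue data from the eigenvector condition in (iii), and being explicit that the collapse of the contraction relies on $\qb\normal = 0$, $\IdS\normal = 0$, $\normal$ being unit, and $\etab\perp\normal$.
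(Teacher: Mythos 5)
Your proposal is correct and follows essentially the same route as the paper, which justifies the corollary solely via the contraction identity $\Qb\normal = \etab + \beta\normal$ together with the projection formula \eqref{eq:cqtensor_projection} and the definition $\tangentCQR = \Qbcal[\tangentQS,0,\tangentS[^0]]$; you merely spell out the pointwise algebra that the paper leaves implicit. Your reading of the symbol $\vb$ in item (ii) as the normal field $\normal$ is the intended one, since only then does $\projS(\Qb\normal)=\etab$ hold identically.
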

In contrast to the Jaumann derivative \eqref{eq:qtensor_Djau}, 
the space of surface conforming Q-tensor fields is not closed by the material derivative \eqref{eq:qtensor_Dmat}, 
since $ \projS((\Dmat\Qb)\vb) = \qb\bb[\Vb_{\mfrak}] - \frac{3\beta}{2}\bb[\Vb_{\mfrak}] $ is not vanishing generally for $ \Qb\in\tangentCQR $.
To obtain such a closing we use the orthogonal projection \eqref{eq:cqtensor_projection} and call the resulting time derivative
$ \DCQmat:= \projCQR\circ\Dmat\vert_{\tangentCQR} $ surface conforming material derivative.
Taken all together, this yields
\begin{align}
    \DCQmat\Qb \label{eq:cqtensor_Dmat}
        &= \Qbcal[\dot{\qb},0,\dot{\beta}]
        &&= \dot{\qb} + \dot{\beta} \left( \normal\otimes\normal - \frac{1}{2}\IdS \right) 
        &&\in\tangentCQR\\
    \Djau\Qb \label{eq:cqtensor_Djau}
        &= \Qbcal[\timeJ\qb,0,\dot{\beta}]
        &&= \timeJ\qb + \dot{\beta} \left( \normal\otimes\normal - \frac{1}{2}\IdS \right)
        &&\in\tangentCQR
\end{align}
for all surface conforming Q-tensor fields $ \Qb = \qb + \beta (\normal\otimes\normal - \frac{1}{2}\IdS)\in\tangentCQR $,
where $ \qb\in\tangentQS $ and $ \beta\in\tangentS[^0] $.

One simple special case of conforming Q-tensor fields are the tangential Q-tensor fields in $ \tangentQS=\Qbcal[\tangentQS,0,0]<\tangentCQR $.
Here, $ \tangentQS $ is closed by the surface conforming material derivative \eqref{eq:cqtensor_Dmat}, \resp\ Jaumann derivative  \eqref{eq:cqtensor_Djau},
which coincides with the tangential material derivative, \resp\ tangential Jaumann derivative, given in \cite{NitschkeVoigt_JoGaP_2022}.

\subsubsection{Surface Landau-de Gennes models}\label{sec:LdG}

As already demonstrated in \cite{NitschkeVoigt_JoGaP_2022,NitschkeSadikVoigt_A_2022} for tangential tensor-fields the dynamics of these models differ. To sensitize the reader for this difference in applying the models, e.g. in the context of morphogenesis \cite{Maroudas-Sacks_NP_2021,Hoffmann_SA_2022,Morris_2022}, is the main motivation for this research. 

As a simple, but not trivial, example we consider the one-constant Landau-de Gennes free energy
\begin{align}\label{eq:LdG_energy}
    \potenergy[\Qb] 
        &:= \frac{L}{2} \normHsq{\tangentR[^3]}{\nablaS\Qb} + \int_{\surf} a\Tr\Qb^2 + \frac{2b}{3}\Tr\Qb^3 + c\Tr\Qb^4 \dS
\end{align}
for Q-tensor fields $ \Qb\in\tangentQR $, elastic parameter $ L>0 $ and thermotropic coefficients $ a,b,c\in\R $.
Moreover, we assume that the surface $ \surf $ is boundaryless, \ie\ $ \partial\surf=\emptyset $, 
and the motion of the surface is prescribed by the material velocity $ \Vb_{\mfrak}\in\tangentR $.
The associated $ \hil $-gradient flow with dynamics driven by 
$ \Dt\Qb\in\{ \Dmat\Qb, \Djau\Qb \} $ is
\begin{align}\label{eq:qtensor_LdG_flow}
    \Dt \Qb
        &= - \nabla_{\hilspace{\tangentQR}} \potenergy
         = L \DeltaS\Qb - 2 \left( a\Qb + b\left( \Qb^2 - \frac{\Tr(\Qb^2)}{3}\Id \right) + c \Tr(\Qb^2)\Qb \right)
         \in \tangentQR \formComma
\end{align}
where the $ \hil $-gradient $ \nabla_{\hilspace{\tangentQR}} \potenergy $ is given by 
variation of the energy  in arbitrary directions of $ \Rb\in\tangentQR $, \ie\ 
$ \innerH{\tangentQR}{\nabla_{\hilspace{\tangentQR}} \potenergy, \Rb}:= \innerH{\tangentQR}{\frac{\delta\potenergy}{\delta\Qb}, \Rb} $
with aid of lemma \ref{lem:laplace_equals_bochner}, which justifies the surface Laplace operator.
Note that $ \tangentQR $ is closed by  $\Dt=\Dmat$ as well as $ \Dt=\Djau $ in $ \tangentR[^2] $, see \eqref{eq:qtensor_Dmat} and  \eqref{eq:qtensor_Djau},
\ie\ one could safely use $ \Dmat $ and $ \Djau $ given in table \ref{tab:ttensor_timederivatives}.
For a pure tangential motion of the surface, \ie\ $\vnor=0$, and a therefore valid Eulerian observer, \ie\ $\Vb=0$, equation \eqref{eq:qtensor_LdG_flow}
equals the Q-tensor equation of the surface Beris-Edwards model in \cite{Boucketal_arXiv_2022} for the Jaumann derivative.
Note that $\DeltaS\Qb\in\tangentQR$ holds already, therefore we do not need to apply an extra projection into the space of Q-tensor fields.
This can immediately be deduced from lemma \ref{lem:laplace_equals_beltrami}.

The situation changes if we like to consider the Landau-de Gennes energy \eqref{eq:LdG_energy} \wrt\ surface conforming Q-tensor fields $ \tangentCQR $.
The associated gradient flow can be obtained either by variation and weak testing in the right space, 
\ie\ using 
$ \innerH{\tangentCQR}{ \projCQR\Dt\Qb, \Rb} = -\innerH{\tangentCQR}{\frac{\delta\potenergy}{\delta\Qb}, \Rb} $ for all $ \Rb\in\tangentCQR $,
or by Lagrange multiplier technique.
Both approaches lead to the same result as we see below.
Since $ \Qb\normal = \beta\normal $ holds for all $ \Qb\in\tangentCQR $ according to corollary \ref{col:cqtensor}, 
we infer $ \Qb^2\normal = \beta^2\normal $ and from that in turn
$ \projS\left( (\Qb^2 - \frac{\Tr(\Qb^2)}{3}\Id)\normal \right) =  0$.
Or in other words, if $ \Qb  $ is conforming, then so is the Q-tensor part of $ \Qb^2 $.
This leads to the   $ \hil $-gradient flow
\begin{align}\label{eq:qtensor_LdG_confomalflow}
    \Dt[\operatorname{C}_{\surf}] \Qb
        &= - \nabla_{\hilspace{\tangentCQR}} \potenergy
         = L \DeltaCQS\Qb - 2 \left( a\Qb + b\left( \Qb^2 - \frac{\Tr(\Qb^2)}{3}\Id \right) + c \Tr(\Qb^2)\Qb \right)
         \in \tangentCQR 
\end{align}
for $ \Qb\in\tangentCQR $, where $ \Dt[C]\Qb$ is one of the time derivatives in $\{ \DCQmat\Qb, \Djau\Qb \} $, 
$ \DeltaCQS:=\projCQR\circ\DeltaS\vert_{\tangentCQR} $ is the surface conforming Laplace operator
and $ \innerH{\tangentCQR}{\nabla_{\hilspace{\tangentQR}} \potenergy, \Rb}:= \innerH{\tangentCQR}{\frac{\delta\potenergy}{\delta\Qb}, \Rb} $
defines the surface conforming $ \hil $-gradient for all $ \Rb\in\tangentCQR $.
Alternatively, adding  the Lagrange function
$ \mathcal{C}[\Qb, \lambdab] := \innerH{\tangentS}{ \lambdab,  \projS(\Qb\vb) } $, 
where $ \lambdab\in\tangentS $ is the Lagrange multiplier,
to the  Landau-de Gennes energy \eqref{eq:LdG_energy} yields
\begin{align}
    \Dt \Qb \label{eq:tmp01}
        &= L \DeltaS\Qb - 2 \left( a\Qb + b\left( \Qb^2 - \frac{\Tr(\Qb^2)}{3}\Id \right) + c \Tr(\Qb^2)\Qb \right) 
            - \frac{1}{2} \left(\lambdab\otimes\normal + \normal\otimes\lambdab \right) \in \tangentQR \formComma\\
    0  \label{eq:tmp02}
        &= \projS(\Qb\vb) \in\tangentS\formComma
\end{align}
where $ \Dt\Qb\in\{ \Dmat\Qb, \Djau\Qb \} $.
Substituting \eqref{eq:tmp02} into \eqref{eq:tmp01} and applying $ \projCQR $ on both sides of  \eqref{eq:tmp01} also results in 
the surface conforming  $ \hil $-gradient flow \eqref{eq:qtensor_LdG_confomalflow}.
In contrast to the $ \hil $-gradient flow \eqref{eq:qtensor_LdG_flow}, the effort to rephrase the conforming flow \eqref{eq:qtensor_LdG_confomalflow} 
according to decomposition \eqref{eq:qtensor_decomposition} is significantly less.
For tangential time derivatives $ \dt\qb \in \{ \dot{\qb}, \timeJ\qb \} $ given in table \ref{tab:ttensor_tangentialtimederivatives},
we obtain the system of tangential Q-tensor and scalar equations
\begin{align}\label{eq:qtensor_LdG_confomalflow_decomposed}
   \begin{aligned}
        \dt\qb
               &= L\left(  \Delta\qb - \Tr(\shop^2)\qb + 3\beta \projQS(\shop^2) \right)
                  -\left(2a - 2b\beta + 3c\beta^2  + 2c\Tr\qb^2\right)\qb  \in \tangentQS\formComma\\
            \dot{\beta}
               &= L \left(\Delta\beta + \inner{\tangentS[^2]}{ \shop^2, 2\qb - 3\beta\IdS } \right)
                   -\left(2a + b\beta + 3c\beta^2  + 2c\Tr\qb^2\right)\beta
               +\frac{2}{3} b \Tr\qb^2 \in\tangentS[^0]\formComma
   \end{aligned}
\end{align}
which are equivalent to \eqref{eq:qtensor_LdG_confomalflow}, see \ref{sec:qtensor_LdG_confomalflow_decomposed} for a detailed derivation.
We could substitute $ \shop^2 = \meanc\shop - \gaussc\IdS $ for the third fundamental form, where $ \gaussc := \det\{ \shopC^{i}_{j} \} $ is the Gaussian curvature.
For the material derivative this yields the surface Landau-de Gennes model in  \cite{Nestler_2020} up to the uniaxiality constrain used there.
It gives also the same tangential Q-tensor equation in \cite{Nitschkeetal_PRSA_2020} for a constant $\beta$. Implications of the choice of the time derivative $ \dt\qb \in \{ \dot{\qb}, \timeJ\qb \} $ in these models needs to be explored numerically.

\appendix

\section{Identities}\label{sec:identities}

\begin{lemma}
    For a parameterization $ \para $ holds
    \begin{align}
        \partial_i\partial_j\para \label{eq:partialpartialder_X}
            &= \Gamma_{ij}^k\partial_k\para + \shopC_{ij}\normal \formComma\\
        \partial_i \normal \label{eq:partialder_normal}
            &= - \shopC^j_i \partial_j\para \\
     \text{\resp\ } 
        \inner{\tangentR}{\partial_i\partial_j\para,\partial_k\para}
            &= \Gamma_{ijk} \formComma \label{eq:christoffel_def}\\
        \inner{\tangentR}{\partial_i\partial_j\para,\normal}
           &=-\inner{\tangentR}{\partial_i\normal,\partial_j\para}
            = \shopC_{ij} \formPeriod \label{eq:shop_def}\\
         \inner{\tangentR}{\partial_i\normal,\normal} 
            &= 0 \label{eq:partial_normal_is_orth_normal}
    \end{align}
\end{lemma}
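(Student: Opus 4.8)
The plan is to derive all five identities from two elementary facts: that $\{\partial_1\para,\partial_2\para,\normal\}$ is a pointwise frame of $\tangentR\cong\R^3$ along $\surf$, and that the Euclidean inner product $\inner{\tangentR}{\cdot,\cdot}$ has constant Cartesian proxy $\delta_{AB}$, so that $\partial_i$ obeys the ordinary product rule with respect to it. The two orthogonality relations then come essentially for free. Differentiating the unit-length constraint $\inner{\tangentR}{\normal,\normal}=1$ gives $2\inner{\tangentR}{\partial_i\normal,\normal}=0$, which is \eqref{eq:partial_normal_is_orth_normal}. Likewise, differentiating the tangency relation $\inner{\tangentR}{\partial_j\para,\normal}=0$ yields $\inner{\tangentR}{\partial_i\partial_j\para,\normal}=-\inner{\tangentR}{\partial_i\normal,\partial_j\para}$, the middle equality of \eqref{eq:shop_def}; I would take this common value as the definition of $\shopC_{ij}$, whose symmetry in $i,j$ is immediate from equality of mixed partials $\partial_i\partial_j\para=\partial_j\partial_i\para$.

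Next I would establish the covariant Gauss identity \eqref{eq:christoffel_def} by the standard index-permutation argument applied to $g_{ij}=\inner{\tangentR}{\partial_i\para,\partial_j\para}$. The product rule gives $\partial_i g_{jk}=\inner{\tangentR}{\partial_i\partial_j\para,\partial_k\para}+\inner{\tangentR}{\partial_j\para,\partial_i\partial_k\para}$ together with its two cyclic permutations, and forming the combination $\partial_i g_{jk}+\partial_j g_{ik}-\partial_k g_{ij}$ cancels terms via equality of mixed partials, leaving $2\inner{\tangentR}{\partial_i\partial_j\para,\partial_k\para}$. Hence this inner product equals $\tfrac12(\partial_i g_{jk}+\partial_j g_{ik}-\partial_k g_{ij})=\Gamma_{ijk}$, matching the definition of the Christoffel symbols of the first kind from the notation section.

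With these projections in hand, the two decomposition formulas follow by expanding in the frame. Writing $\partial_i\partial_j\para=a_{ij}^k\partial_k\para+b_{ij}\normal$ and pairing with $\partial_k\para$ and with $\normal$ identifies $b_{ij}=\shopC_{ij}$ directly and forces $a_{ij}^l g_{lk}=\Gamma_{ijk}$, hence $a_{ij}^k=\Gamma_{ij}^k$ after raising an index with $g^{kl}$; this is \eqref{eq:partialpartialder_X}. For the Weingarten relation \eqref{eq:partialder_normal} I would expand $\partial_i\normal$ in the same frame: by \eqref{eq:partial_normal_is_orth_normal} it has no normal component, so $\partial_i\normal=c_i^k\partial_k\para$, and pairing with $\partial_j\para$ together with \eqref{eq:shop_def} gives $c_i^k g_{kj}=-\shopC_{ij}$, i.e. $c_i^k=-\shopC^k_i$. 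No step is a genuine obstacle, since these are the classical Gauss--Weingarten equations; the one point requiring care is reconciling the geometric $\shopC_{ij}$ introduced here with the shape operator $\shop=-\nablaS\normal$ declared in Table~\ref{tab:quantities_overview}. Because $\nablaS$ acts as the ordinary coordinate derivative on the constant Cartesian proxy components of $\normal$, the covariant derivative of $\normal$ along $\partial_i\para$ is precisely $\partial_i\normal$, so the Weingarten equation is exactly the component form of $\shop=-\nablaS\normal$ and the two notions of $\shopC_{ij}$ coincide throughout the paper.
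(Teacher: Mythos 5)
Your proof is correct and follows essentially the same route as the paper, which simply observes that \eqref{eq:christoffel_def} is the standard alternative definition of the Christoffel symbols for the induced metric, that \eqref{eq:shop_def} and \eqref{eq:partial_normal_is_orth_normal} are the usual second-fundamental-form and unit-normal identities, and that \eqref{eq:partialpartialder_X} and \eqref{eq:partialder_normal} summarize these in the frame $\{\partial_1\para,\partial_2\para,\normal\}$. You merely spell out the index-permutation and frame-expansion computations that the paper leaves implicit, plus a harmless consistency check against $\shop=-\nablaS\normal$.
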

\begin{proof}
    Equation \eqref{eq:christoffel_def} is an alternative definition of the Christoffel symbols if the metric tensor is given by 
    $ g_{ij}= \inner{\tangentR}{\partial_i\para,\partial_j\para} $.
    Equations \eqref{eq:shop_def} are equivalent definitions of the second fundamental form, since \eqref{eq:partial_normal_is_orth_normal} is true,
    which in turn holds by $ \left\| \normal \right\|_{\tangentR}=1 $.
    Identities \eqref{eq:partialpartialder_X}  and \eqref{eq:partialder_normal} summarize \eqref{eq:christoffel_def}, \eqref{eq:shop_def} and \eqref{eq:partial_normal_is_orth_normal}.
\end{proof}

\begin{lemma}
    For the metric tensor holds
    \begin{align}
        \partial_l g_{ij}
            &= \Gamma_{lij} + \Gamma_{lji} \label{eq:metric_cov_partialder} \\
        \partial_l g^{ij}
            &= -\left( g^{kj}\Gamma_{lk}^i + g^{ki}\Gamma_{lk}^j \right) \formPeriod \label{eq:metric_contra_partialder}
    \end{align}
\end{lemma}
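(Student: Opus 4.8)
The plan is to establish \eqref{eq:metric_cov_partialder} first and then feed it into the derivation of \eqref{eq:metric_contra_partialder}. For the covariant identity I would differentiate the geometric definition of the metric proxy $g_{ij} = \inner{\tangentR}{\partial_i\para, \partial_j\para}$ with respect to $y^l$; since the Cartesian inner product has constant coefficients, the product rule gives $\partial_l g_{ij} = \inner{\tangentR}{\partial_l\partial_i\para, \partial_j\para} + \inner{\tangentR}{\partial_i\para, \partial_l\partial_j\para}$. Invoking \eqref{eq:christoffel_def} on each summand, together with the symmetry of the inner product, identifies the first term with $\Gamma_{lij}$ and the second with $\Gamma_{lji}$, which is the claim. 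A fully self-contained alternative avoids the geometry entirely: inserting the definition $\Gamma_{ijk} = \frac{1}{2}(\partial_i g_{jk} + \partial_j g_{ik} - \partial_k g_{ij})$ into $\Gamma_{lij} + \Gamma_{lji}$ and using $g_{ij} = g_{ji}$ makes the four mixed derivative terms cancel in pairs, leaving exactly $\partial_l g_{ij}$.

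For the contravariant identity \eqref{eq:metric_contra_partialder} the key step is to differentiate the duality relation $g^{ik} g_{kj} = \delta^i_j$, whose right-hand side is constant. The product rule then yields $(\partial_l g^{ik}) g_{kj} = -g^{ik}(\partial_l g_{kj})$, and contracting both sides with $g^{jm}$ isolates $\partial_l g^{im}$ on the left. On the right I substitute \eqref{eq:metric_cov_partialder} to replace $\partial_l g_{kj}$ by $\Gamma_{lkj} + \Gamma_{ljk}$, and then use the standard relation between the two kinds of Christoffel symbols, $\Gamma_{ij}^k = g^{kl}\Gamma_{ijl}$, to raise the appropriate index in each of the two terms. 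This produces $\partial_l g^{im} = -g^{ik}\Gamma_{lk}^m - g^{jm}\Gamma_{lj}^i$; renaming the free index $m \to j$ and the dummy indices, together with the symmetry $g^{ik} = g^{ki}$, casts this into the stated symmetric form $-(g^{kj}\Gamma_{lk}^i + g^{ki}\Gamma_{lk}^j)$.

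I expect the only real obstacle to be the index bookkeeping in the second part: one must track carefully which index each metric factor raises and match the dummy and free indices so that the two symmetrized terms land in precisely the claimed positions. There is no conceptual difficulty, since both identities follow directly from the definitions -- the geometric route through \eqref{eq:christoffel_def} is optional, as the entire argument can be carried out algebraically from the definition of the Christoffel symbols of the first kind and the symmetry of the metric.
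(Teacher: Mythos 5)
Your proposal is correct and coincides with the paper's own argument: the paper justifies \eqref{eq:metric_cov_partialder} either by metric compatibility or, exactly as you do, by \eqref{eq:christoffel_def}, and obtains \eqref{eq:metric_contra_partialder} by applying the product rule to an identity relating $g^{ij}$ and $g_{ij}$ (the paper differentiates $g^{ij}=g^{ik}g^{jm}g_{km}$, you differentiate $g^{ik}g_{kj}=\delta^i_j$, which is the same computation). Your index bookkeeping in the second part checks out, so no gaps.
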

\begin{proof}
    Both identities are consequences of the metric compatibility, \ie\ 
    $ 0 = g_{ij|l}= \partial_l g_{ij} - \Gamma_{li}^k g_{kj} - \Gamma_{lj}^k g_{ik}  $ and
    $ 0 = \tensor{g}{^{ij}_{|l}} = \partial_l g^{ij} + \Gamma_{lk}^i g^{kj} + \Gamma_{lk}^j g^{ik} $.
    Alternatively, \eqref{eq:metric_cov_partialder} is a consequence of \eqref{eq:christoffel_def}, 
    and \eqref{eq:metric_contra_partialder} of \eqref{eq:metric_cov_partialder} by evaluating 
    $ \partial_l g^{ij} = \partial_l( g^{ik}g^{jm} g_{km} ) $ with the aid of the product rule.
\end{proof}

\begin{lemma}
    For a time-depending parameterization $ \para $ with velocity $ \Vb = \vb + \vnor\normal =\partial_t\para\in\tangentR $ holds
    \begin{align}
        \partial_i\Vb \label{eq:partialder_V}
            &= \tensor{[\nablaS\Vb]}{^A_i}\eb_A
             = \tensor{G}{^j_i}[\Vb]\partial_j\para + b_i[\Vb]\normal \formComma
    \end{align}
    where
    \begin{align}
        \Gb[\Vb] \label{eq:G_def}
            &:= \projS[^2](\nablaS\Vb) 
             = \nabla\vb -\vnor\shop \in\tangentS[^2]\\
        \bb[\Vb] &:= \normal \nablaS\Vb \label{eq:b_def}
                = \nabla\vnor + \shop\vb\in\tangentS \formPeriod
    \end{align}
\end{lemma}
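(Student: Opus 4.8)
The plan is to carry out the computation in the global Cartesian proxy, where $\nablaS$ acts componentwise as the ordinary gradient, and only afterwards translate the answer into the extended surface frame $\{\partial_1\para,\partial_2\para,\normal\}$. First I would write $\Vb = V^A\eb_A$ with constant basis vectors $\eb_A$, so that $\partial_i\Vb = (\partial_i V^A)\eb_A$. Since the Cartesian proxies $V^A\in\tangentS[^0]$ are scalar fields, the surface derivative reduces to the partial derivative on them, i.e.\ $\tensor{[\nablaS\Vb]}{^A_i} = \nabla_i V^A = \partial_i V^A$. This immediately gives the first equality $\partial_i\Vb = \tensor{[\nablaS\Vb]}{^A_i}\eb_A$ in \eqref{eq:partialder_V}.

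For the tangential--normal representation I would instead differentiate the decomposition $\Vb = \vb + \vnor\normal$, with $\vb = v^j\partial_j\para$, by the product rule. Differentiating the tangential part yields $\partial_i(v^j\partial_j\para) = (\partial_i v^j)\partial_j\para + v^j\partial_i\partial_j\para$, into which I would substitute \eqref{eq:partialpartialder_X} to replace $\partial_i\partial_j\para$ by $\Gamma_{ij}^k\partial_k\para + \shopC_{ij}\normal$; the Christoffel term then merges with $\partial_i v^k$ to form the covariant derivative $\tensor{v}{^k_{|i}}$, leaving a residual normal contribution $\shopC_{ij}v^j\normal$. Differentiating the normal part gives $\partial_i(\vnor\normal) = (\partial_i\vnor)\normal + \vnor\partial_i\normal$, and substituting \eqref{eq:partialder_normal} turns $\partial_i\normal$ into $-\shopC^j_i\partial_j\para$.

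Collecting the tangential and normal pieces then produces
\begin{align*}
    \partial_i\Vb &= \left( \tensor{v}{^k_{|i}} - \vnor\shopC^k_i \right)\partial_k\para + \left( \shopC_{ij}v^j + \partial_i\vnor \right)\normal \formComma
\end{align*}
from which I would read off the coefficients directly: the tangential coefficient is $[\nabla\vb - \vnor\shop]^k_i = \tensor{G}{^k_i}[\Vb]$ by \eqref{eq:G_def}, and the normal coefficient is $[\nabla\vnor + \shop\vb]_i = b_i[\Vb]$ by \eqref{eq:b_def}, using that $\partial_i\vnor = [\nabla\vnor]_i$ for the scalar $\vnor$. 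To close the argument with the projection-based definitions, I would note that $\nablaS\Vb = \eb_A\otimes\nabla V^A$ has its second slot tangential already, so projecting its first (Cartesian) slot tangentially via $\projS[^2]$ reproduces $\Gb[\Vb]$, while contracting that slot against $\normal$ reproduces $\bb[\Vb]$, matching $\Gb[\Vb] = \projS[^2](\nablaS\Vb)$ and $\bb[\Vb] = \normal\,\nablaS\Vb$.

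The computation is essentially routine, so there is no deep obstacle; the only delicate point is index bookkeeping. In particular, one must recognize that the $\Gamma_{ij}^k$ term arising from \eqref{eq:partialpartialder_X} is exactly what upgrades the bare partial derivative $\partial_i v^k$ to the covariant derivative $\tensor{v}{^k_{|i}}$, and one must keep the mixed components $\shopC^k_i$ and the lowered $\shopC_{ij}$ in their correct slots so that the tangential and normal coefficients assemble consistently with the musical conventions. The step requiring the most care to phrase precisely is the final matching with the projection definitions of $\Gb$ and $\bb$, since it hinges on how $\projS[^2]$ and the normal contraction act on the mixed Cartesian--tangential object $\nablaS\Vb$.
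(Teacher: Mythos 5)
Your computation is correct. Note, however, that the paper itself gives no proof of this lemma --- it simply cites an external reference --- so there is nothing internal to compare against; your argument is the natural self-contained verification. The two halves of your proof fit together exactly as they should: the first equality follows because the Cartesian proxies $V^A$ are scalar fields (so $\nabla_i V^A=\partial_i V^A$) and the frame $\{\eb_A\}$ is constant, while the second follows from the product rule applied to $\Vb=v^j\partial_j\para+\vnor\normal$ together with the Gauss formula \eqref{eq:partialpartialder_X} and the Weingarten formula \eqref{eq:partialder_normal}, after which the Christoffel term correctly upgrades $\partial_i v^k$ to $\tensor{v}{^k_{|i}}$ and the coefficients assemble into $\tensor{G}{^k_i}[\Vb]=\tensor{v}{^k_{|i}}-\vnor\shopC^k_i$ and $b_i[\Vb]=\partial_i\vnor+\shopC_{ij}v^j$, matching \eqref{eq:G_def} and \eqref{eq:b_def}. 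Your closing remark about the projection-based definitions is also right: since the second slot of $\nablaS\Vb=\eb_A\otimes\nabla V^A$ is tangential by construction, $\projS[^2]$ acts only on the Cartesian slot and extracts $\Gb[\Vb]$, while contracting that slot with $\normal$ extracts $\bb[\Vb]$. No gaps.
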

\begin{proof}
    See \cite{NitschkeSadikVoigt_A_2022}.
\end{proof}

\begin{lemma}
    For time-dependent covariante proxy components $ r_{ij}\in\tangentS[^0] $, $ \eta_{i}\in\tangentS[^0] $, 
    contravariant proxy components $ r^{ij}\in\tangentS[^0]$ and $ \eta^{i}\in\tangentS[^0]$
    of a tangential 2-tensor fields $ \rb\in\tangentS[^2] $ and vector field $ \etab\in\tangentS $ holds
    \begin{align}
        \partial_t \eta_{i} \label{eq:vector_partialtimeder_covar}
            &= g_{ik}\partial_t \eta^k + \left[ (\Gb[\Vb]+\Gb^T[\Vb])\etab  \right]_{i} \formComma\\
        \partial_t r_{ij} \label{eq:ttensor_partialtimeder_covar}
            &= g_{ik}g_{jl}\partial_t r^{kl} + \left[ \rb(\Gb[\Vb]+\Gb^T[\Vb]) + (\Gb[\Vb]+\Gb^T[\Vb])\rb  \right]_{ij} \formComma 
    \end{align}
    with $ \Gb[\Vb]\in\tangentS[^2] $ given in \eqref{eq:G_def}.
\end{lemma}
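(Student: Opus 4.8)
The plan is to reduce both identities to a single computation, namely the time derivative of the metric proxy $g_{ij}$, since each identity is merely the Leibniz rule applied to an index-lowering relation ($\eta_i = g_{ik}\eta^k$ and $r_{ij} = g_{ik}g_{jl}r^{kl}$). First I would compute $\partial_t g_{ij}$. Starting from $g_{ij} = \inner{\tangentR}{\partial_i\para, \partial_j\para}$ and using that $\partial_t$ commutes with $\partial_i$ together with $\partial_t\para = \Vb$, one obtains
\begin{align*}
    \partial_t g_{ij}
        &= \inner{\tangentR}{\partial_i\Vb, \partial_j\para} + \inner{\tangentR}{\partial_i\para, \partial_j\Vb} \formPeriod
\end{align*}
Substituting \eqref{eq:partialder_V} for $\partial_i\Vb$ and $\partial_j\Vb$, the normal contributions drop out by orthogonality $\inner{\tangentR}{\normal, \partial_k\para} = 0$, and contracting the tangential parts against the metric yields
\begin{align*}
    \partial_t g_{ij}
        &= \tensor{G}{^k_i}[\Vb] g_{kj} + g_{ik} \tensor{G}{^k_j}[\Vb]
         = G_{ji}[\Vb] + G_{ij}[\Vb]
         = \left[ \Gb[\Vb] + \Gb^T[\Vb] \right]_{ij} \formPeriod
\end{align*}
This symmetric tensor $\Sb := \Gb[\Vb] + \Gb^T[\Vb]$ is the single ingredient driving both claims.

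For the vector identity \eqref{eq:vector_partialtimeder_covar} I would differentiate $\eta_i = g_{ik}\eta^k$ by the product rule, giving $\partial_t\eta_i = (\partial_t g_{ik})\eta^k + g_{ik}\partial_t\eta^k$. The first summand equals $\Sb_{ik}\eta^k = [\Sb\etab]_i$ by the computation above, so $\partial_t\eta_i = g_{ik}\partial_t\eta^k + [(\Gb[\Vb]+\Gb^T[\Vb])\etab]_i$, as asserted.

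For the 2-tensor identity \eqref{eq:ttensor_partialtimeder_covar} I would apply the product rule to $r_{ij} = g_{ik}g_{jl}r^{kl}$, producing the three summands $(\partial_t g_{ik})g_{jl}r^{kl}$, $g_{ik}(\partial_t g_{jl})r^{kl}$ and $g_{ik}g_{jl}\partial_t r^{kl}$. The last is the desired $g_{ik}g_{jl}\partial_t r^{kl}$. For the first, $\Sb_{ik}g_{jl}r^{kl} = \Sb_{ik}\tensor{r}{^k_j} = [\Sb\rb]_{ij}$; for the second, using the symmetry $\Sb_{jl} = \Sb_{lj}$ one gets $g_{ik}\Sb_{jl}r^{kl} = \tensor{r}{_i^l}\Sb_{lj} = [\rb\Sb]_{ij}$. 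Summing the three terms reproduces \eqref{eq:ttensor_partialtimeder_covar}, with the two bracketed summands appearing in the opposite order, which is immaterial.

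The computation is essentially routine; the only point requiring care is consistent raising and lowering of indices, and in particular recognizing that the symmetry of $\Sb = \Gb[\Vb] + \Gb^T[\Vb]$ is precisely what allows the second tensor summand to be written as the right multiplication $\rb\Sb$ rather than $\rb\Sb^T$. There is no genuine analytic obstacle, since \eqref{eq:partialder_V} already encapsulates the geometry of $\partial_i\Vb$.
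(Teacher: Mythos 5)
Your proposal is correct and follows essentially the same route as the paper, which likewise reduces both identities to $\partial_t g_{ij}=G_{ij}[\Vb]+G_{ji}[\Vb]$ combined with the product rule on $\eta_i=g_{ik}\eta^k$ and $r_{ij}=g_{ik}g_{jl}r^{kl}$. The only difference is that you derive the metric-rate formula from $g_{ij}=\inner{\tangentR}{\partial_i\para,\partial_j\para}$ and \eqref{eq:partialder_V}, whereas the paper simply cites it from \cite{NitschkeVoigt_JoGaP_2022}; your derivation is a correct and self-contained replacement for that citation.
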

\begin{proof}
    Follows by $ \partial_t g_{ij} = G_{ij}[\Vb] + G_{ji}[\Vb] $ (see \cite{NitschkeVoigt_JoGaP_2022}), 
    $ r_{ij}= g_{ik}g_{jl} r^{kl} $, $ \eta_{i} = g_{ik}\eta^k $ and product rule.
\end{proof}

\begin{lemma}
    For a time-depending parameterization $ \para $ with velocity $ \Vb = \vb + \vnor\normal =\partial_t\para\in\tangentR $ holds
    \begin{align}
        \partial_t\normal = \partial_t(\normalC^{A}\eb_A) \label{eq:partialtimeder_normal}
            &= -\bb[\Vb]\in\tangentS\formComma\\
        \text{\resp\ } 
          \partial_t\normalC^{A} 
            &= -b^i[\Vb]\partial_i\para^{A}\formComma \notag
    \end{align}
    where $ \bb[\Vb] $ is given in \eqref{eq:b_def}.
\end{lemma}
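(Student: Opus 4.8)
The plan is to exploit the two algebraic constraints that characterize the unit normal field, namely $\inner{\tangentR}{\normal,\normal}=1$ and $\inner{\tangentR}{\normal,\partial_i\para}=0$ for $i=1,2$, and to differentiate both with respect to the time parameter $t$. Since these constraints determine $\normal$ pointwise (up to a fixed orientation) together with smoothness, their $t$-derivatives will pin down $\partial_t\normal$ completely, and the already-established formula \eqref{eq:partialder_V} for $\partial_i\Vb$ will supply the normal component we need.

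First I would differentiate the normalization $\inner{\tangentR}{\normal,\normal}=1$. Because the Cartesian frame $\{\eb_A\}$ is $t$-independent, the product rule gives $2\inner{\tangentR}{\partial_t\normal,\normal}=0$, so $\partial_t\normal$ is orthogonal to $\normal$ and hence tangential, $\partial_t\normal\in\tangentS$. This already establishes the claimed codomain and reduces the task to computing the tangential covariant proxy $\inner{\tangentR}{\partial_t\normal,\partial_i\para}$.

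Next I would differentiate the orthogonality relation $\inner{\tangentR}{\normal,\partial_i\para}=0$, again using constancy of $\{\eb_A\}$:
\begin{align*}
    0 &= \partial_t\inner{\tangentR}{\normal,\partial_i\para}
       = \inner{\tangentR}{\partial_t\normal,\partial_i\para} + \inner{\tangentR}{\normal,\partial_t\partial_i\para}\formPeriod
\end{align*}
The key simplification is that $t$ and $y^i$ are independent parameters, so $\partial_t\partial_i\para=\partial_i\partial_t\para=\partial_i\Vb$. Substituting \eqref{eq:partialder_V}, $\partial_i\Vb=\tensor{G}{^j_i}[\Vb]\partial_j\para+b_i[\Vb]\normal$, and using $\inner{\tangentR}{\normal,\partial_j\para}=0$ together with $\inner{\tangentR}{\normal,\normal}=1$ collapses the second inner product to $\inner{\tangentR}{\normal,\partial_i\Vb}=b_i[\Vb]$. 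Hence $\inner{\tangentR}{\partial_t\normal,\partial_i\para}=-b_i[\Vb]$.

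Finally, since $\partial_t\normal\in\tangentS$ is determined by its covariant proxy, comparing with $\bb[\Vb]=b^i[\Vb]\partial_i\para$ from \eqref{eq:b_def} yields $\partial_t\normal=-b^i[\Vb]\partial_i\para=-\bb[\Vb]$, and reading off the Cartesian components gives the companion identity $\partial_t\normalC^A=-b^i[\Vb]\partial_i\paraC^A$. I expect no genuine obstacle here: the only point requiring care is the commutation $\partial_t\partial_i\para=\partial_i\Vb$, which is legitimate precisely because $t$ is a parameter rather than a surface coordinate, a subtlety the paper emphasizes throughout.
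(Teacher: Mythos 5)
Your proposal is correct and follows essentially the same route as the paper's proof: differentiating $\inner{\tangentR}{\normal,\normal}=1$ to show $\partial_t\normal\in\tangentS$, and differentiating $\inner{\tangentR}{\normal,\partial_i\para}=0$ together with \eqref{eq:partialder_V} to identify the covariant proxy as $-b_i[\Vb]$. The commutation $\partial_t\partial_i\para=\partial_i\Vb$ that you single out is used implicitly in the paper's computation as well.
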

\begin{proof}
    Follows from 
    \begin{align*}
        \inner{\tangentR}{\partial_t\normal,\normal} 
            &= \partial_t\inner{\tangentR}{\normal,\normal} - \inner{\tangentR}{\normal,\partial_t\normal}
            = -\inner{\tangentR}{\partial_t\normal,\normal}
            \overset{\Rightarrow}{=} 0
    \end{align*}
    and 
    \begin{align*}
        \inner{\tangentR}{\partial_t\normal, \partial_i\para}
            &= \partial_t\inner{\tangentR}{\normal, \partial_i\para} - \inner{\tangentR}{\normal, \partial_i\Vb}
            \overset{\eqref{eq:partialder_V}}{=} -b_{i}[\Vb] \formPeriod
    \end{align*}
\end{proof}

\begin{corollary}
    For a time-depending parameterization $ \para $ with velocity $ \Vb = \vb + \vnor\normal =\partial_t\para\in\tangentR $ and 
    a tangential vector field $ \ub\in\tangentS $ holds
    \begin{align}
        \partial_t\normal + u^k\partial_k\normal= \partial_t(\normalC^{A}\eb_A + u^k\partial_k\normalC^{A}\eb_A)  \label{eq:timederwithu_normal}
                &= -\bb[\Vb+\ub]\in\tangentS\formComma\\
            \text{\resp\ } 
              \partial_t\normalC^{A} + u^k\partial_k\normalC^{A}
                &= -b^i[\Vb+\ub]\partial_i\para^{A}\formComma \notag
    \end{align}
    where $ \bb[\Vb+\ub] = \nabla\vnor + \shop(\vb+\ub)$ is consistent with definition \eqref{eq:b_def}.
\end{corollary}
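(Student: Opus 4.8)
The plan is to split the left-hand side of \eqref{eq:timederwithu_normal} into the pure time-derivative term $\partial_t\normal$ and the tangential transport term $u^k\partial_k\normal$, and to evaluate each with an identity already established in this appendix. Since the Cartesian frame $\{\eb_A\}$ is constant, the quantity $\partial_t\normalC^A + u^k\partial_k\normalC^A$ is exactly the scalar time derivative \eqref{eq:scalar_timeder} applied to each proxy component $\normalC^A$; hence it suffices to prove the index-free identity and then read off the $\eb_A$-components for the ``resp.'' statement.

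First I would rewrite the time-derivative term by directly invoking \eqref{eq:partialtimeder_normal}, giving $\partial_t\normal = -\bb[\Vb] = -(\nabla\vnor + \shop\vb)$. Second I would handle the transport term with the Weingarten identity \eqref{eq:partialder_normal}, $\partial_k\normal = -\shopC^j_k\partial_j\para$; contracting with $u^k$ gives $u^k\partial_k\normal = -\shopC^j_k u^k\,\partial_j\para = -\shop\ub$, since $\shopC^j_k u^k$ are precisely the tangential-frame components of $\shop\ub$. Adding the two contributions and collecting the curvature terms yields
\begin{align*}
    \partial_t\normal + u^k\partial_k\normal
        &= -\left( \nabla\vnor + \shop\vb \right) - \shop\ub
         = -\left( \nabla\vnor + \shop(\vb+\ub) \right)\formPeriod
\end{align*}

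Finally I would identify this expression with $-\bb[\Vb+\ub]$, which is the only step demanding any care and is essentially the whole point of the corollary. Because $\ub\in\tangentS$ is tangential, the sum decomposes as $\Vb+\ub = (\vb+\ub) + \vnor\normal$, so its normal component remains $\vnor$ while its tangential part becomes $\vb+\ub$; definition \eqref{eq:b_def} then gives $\bb[\Vb+\ub] = \nabla\vnor + \shop(\vb+\ub)$, matching the collected expression exactly and confirming the asserted consistency with \eqref{eq:b_def}. Equating $\eb_A$-components of the index-free identity then produces the ``resp.'' form $\partial_t\normalC^A + u^k\partial_k\normalC^A = -b^i[\Vb+\ub]\,\partial_i\para^A$, completing the argument.
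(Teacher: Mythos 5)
Your proposal is correct and follows exactly the paper's intended route: the paper's proof is the one-line remark that the claim ``follows from \eqref{eq:partialtimeder_normal} and \eqref{eq:partialder_normal}'', and you apply precisely these two identities, evaluating $\partial_t\normal=-\bb[\Vb]$ and $u^k\partial_k\normal=-\shop\ub$ and collecting the curvature terms into $\bb[\Vb+\ub]=\nabla\vnor+\shop(\vb+\ub)$. Your write-up simply makes explicit the steps the paper leaves implicit.
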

\begin{proof}
    Follows from \eqref{eq:partialtimeder_normal} and \eqref{eq:partialder_normal}.
\end{proof}

\begin{lemma}
    For a time-depending parameterization $ \para $, with velocity $ \Vb =\partial_t\para\in\tangentR $, 
    thin film parameterization $ \chib[\para] $ and thin film velocity $ \widehat{\Vb} = \partial_t\chib[\para]\in\tangent{\surf_h}=\tangent{\R^3}\vert_{\surf_h} $
    holds 
    \begin{align}\label{eq:tfl_deformation}
        \nablahat\widehat{\Vb} \rightarrow \Gbcal[\Vb] = \Gb[\Vb] + \normal\otimes\bb[\Vb] - \bb[\Vb]\otimes\normal
    \end{align}
    for $ h \rightarrow 0 $.
\end{lemma}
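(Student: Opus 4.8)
The plan is to evaluate the ambient gradient $\nablahat\widehat{\Vb}$ directly in the thin-film chart $(y^1,y^2,\xi)$ and read off its $\xi\to 0$ (equivalently $h\to 0$) restriction to $\surf$. First I would compute the thin-film velocity explicitly. Since $\xi$ and $t$ are mutually independent and $\partial_t\para=\Vb$, differentiating \eqref{eq:tf_para} in time and using \eqref{eq:partialtimeder_normal} gives $\widehat{\Vb}=\partial_t\chib[\para]=\Vb-\xi\bb[\Vb]$. In particular $\partial_\xi\widehat{\Vb}=-\bb[\Vb]$ holds for every $\xi$, while $\partial_i\widehat{\Vb}=\partial_i\Vb-\xi\,\partial_i\bb[\Vb]$ reduces to $\partial_i\Vb$ at $\xi=0$.

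Next I would identify the coordinate frame of the thin film. From \eqref{eq:tf_para} and \eqref{eq:partialder_normal} one has $\partial_i\chib=(\delta^j_i-\xi\shopC^j_i)\partial_j\para$ together with $\partial_\xi\chib=\normal$. At $\xi=0$ this frame degenerates to the surface frame $\{\partial_1\para,\partial_2\para,\normal\}$, whose Euclidean dual coframe is $\{\partial^1\para,\partial^2\para,\normal\}$, because $\normal$ is a unit vector orthogonal to the tangent plane spanned by the $\partial_i\para$. For $\xi\neq0$ the tangential part of the dual coframe is the inverse of $\delta^j_i-\xi\shopC^j_i$ applied to $\partial^i\para$, which differs from $\partial^i\para$ only at order $\landau(\xi)$; this is the only place where the $\xi$-dependence of the frame enters.

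Writing the ambient gradient componentwise as $\nablahat\widehat{\Vb}=(\partial_a\widehat{\Vb})\otimes\theta^a$ with $\theta^a$ the dual coframe and $a\in\{1,2,\xi\}$, evaluation at $\xi=0$ yields
\begin{align*}
    \nablahat\widehat{\Vb}\big\vert_{\xi=0}
        &= (\partial_i\Vb)\otimes\partial^i\para - \bb[\Vb]\otimes\normal \formPeriod
\end{align*}
Substituting \eqref{eq:partialder_V}, i.e.\ $\partial_i\Vb=\tensor{G}{^j_i}[\Vb]\partial_j\para+b_i[\Vb]\normal$, splits the first term into $\Gb[\Vb]+\normal\otimes\bb[\Vb]=\nablaS\Vb$, so that the right-hand side becomes $\Gb[\Vb]+\normal\otimes\bb[\Vb]-\bb[\Vb]\otimes\normal=\Gbcal[\Vb]$ by definition \eqref{eq:Gbcal_def}. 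Collecting the frame and velocity corrections shows $\nablahat\widehat{\Vb}=\Gbcal[\Vb]+\landau(\xi)$, which is the claimed convergence for $h\to0$.

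The step I expect to be the main obstacle is the correct treatment of the genuinely three-dimensional gradient rather than the surface gradient $\nablaS$. The decisive contribution is the $\partial_\xi$-direction, which produces the antisymmetric normal-tangential term $-\bb[\Vb]\otimes\normal$; this term is exactly what distinguishes $\Gbcal[\Vb]$ from $\nablaS\Vb=\Gb[\Vb]+\normal\otimes\bb[\Vb]$ and is easy to miss if one works only with intrinsic surface derivatives. A secondary subtlety is the bookkeeping of the dual coframe for $\xi\neq0$ and confirming that its deviation from $\{\partial^i\para,\normal\}$, as well as the $-\xi\,\partial_i\bb[\Vb]$ contribution to $\partial_i\widehat{\Vb}$, are both $\landau(\xi)$ so that the limit is clean.
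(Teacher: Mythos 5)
Your argument is correct, and it reaches \eqref{eq:tfl_deformation} by a genuinely different route than the paper. The paper works entirely in covariant proxy components with respect to the thin-film frame \eqref{eq:tf_frame}: it computes $\widehat{V}_I=\inner{\tangent{\surf_h}}{\widehat{\Vb},\partial_I\chib}$, the thin-film Christoffel symbols $\GGamma^K_{IJ}$ up to $\landau(\xi)$, and then assembles $[\nablahat\widehat{\Vb}]_{IJ}=\partial_J\widehat{V}_I-\GGamma^K_{IJ}\widehat{V}_K$ entry by entry, recovering $G_{ij}[\Vb]$, $-b_i[\Vb]$, $b_j[\Vb]$ and $0$ in the four blocks. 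You instead exploit that $\nablahat$ acts componentwise in the constant Cartesian frame, so $\nablahat\widehat{\Vb}=(\partial_a\widehat{\Vb})\otimes\theta^a$ with $\theta^a$ the dual coframe of the chart $(y^1,y^2,\xi)$; at $\xi=0$ the coframe is $\{\partial^i\para,\normal\}$, the tangential directions reproduce $\nablaS\Vb=\Gb[\Vb]+\normal\otimes\bb[\Vb]$ via \eqref{eq:partialder_V}, and the $\partial_\xi$-direction contributes exactly the extra term $-\bb[\Vb]\otimes\normal$ from $\partial_\xi\widehat{\Vb}=-\bb[\Vb]$. Your version avoids the thin-film Christoffel symbols altogether and makes transparent where the asymmetric term distinguishing $\Gbcal[\Vb]$ from $\nablaS\Vb$ comes from; the paper's version stays inside its proxy-component machinery and reuses the $\GGamma^K_{IJ}$ already catalogued in \cite{Nitschke_2018}. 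The $\landau(\xi)$ bookkeeping you flag (coframe correction $[(\IdS-\xi\shop)^{-1}-\IdS]$ and the $-\xi\,\partial_i\bb[\Vb]$ term) is indeed all that remains to check, and both corrections are manifestly first order in $\xi$ under the standing smoothness assumptions, so the limit is as clean as you claim.
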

\begin{proof}
    In the following, we omit the argument $ \para $ in square brackets.
    The thin film parameterization \eqref{eq:tf_para} yields the frame
    \begin{align}\label{eq:tf_frame}
        \partial_i\chib 
            &= \partial_i\para - \xi\shopC_i^j \partial_j\para \formComma
        & \partial_\xi\chib 
            &= \normal\formPeriod
    \end{align}
    Regarding this frame, the covariant thin film proxy of the velocity 
    $ \widehat{\Vb} = \partial_t\chib = \Vb - \xi\bb[\Vb] $ is given by
    \begin{align*}
        \widehat{V}_i 
            &= \inner{\tangent{\surf_h}}{\widehat{\Vb} , \partial_i\chib}
             = v_i - \xi\left( b_i[\Vb] + \shopC_{ij}v^j \right) + \landau(\xi^2) \formComma
        &\widehat{V}_\xi
            &= \inner{\tangent{\surf_h}}{\widehat{\Vb} , \partial_\xi\chib}
             = \vnor \formPeriod
    \end{align*}
    The Christoffel symbols of second kind \wrt\ the thin film frame \eqref{eq:tf_frame} are
    \begin{align*}
        \GGamma_{ij}^{k} = \Gamma_{ij}^{k} + \landau(\xi)\formComma \quad
        \GGamma_{ij}^{\xi} = \shopC_{ij} + \landau(\xi) \formComma \quad
        \GGamma_{\xi\xi}^{K} = \GGamma_{I\xi}^{\xi} = \GGamma_{\xi I}^{\xi} = 0\formComma \quad \text{and} \quad
        \GGamma_{i\xi}^{k} = \GGamma_{\xi i}^{k} &= -\shopC_{i}^{k} + \landau(\xi) \formComma
    \end{align*} 
    where a capital Latin letter $ I,J,K $ comprises a small Latin letter $ i,j,k $ and $ \xi $, see \cite{Nitschke_2018} for more details.
    Therefore the covariant thin film proxy of the velocity gradient 
    $ \nablahat\widehat{\Vb} = \delta_B^C \partial_C \widehat{V}^A \eb_A\otimes\eb_B $ yields
    \begin{align*}
        [\nablahat\widehat{\Vb}]_{ij}
            &= \partial_j \widehat{V}_i - \GGamma_{ij}^K \widehat{V}_K
             = \partial_j v_i - \Gamma_{ij}^k v_k - \vnor\shopC_{ij} + \landau(\xi)
             &&= G_{ij}[\Vb] + \landau(\xi)\\
        [\nablahat\widehat{\Vb}]_{i\xi}
            &= \partial_\xi \widehat{V}_i - \GGamma_{\xi i}^K \widehat{V}_K
             = -b_i[\Vb] - \shopC_{ij}v^j + \shopC_i^k v_k + \landau(\xi)
             &&=  -b_i[\Vb] + \landau(\xi)\\
        [\nablahat\widehat{\Vb}]_{\xi j}
            &= \partial_j \widehat{V}_\xi - \GGamma_{j\xi}^K \widehat{V}_K
             = \partial_j \vnor + \shopC_j^k v_k + \landau(\xi) 
             &&=  b_j[\Vb] + \landau(\xi) \\
        [\nablahat\widehat{\Vb}]_{\xi \xi}
            &= \partial_\xi \widehat{V}_\xi - \GGamma_{\xi\xi}^K \widehat{V}_K
             &&= 0 \formPeriod
    \end{align*}
    The orthogonality $ \partial_i\chib \bot \partial_\xi\chib  $ 
    and the thin film limit of the covariant tangential proxy of the thin film metric tensor \wrt\ frame \eqref{eq:tf_frame}, which is $ g_{ij} $, see \cite{Nitschke_2018},
    implies \eqref{eq:tfl_deformation} finally.
\end{proof}

\begin{lemma}\label{lem:laplace_equals_beltrami}
    The surface Laplace operator $ \DeltaS:\tangentR[^2]\rightarrow\tangentR[^2] $ equals the Cartesian-componentwise Laplace-Beltrami operator,
    \ie\ for all $ \Rb = R^{AB}\eb_A\otimes\eb_B\in\tangentR[^2] $ holds
    \begin{align*}
        \left[ \DeltaS \Rb \right]^{AB} 
            &= \Delta R^{AB}
             = g^{ij}\left( \partial_i\partial_j R^{AB} - \Gamma_{ij}^{k}\partial_k R^{AB} \right) \formPeriod
    \end{align*}
\end{lemma}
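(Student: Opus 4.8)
The plan is to reduce the tensorial statement to a scalar identity by exploiting that the Cartesian frame $\{\eb_A\}$ is spatially constant, and then to recognise the scalar operator $\Tr\nablaS^2$ as the Laplace--Beltrami operator. First I would use the defining property of $\nablaS$: each application differentiates only the scalar Cartesian proxy components and leaves $\eb_A\otimes\eb_B$ inert. Hence $\nablaS\Rb = \eb_A\otimes\eb_B\otimes\nabla R^{AB}$ and, applying $\nablaS$ once more while the constant frame part passes through untouched, $\nablaS^2\Rb = \eb_A\otimes\eb_B\otimes\nablaS(\nabla R^{AB})$. Since $\DeltaS = \Tr\nablaS^2$ contracts precisely the two derivative slots produced by $\nabla R^{AB}$ and the outer $\nablaS$, the frame factor survives and the claim collapses to the scalar identity $\Tr\nablaS^2 f = \Delta f$ applied componentwise to each $f = R^{AB}\in\tangentS[^0]$.

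For the scalar identity I would argue as follows. Because $f$ is a scalar, $\nablaS f = \nabla f\in\tangentS$ is the tangential gradient. Differentiating the tangential vector $\nabla f$ once more and invoking the Gauss-type decomposition behind \eqref{eq:partialder_V}, namely $\nablaS\wb = \nabla\wb + \normal\otimes\shop\wb$ for $\wb\in\tangentS$, gives $\nablaS(\nabla f) = \nabla^2 f + \normal\otimes\shop\nabla f$. Taking the trace contracts the value slot with the tangential derivative slot; since $\normal\perp\partial_k\para$, the second-fundamental-form term is annihilated and only $\Tr\nabla^2 f = \div\nabla f$ remains. This is by definition the Laplace--Beltrami operator, $\Delta f = g^{ij}(\partial_i\partial_j f - \Gamma_{ij}^k\partial_k f)$, which is exactly the asserted right-hand side.

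The main obstacle is making the trace step fully rigorous, i.e.\ confirming that contracting a value slot living in the ambient $\tangentR$ with a tangential derivative slot indeed projects onto the tangent space and discards the normal contribution. If one prefers to avoid the Gauss formula, I would instead Cartesianise $\nabla f$ as $[\nabla f]^C = g^{ij}(\partial_i f)\partial_j\paraC^C$, differentiate once more, and carry out the trace explicitly through the factor $g^{km}\partial_k\paraC_C$; the delicate point then becomes an index cancellation. Expanding $\partial_m[\nabla f]^C$ yields a clean Hessian term $g^{ij}\partial_i\partial_j f$ together with a term from $\partial_m g^{ij}$ and a term from $\partial_m\partial_j\paraC^C$. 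Using $\inner{\tangentR}{\partial_m\partial_j\para,\partial_k\para} = \Gamma_{mjk}$ from \eqref{eq:christoffel_def} and $\partial_l g^{ij} = -(g^{kj}\Gamma_{lk}^i + g^{ki}\Gamma_{lk}^j)$ from \eqref{eq:metric_contra_partialder}, together with the symmetries $\Gamma_{ij}^k = \Gamma_{ji}^k$ and $g^{ij} = g^{ji}$, the spurious pieces cancel pairwise and leave precisely $-g^{ij}\Gamma_{ij}^k\partial_k f$, completing the identification with $\Delta f$.
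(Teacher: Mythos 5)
Your proposal is correct, and your primary route is organized differently from the paper's. The paper proves the lemma by a single brute-force index computation: it writes the Cartesian proxy of $\Tr\nablaS^2\Rb$ as $\delta_{CD}g^{kl}\partial_l\bigl( g^{ij}\partial_j R^{AB}\partial_i\paraC^C\bigr)\partial_k\paraC^D$, expands by the product rule, and substitutes $g^{kl}\inner{\tangentR}{\partial_l\partial_i\para,\partial_k\para}=\Gamma_{li}^l$ together with \eqref{eq:metric_contra_partialder} to watch the spurious Christoffel terms cancel --- this is precisely the fallback you sketch in your last paragraph, so your ``delicate index cancellation'' does go through exactly as you anticipate. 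Your main argument instead packages the same content structurally: after the (shared, and correct) reduction to the scalar identity $\Tr\nablaS^2 f=\Delta f$ on each Cartesian proxy $f=R^{AB}$, you apply the Gauss-type decomposition $\nablaS\wb=\nabla\wb+\normal\otimes\shop\wb$ to $\wb=\nabla f$ and kill the normal term by orthogonality under the trace. The obstacle you flag is not a real one: the outer derivative slot of $\nablaS$ is tangential by construction ($\nablaS:\tangentR[^n]\to\tangentR[^n]\otimes\tangentS$), and the trace contracts it against the value slot via the ambient inner product, so $\Tr(\normal\otimes\shop\nabla f)=\inner{\tangentR}{\normal,\shop\nabla f}=0$ immediately. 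What your route buys is conceptual transparency --- the Christoffel bookkeeping is absorbed once and for all into the covariant derivative $\nabla\wb$, and the curvature contribution is visibly discarded by orthogonality rather than by term-wise cancellation; what the paper's route buys is self-containedness at the level of raw proxies, relying only on \eqref{eq:christoffel_def} and \eqref{eq:metric_contra_partialder} rather than on the vector-field decomposition \eqref{eq:partialder_V}. Both are valid proofs of the same identity.
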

\begin{proof}
    Applying product rule yields
    \begin{align*}
        \left[ (\Tr\nablaS^2)\Rb \right]^{AB}
            &= \delta_{CD}g^{kl}\partial_l\left( g^{ij}\partial_j R^{AB} \partial_i \paraC^{C}  \right)\partial_k\paraC^{D}
             = g^{jl} \partial_l\partial_j R^{AB} 
                + g^{kl}g^{ij} \inner{\tangentR}{ \partial_l\partial_i \para, \partial_k \para }  \partial_j R^{AB}
                + (\partial_i g^{ij}) \partial_j R^{AB}\formPeriod
    \end{align*}
    Substituting $ g^{kl}\inner{\tangentR}{ \partial_l\partial_i \para, \partial_k \para } = \Gamma_{li}^l $ \eqref{eq:christoffel_def}
    and $\partial_i g^{ij} = -(g^{jk}\Gamma_{ik}^i + g^{ik}\Gamma_{ik}^j )$ \eqref{eq:metric_contra_partialder} 
    gives the assertion.
\end{proof}

\begin{lemma}\label{lem:laplace_equals_bochner}
    The surface Laplace operator $ \DeltaS:\tangentR[^2]\rightarrow\tangentR[^2] $ corresponds to the Bochner-like Laplace operator 
    given by the surface derivative $ \nablaS $, \ie\ for all $ \Rb\in\tangentR[^2] $ holds
    \begin{align*}
        \DeltaS\Rb = -\nablaS^{*}\nablaS\Rb \formPeriod
    \end{align*}
\end{lemma}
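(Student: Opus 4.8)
The plan is to exploit the componentwise structure of $\nablaS$ in the global, spatially constant Cartesian frame $\{\eb_A\}$ and thereby reduce the statement to the scalar case, where the connection Laplacian is classically the Laplace--Beltrami operator. Since $\nablaS\Rb = \eb_A\otimes\eb_B\otimes\nabla R^{AB}$ differentiates only the scalar proxy components $R^{AB}\in\tangentS[^0]$ while the frame vectors pass through untouched, everything decouples across the Cartesian indices $A,B$. The last step will then be a direct appeal to lemma \ref{lem:laplace_equals_beltrami}.

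First I would make the formal adjoint $\nablaS^{*}$ explicit. For $\Rb\in\tangentR[^2]$ and $\Sb\in\tangentR[^2]\otimes\tangentS<\tangentR[^3]$ with tangential proxy $S^{AB}_{\phantom{AB}i}$, the relevant $\hil$-pairing reads $\innerH{\tangentR[^3]}{\nablaS\Rb,\Sb} = \int_{\surf} g^{ij}\delta_{AC}\delta_{BD}(\partial_i R^{AB}) S^{CD}_{\phantom{CD}j}\dS$, where the derivative slot is contracted with $g^{ij}$ (it is tangential, so this agrees with the ambient $\delta$-contraction) and the two remaining slots with the Cartesian $\delta$. Integrating by parts via the divergence theorem on the boundaryless $\surf$ and using that $\delta_{AB}$ is covariantly constant moves the tangential derivative onto $\Sb$, giving $[\nablaS^{*}\Sb]^{AB} = -\nabla_j S^{ABj}$; thus the adjoint again acts componentwise, as the negative tangential divergence of each Cartesian component.

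Composing the two operators yields $[-\nablaS^{*}\nablaS\Rb]^{AB} = \nabla_j\left( g^{jk}\partial_k R^{AB}\right) = g^{jk}\left( \partial_j\partial_k R^{AB} - \Gamma_{jk}^{l}\partial_l R^{AB}\right)$, where I have used covariant constancy of $g^{jk}$ and the scalar-field Hessian identity $\nabla_j\partial_k R^{AB} = \partial_j\partial_k R^{AB} - \Gamma_{jk}^{l}\partial_l R^{AB}$. This is precisely the Laplace--Beltrami operator applied to the scalar $R^{AB}$, which by lemma \ref{lem:laplace_equals_beltrami} equals $[\DeltaS\Rb]^{AB}$; comparing components gives $\DeltaS\Rb = -\nablaS^{*}\nablaS\Rb$. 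The only genuine subtlety, and the step to get right, is the adjoint bookkeeping: one must pair in the correct $\hil$ inner products on $\tangentR[^2]$ and $\tangentR[^3]$ so that the Cartesian $\delta$-contractions and the tangential $g$-contraction sit in the right slots, and confirm that the constancy of $\delta_{AB}$ is exactly what lets the Cartesian indices ride along as spectators during integration by parts. Boundarylessness of $\surf$ (or compact support) ensures the vanishing of boundary terms, so that $\nablaS^{*}$ is indeed the formal adjoint used here.
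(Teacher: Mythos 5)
Your proposal is correct and takes essentially the same route as the paper: both reduce to the scalar Cartesian proxy components $R^{AB}$, invoke lemma \ref{lem:laplace_equals_beltrami}, and integrate by parts with vanishing boundary terms. The only difference is presentational --- the paper stays entirely in the weak form via the chain $\innerH{\tangentR[^2]}{\DeltaS\Rb,\Psib}=\innerH{\tangentR[^0]}{\Delta R^{AB},\Psi_{AB}}=-\innerH{\tangentR[^3]}{\nablaS\Rb,\nablaS\Psib}$, whereas you make the adjoint $\nablaS^{*}$ explicit as a componentwise negative tangential divergence and then compose; the mathematical content is identical.
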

\begin{proof}
    Neglecting any boundary terms, lemma \ref{lem:laplace_equals_beltrami} yields
    \begin{align*}
        \innerH{\tangentR[^2]}{\DeltaS\Rb, \Psib}
            &=\innerH{\tangentR[^0]}{\Delta R^{AB}, \Psi_{AB}}
             = - \innerH{\tangentR[^1]}{\nabla R^{AB}, \nabla\Psi_{AB}}
             = - \innerH{\tangentR[^3]}{\nablaS\Rb, \nablaS\Psib}
    \end{align*}
    for all $ \Rb,\Psib\in\tangentR[^2] $.
\end{proof}

\begin{corollary}\label{col:surface_laplace_decomposition}
    For all $ \Rb = \rb + \etab_L\otimes\normal + \normal\otimes\etab_R + \phi\normal\otimes\normal \in \tangentR[^2] $,
    $ \rb\in\tangentS[^2] $, $ \etab_L,\etab_R \in\tangentS $ and $ \phi\in\tangentS[^0] $,
    the surface Laplace operator $ \DeltaS:\tangentR[^2]\rightarrow\tangentR[^2] $ yields
    \begin{align}\label{eq:surface_laplace_decomposition}
        \DeltaS\Rb
            &= \Delta\rb - \left( \shop^2\rb + \rb\shop^2 \right) 
                -2\left( (\nabla\etab_{L})\shop + \shop(\nabla\etab_{R})^{T} \right) - \left( \etab_L\otimes\nabla\meanc + \nabla\meanc\otimes\etab_{R} \right)
                + 2\phi\shop^2 \notag\\
            &\quad + \left( 2(\nabla\rb):\shop + \rb\nabla\meanc 
                        + \Delta\etab_{L} - \Tr(\shop^2)\etab_{L} - \shop^{2}\left( \etab_{L} + 2\etab_{R} \right)
                        -2\shop\nabla\phi - \phi\nabla\meanc\right)\otimes\normal \notag\\
            &\quad +\normal\otimes\left( 2(\nabla\rb^T):\shop + (\nabla\meanc)\rb 
                        + \Delta\etab_{R} - \Tr(\shop^2)\etab_{R} - \shop^{2}\left( \etab_{R} + 2\etab_{L} \right)
                        -2\shop\nabla\phi - \phi\nabla\meanc \right) \notag\\
            &\quad + \left( 2 \shop^2:\rb 
                            + 2(\nabla\etab_{L}+\nabla\etab_{R}):\shop + (\etab_{L} + \etab_{R})\nabla\meanc 
                            + \Delta\phi - 2\phi \Tr(\shop^2)\right) \normal\otimes\normal \formPeriod
    \end{align}
\end{corollary}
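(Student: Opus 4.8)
The plan is to reduce everything to the componentwise scalar Laplace--Beltrami operator and then translate the outcome back into the tangential-normal frame $\{\partial_i\para,\normal\}$. By lemma \ref{lem:laplace_equals_beltrami} it holds $[\DeltaS\Rb]^{AB}=\Delta R^{AB}$, so with
\begin{align*}
    R^{AB} = r^{AB} + \eta_L^A\normalC^B + \normalC^A\eta_R^B + \phi\,\normalC^A\normalC^B
\end{align*}
I would apply the scalar Leibniz rule $\Delta(fg)=(\Delta f)g+2\inner{\tangentS}{\nabla f,\nabla g}+f\,\Delta g$ to each summand. Every factor $r^{AB}$, $\eta_L^A$, $\eta_R^B$, $\phi$ and $\normalC^A$ is an ordinary scalar field in $\tangentS[^0]$, so after the product rule the whole computation is governed by the first and second derivatives of the moving frame components $\partial_i\paraC^A$ and $\normalC^A$.

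The key auxiliary inputs are therefore the derivatives of this frame. The Weingarten identity \eqref{eq:partialder_normal} gives $\nabla_i\normalC^A=-\shopC_i^{\,j}\partial_j\paraC^A$, so each cross term $g^{ij}(\nabla_i f)(\nabla_j\normalC^B)$ contracts a tangential gradient against $\shop$; this is the origin of the factors $(\nabla\etab_L)\shop$, $\shop(\nabla\etab_R)^T$ and $(\nabla\rb):\shop$. Differentiating \eqref{eq:partialder_normal} once more, tracing with $g^{ij}$ and using \eqref{eq:partialpartialder_X} together with the contracted Codazzi--Mainardi identity $\nabla_i\shopC^i_{\,j}=\nabla_j\meanc$, I obtain the componentwise Laplace--Beltrami of the normal,
\begin{align*}
    \Delta\normal = -\nabla\meanc - \Tr(\shop^2)\normal,
\end{align*}
whose tangential part $-\nabla\meanc$ and normal part $-\Tr(\shop^2)\normal$ account for every occurrence of $\nabla\meanc$ and $\Tr(\shop^2)$ in the claim. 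In the same way \eqref{eq:partialpartialder_X} yields the rough Laplacian of the tangential basis $\Delta(\partial_m\para)$, whose tangential-tangential trace reproduces the intrinsic Bochner Laplacian $\Delta\rb$ while its normal contribution supplies the curvature corrections $-(\shop^2\rb+\rb\shop^2)$ and the $\normal$-slot terms $2(\nabla\rb):\shop$ and $\rb\nabla\meanc$.

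Finally I would substitute these frame identities into the expanded Leibniz terms and sort every contribution according to which of $\{\partial_i\para,\normal\}$ its two free Cartesian indices select, thereby recovering the four slots of \eqref{eq:ttensor_decomposition}: the tangential-tangential block gathers $\Delta\rb$, the Weingarten self-terms $-(\shop^2\rb+\rb\shop^2)$ and $2\phi\shop^2$, and the mixed pieces $-2((\nabla\etab_L)\shop+\shop(\nabla\etab_R)^T)$ and $-(\etab_L\otimes\nabla\meanc+\nabla\meanc\otimes\etab_R)$; the two tangential-normal blocks gather $\Delta\etab_{L},\Delta\etab_{R}$ with $-\Tr(\shop^2)\etab_{L},-\Tr(\shop^2)\etab_{R}$, the $\shop^2$-couplings, $-2\shop\nabla\phi$ and $-\phi\nabla\meanc$; and the normal-normal block gathers $\Delta\phi-2\phi\Tr(\shop^2)$ together with the contractions $2\shop^2:\rb$ and $(\nabla\etab_L+\nabla\etab_R):\shop+(\etab_L+\etab_R)\nabla\meanc$. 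I expect the main obstacle to be bookkeeping rather than any single identity: each of the four product blocks spawns a leading Laplacian, two cross-gradient terms and a Hessian-of-frame term, and after projecting with \eqref{eq:partialder_normal} and the $\Delta\normal$ formula these recombine into the roughly fifteen tangential tensors of \eqref{eq:surface_laplace_decomposition}, whose signs and index positions must be tracked carefully. Verifying the manifest symmetry under the simultaneous transpose and $L\leftrightarrow R$ exchange gives a useful consistency check.
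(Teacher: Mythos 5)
Your proposal is correct and follows essentially the same route as the paper: both reduce $\DeltaS$ to the componentwise Laplace--Beltrami operator via lemma \ref{lem:laplace_equals_beltrami}, expand the four blocks of the decomposition in the frame $\{\partial_i\para,\normal\}$, and resolve the resulting derivatives with \eqref{eq:partialpartialder_X}, the Weingarten identity \eqref{eq:partialder_normal} and the contracted Codazzi relation $\shopC^j_{i|j}=\meanc_{|i}$ (your precomputed $\Delta\normal=-\nabla\meanc-\Tr(\shop^2)\normal$ is exactly the package of these that the paper re-derives inline). The only difference is organizational -- you apply the scalar Leibniz rule up front and the paper first expands $\partial_k R^{AB}$ and then differentiates again, also exploiting transposition to get the $\normal\otimes\etab_R$ block for free -- so the bookkeeping you anticipate is indeed the whole content of the proof.
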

\begin{proof}
    In this proof we calculate $ \DeltaS\Rb = \DeltaS\rb + \DeltaS(\etab_L\otimes\normal) + \DeltaS(\normal\otimes\etab_R) + \DeltaS(\phi\normal\otimes\normal) $
    term by term in this order using $ \left[ \DeltaS \Rb \right]^{AB} = g^{ij}\left( \partial_j\partial_i R^{AB} - \Gamma_{ij}^{k}\partial_k R^{AB} \right) $ (lemma \ref{lem:laplace_equals_beltrami}).
    This is a straightforward proceeding, where we mainly use $ \partial_i\partial_j\para = \Gamma_{ij}^k\partial_k\para + \shopC_{ij}\normal$ \eqref{eq:partialpartialder_X}
    and $ \partial_i \normal = - \shopC^j_i \partial_j\para $ \eqref{eq:partialder_normal}, without mentioning it every time.
    Mixed proxy components $ \tensor{[\nablaS\rb]}{^{AB}_{k}} = \partial_k r^{AB} $ yield
    \begin{align*}
        \partial_k r^{AB}
            &= \partial_k\left( r^{ij} \partial_i\paraC^{A}\partial_j\paraC^{B} \right)
             = \partial_k r^{ij} \partial_i\paraC^{A}\partial_j\paraC^{B}
                    + r^{ij}\left( \Gamma_{ki}^{l} \partial_l\paraC^{A}\partial_j\paraC^{B}  + \Gamma_{kj}^{l} \partial_i\paraC^{A}\partial_l\paraC^{B}
                                    + \shopC_{ki} \normalC^{A}\partial_j\paraC^{B} + \shopC_{kj} \partial_i\paraC^{A}\normalC^{B} \right)\\
            &= \tensor{r}{^{ij}_{|k}}  \partial_i\paraC^{A}\partial_j\paraC^{B} 
                    + r^{ij}\left( \shopC_{ki} \normalC^{A}\partial_j\paraC^{B} + \shopC_{kj} \partial_i\paraC^{A}\normalC^{B} \right) \formPeriod
    \end{align*}
    Substituting this into $ \left[ \DeltaS \rb \right]^{AB} $, the product rule gives the summands
    \begin{align*}
        g^{ij} \partial_j \left( \tensor{r}{^{lm}_{|i}}  \partial_l\paraC^{A}\partial_m\paraC^{B} \right)
            &= \left(\tensor{r}{^{lm|j}_{|j}} + g^{ij}\Gamma_{ji}^{k}\tensor{r}{^{lm}_{|k}}  \right) \partial_l\paraC^{A}\partial_m\paraC^{B}
               + \tensor{r}{^{lm}_{|j}} \left( \shopC^j_l \normalC^{A}\partial_m\paraC^{B}  + \shopC^j_m \partial_l\paraC^{A}\normalC^{B}  \right)\\
        g^{ij} \partial_j \left( r^{lm} \shopC_{il} \normalC^{A}\partial_m\paraC^{B} \right)
            &= \tensor{r}{^{lm}_{|j}}  \shopC^j_l \normalC^{A}\partial_m\paraC^{B}
              + r^{lm}\left( \shopC^j_{l|j} + g^{ij} \Gamma_{ji}^k \shopC_{kl}  \right) \normalC^{A}\partial_m\paraC^{B}
              - r^{lm} \shopC_l^j\shopC^k_j \partial_k\paraC^{A}\partial_m\paraC^{B}
              + r^{lm} \shopC_l^j\shopC_{jm} \normalC^{A}\normalC^{B} \\
        g^{ij} \partial_j \left( r^{lm} \shopC_{im} \partial_l\paraC^{A}\normalC^{B} \right)
            &= \tensor{r}{^{lm}_{|j}}  \shopC^j_m \partial_l\paraC^{A}\normalC^{B}
              + r^{lm}\left( \shopC^j_{m|j} + g^{ij} \Gamma_{ji}^k \shopC_{km}  \right) \partial_l\paraC^{A}\normalC^{B}
              - r^{lm} \shopC_m^j\shopC^k_j \partial_l\paraC^{A}\partial_k\paraC^{B}
              + r^{lm} \shopC_m^j\shopC_{jl} \normalC^{A}\normalC^{B}\\
        -g^{ij}\Gamma_{ij}^{k}\partial_k r^{AB}
            &= -g^{ij}\Gamma_{ij}^{k} \left( \tensor{r}{^{lm}_{|k}}  \partial_l\paraC^{A}\partial_m\paraC^{B} 
                                + r^{lm}\left( \shopC_{kl} \normalC^{A}\partial_m\paraC^{B} + \shopC_{km} \partial_l\paraC^{A}\normalC^{B} \right) \right) \formComma
    \end{align*}
    which are adding up to
    \begin{align*}
        \DeltaS \rb 
            &= \Delta\rb - \left( \shop^2\rb + \rb\shop^2 \right) 
                + \left( 2(\nabla\rb):\shop + \rb\nabla\meanc  \right) \otimes \normal
                + \normal\otimes\left( 2(\nabla\rb^T):\shop + (\nabla\meanc)\rb \right)
                + 2 (\shop^2:\rb) \normal\otimes\normal \formComma 
    \end{align*}
    where we use that $ \shopC^j_{i|j} = \shopC^j_{j|i} = \meanc_{|i} $ is valid, since $ \shop $ is curl-free.
    Mixed proxy components $ \tensor{[\nablaS(\etab_{L}\otimes\normal)]}{^{AB}_{k}} = \partial_k (\eta_L^A\normalC^B) $ yield
    \begin{align*}
        \partial_k (\eta_L^A\normalC^B)
            &= \partial_k (\eta_L^i \partial_i\paraC^{A}\normalC^B)
             = \eta_{L|k}^i\partial_i\paraC^{A}\normalC^B
               + \eta_L^i \shopC_{ki}\normalC^A\normalC^B  
               - \eta_L^i \shopC^j_k \partial_i\paraC^{A} \partial_j\paraC^{B} \formPeriod
    \end{align*}
    Substituting this into $ \left[ \DeltaS (\etab_{L}\otimes\normal) \right]^{AB} $, the product rule gives the summands
    \begin{align*}
        g^{ij} \partial_j\left( \eta_{L|i}^k\partial_k\paraC^{A}\normalC^B \right)
            &= \left( \eta_{L\ |j}^{k|j} + g^{ij}\Gamma_{ij}^l \eta_{L|l}^k \right)\partial_k\paraC^{A}\normalC^B
               + \eta_{L|i}^k \shopC^i_k \normalC^A\normalC^B
               - \eta_{L|i}^k \shopC^{il} \partial_k\paraC^{A} \partial_l\paraC^{B}\\
        g^{ij} \partial_j\left( \eta_L^k \shopC_{ik}\normalC^A\normalC^B \right)
            &=   \eta_{L|j}^k  \shopC^j_k \normalC^A\normalC^B 
               + \eta_L^k \left( \shopC^j_{k|j} + g^{ij}\Gamma_{ji}^l\shopC_{lk}  \right)\normalC^A\normalC^B
               - \eta_L^k \shopC^j_k \shopC^l_j \left( \partial_l\paraC^A \normalC^B - \normalC^A\partial_l \paraC \right)\\
        -g^{ij} \partial_j \left( \eta_L^k \shopC^l_i \partial_k\paraC^{A} \partial_l\paraC^{B} \right)
            &= - \eta_{L|j}^k \shopC^{jl} \partial_k\paraC^{A} \partial_l\paraC^{B}
               - \eta_L^k \left( \tensor{\shopC}{^{lj}_{|j}} + g^{ij}\Gamma_{ji}^m\shopC^l_m \right)\partial_k\paraC^{A} \partial_l\paraC^{B}
               - \eta_L^k \shopC^l_i \left( \shopC^i_k \normalC^A\partial_l\paraC^B + \shopC^i_l \partial_k\paraC^A \normalC^{B} \right)\\
        -g^{ij}\Gamma_{ij}^{k}\partial_k (\eta_L^A\normalC^B)
            &= -g^{ij}\Gamma_{ij}^{k} 
                \left(\eta_{L|k}^l\partial_l\paraC^{A}\normalC^B
                   + \eta_L^l \shopC_{kl}\normalC^A\normalC^B  
                   - \eta_L^l \shopC^m_k \partial_l\paraC^{A} \partial_m\paraC^{B}\right) \formComma
    \end{align*}
    which are adding up to
    \begin{align*}
        \DeltaS(\etab_L\otimes\normal)
            &= -2(\nabla\etab_{L})\shop - \etab_{L}\otimes\nabla\meanc
               +\left( \Delta\etab_{L} - \Tr(\shop^2)\etab_{L} - \shop^{2} \etab_{L} \right)\otimes\normal\\
            &\quad - 2 \normal\otimes \shop^{2} \etab_{L}
               +\left( 2(\nabla\etab_{L}):\shop + \etab_{L}\nabla\meanc \right) \normal\otimes\normal \formPeriod
    \end{align*}
    Since $ \DeltaS $ is compatible with transposition, 
    \ie\ it is $ \DeltaS ( \normal \otimes \etab_R ) = (\DeltaS(\etab_R\otimes\normal) )^T $ valid, this leads to
    \begin{align*}
        \DeltaS ( \normal \otimes \etab_R )
            &= -2\shop(\nabla\etab_{R})^T - (\nabla\meanc) \otimes\etab_{R}
               +  \normal \otimes\left( \Delta\etab_{R} - \Tr(\shop^2)\etab_{R} - \shop^{2} \etab_{R} \right) \\
            &\quad - 2  \shop^{2} \etab_{R} \otimes \normal
               +\left( 2(\nabla\etab_{R}):\shop + \etab_{R}\nabla\meanc \right) \normal\otimes\normal \formPeriod
    \end{align*}
    Mixed proxy components $ \tensor{[\nablaS(\phi\normal\otimes\normal)]}{^{AB}_{k}} = \partial_k (\phi\normalC^A\normalC^B) $ yield
    \begin{align*}
       \partial_k (\phi\normalC^A\normalC^B)
            &= \phi_{|k}  \normalC^A\normalC^B - \phi\shopC^l_k\left(  \partial_l\paraC^A\normalC^B +  \normalC^A\partial_l\paraC^B \right)
    \end{align*}
     Substituting this into $ \left[ \DeltaS (\phi\normal\otimes\normal) \right]^{AB} $, the product rule gives the summands
     \begin{align*}
        g^{ij} \partial_j \left( \phi_{|i}  \normalC^A\normalC^B \right)
            &= \left( \tensor{\phi}{^{|j}_{|j}} + g^{ij}\Gamma_{ij}^k\phi_{|k} \right)\normalC^A\normalC^B
                - \phi_{|i} \shopC^{il} \left( \partial_l\paraC^A\normalC^B +  \normalC^A\partial_l\paraC^B \right)\\
         -g^{ij} \partial_j \left( \phi\shopC^l_i \partial_l\paraC^A\normalC^B \right)
            &= -\phi_{|j} \shopC^{jl} \partial_l\paraC^A\normalC^B
               -\phi\left( \tensor{\shopC}{^{lj}_{|j}} + g^{ij}\Gamma_{ij}^k\shopC^l_k \right)\partial_l\paraC^A\normalC^B
               -\phi \shopC^l_i \shopC^i_l \normalC^A\normalC^B
               + \phi \shopC^l_i \shopC^{ik} \partial_l\paraC^A \partial_k\paraC^B \\
        -g^{ij} \partial_j \left( \phi\shopC^l_i \normalC^A\partial_l\paraC^B \right) 
            &= -\phi_{|j} \shopC^{jl} \normalC^A\partial_l\paraC^B
               - \phi\left( \tensor{\shopC}{^{jl}_{|j}} + g^{ij}\Gamma_{ij}^k\shopC^l_k \right)\normalC^A\partial_l\paraC^B
               - \phi \shopC^l_i \shopC^i_l \normalC^A\normalC^B
               + \phi \shopC^l_i \shopC^{ik} \partial_k\paraC^A \partial_l\paraC^B\\
        -g^{ij}\Gamma_{ij}^{k}\partial_k (\phi\normalC^A\normalC^B)
            &= -g^{ij}\Gamma_{ij}^{k}\left( \phi_{|k}  \normalC^A\normalC^B 
                                - \phi\shopC^l_k\left(  \partial_l\paraC^A\normalC^B +  \normalC^A\partial_l\paraC^B \right) \right) \formComma
     \end{align*}
      which are adding up to
      \begin{align*}
        \nablaS(\phi\normal\otimes\normal)
            &= 2\phi\shop^2
                - \left( 2\shop\nabla\phi + \phi\nabla\meanc \right)\otimes\normal
                - \normal\otimes \left( 2\shop\nabla\phi + \phi\nabla\meanc \right)
                + \left( \Delta\phi - 2\phi \Tr(\shop^2) \right) \normal\otimes\normal \formPeriod
      \end{align*}
\end{proof}

\begin{lemma}\label{lem:ssq}
    For all symmetric tangential 2-tensor fields $\sbb\in\tangentSymS:=\{ \rb\in\tangentS[^2] \,\vert\, \rb=\rb^T \} $ and tangential Q-tensor fields $\qb\in\tangentQS$ holds
    \begin{align*}
        \projQS(\sbb^2\qb) 
            &= \frac{1}{2} \normsq{\tangentSymS}{\sbb} \qb \formPeriod
    \end{align*}
\end{lemma}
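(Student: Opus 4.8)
The plan is to reduce the identity to a pointwise computation with symmetric endomorphisms of the two-dimensional tangent plane and to exploit the Cayley--Hamilton theorem in dimension two. At each point of $\surf$ the tangent space $\tangentS$ is two-dimensional, so $\sbb$ and $\qb$ may be regarded as $2\times 2$ matrices with $\sbb$ symmetric and $\qb$ symmetric and trace-free. Since $\sbb$ is symmetric, Cayley--Hamilton yields $\sbb^2 = (\Tr\sbb)\sbb - (\det\sbb)\IdS$, and therefore
\begin{align*}
    \sbb^2\qb = (\Tr\sbb)\,\sbb\qb - (\det\sbb)\,\qb\formPeriod
\end{align*}
Because $\qb\in\tangentQS$ already satisfies $\projQS\qb = \qb$, the task reduces to evaluating $\projQS(\sbb\qb)$.

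First I would establish the auxiliary identity $\projQS(\sbb\qb) = \frac{1}{2}(\Tr\sbb)\qb$, valid for any symmetric $\sbb$ and any $\qb\in\tangentQS$. Splitting $\sbb = \frac{1}{2}(\Tr\sbb)\IdS + \sbb_0$ into its pure-trace and trace-free symmetric parts, the identity part contributes $\frac{1}{2}(\Tr\sbb)\projQS\qb = \frac{1}{2}(\Tr\sbb)\qb$, so everything hinges on showing $\projQS(\sbb_0\qb) = 0$ for two trace-free symmetric $2\times 2$ matrices. This is the main obstacle. I expect to dispatch it by the polarized Cayley--Hamilton identity: for trace-free $\sbb_0$ and $\qb$ the anticommutator $\sbb_0\qb + \qb\sbb_0$ is a scalar multiple of $\IdS$. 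Since $\projQS(\sbb_0\qb) = \frac{1}{2}(\sbb_0\qb + \qb\sbb_0) - \frac{1}{2}\Tr(\sbb_0\qb)\IdS$ and $\projQS$ annihilates every multiple of $\IdS$, the claim $\projQS(\sbb_0\qb) = 0$ follows at once.

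Combining these ingredients gives
\begin{align*}
    \projQS(\sbb^2\qb) = \left( \frac{(\Tr\sbb)^2}{2} - \det\sbb \right)\qb\formPeriod
\end{align*}
Finally I would rewrite the scalar coefficient by means of the two-dimensional trace identity $\Tr(\sbb^2) = (\Tr\sbb)^2 - 2\det\sbb$, which turns $\frac{1}{2}(\Tr\sbb)^2 - \det\sbb$ into $\frac{1}{2}\Tr(\sbb^2)$. Since $\normsq{\tangentSymS}{\sbb} = \inner{\tangentS[^2]}{\sbb,\sbb} = \Tr(\sbb^2)$ for symmetric $\sbb$, this equals $\frac{1}{2}\normsq{\tangentSymS}{\sbb}$, and the asserted identity $\projQS(\sbb^2\qb) = \frac{1}{2}\normsq{\tangentSymS}{\sbb}\,\qb$ drops out. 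The only genuinely geometric input is the two-dimensionality of $\tangentS$, which is exactly what makes Cayley--Hamilton reduce the quadratic expression $\sbb^2$ to a linear combination of $\sbb$ and $\IdS$.
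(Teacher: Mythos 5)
Your argument is correct, and it reaches the conclusion by a genuinely different route than the paper. The paper's proof works in the orthogonal frame $\{\qb,\qb\Eb\}$ of $\tangentQS$ built from the Levi-Civita tensor $\Eb$: it shows the component of $\projQS(\sbb^2\qb)$ along $\qb\Eb$ vanishes via $\rb\bot(\rb\Eb)$, computes the component along $\qb$ using $\qb^2=\frac{\Tr\qb^2}{2}\IdS$, and has to assume $\qb\neq 0$ pointwise to span $\tangentQS$ this way. You instead stay entirely inside matrix algebra: Cayley--Hamilton in dimension two reduces $\sbb^2$ to $(\Tr\sbb)\sbb-(\det\sbb)\IdS$, the polarized identity $\sbb_0\qb+\qb\sbb_0=\Tr(\sbb_0\qb)\IdS$ for trace-free symmetric factors kills $\projQS(\sbb_0\qb)$, and the trace identity $\Tr(\sbb^2)=(\Tr\sbb)^2-2\det\sbb$ assembles the coefficient. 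The two proofs are cousins — the paper's auxiliary fact $\qb^2=\frac{\Tr\qb^2}{2}\IdS$ is itself the Cayley--Hamilton theorem for trace-free symmetric $2\times 2$ matrices — but yours buys a few things: it needs no case distinction or limiting argument at points where $\qb=0$, it does not introduce $\Eb$, and it isolates the reusable intermediate identity $\projQS(\sbb\qb)=\frac{1}{2}(\Tr\sbb)\qb$, which is of independent use elsewhere in such computations. The paper's version, in exchange, makes the orthogonal structure of $\tangentQS$ as a rank-two bundle explicit, which is the viewpoint it leans on in the surrounding Q-tensor calculations. One small point of care: as in the paper, $\normsq{\tangentSymS}{\sbb}$ must here be read as the pointwise norm $\inner{\tangentS[^2]}{\sbb,\sbb}=\Tr(\sbb^2)$, not the integrated $\hil$-norm; you use it that way, consistently with the lemma statement.
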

\begin{proof}
    We use the Levi-Civita tensor $\Eb\in\tangentAS:=\{ \rb\in\tangentS[^2] \,\vert\, \rb=-\rb^T \}$.
    It is a skew-symmetric tangential 2-tensor field defined by its covariant proxy components $E_{ij} := \sqrt{\det\gb}\varepsilon_{ij}$,
    where $\{ \varepsilon_{ij} \}$ are the Levi-Civita symbols, see \cite{Nitschke_2018, NitschkeSadikVoigt_A_2022} for more details.
    This tensor field is very useful in many situations involving tangential tensor fields. 
    We use the properties that $\rb\bot(\rb\Eb)$ is valid for all $\rb\in\tangentS[^2]$ and $\qb\Eb\in\tangentQS$.
    This yields
    \begin{align*}
        \inner{\tangentQS}{\projQS(\sbb^2\qb), \qb\Eb}
            &= \inner{\tangentS[^2]}{\sbb^2\qb, \qb\Eb}
             = \inner{\tangentS[^2]}{\sbb\qb, (\sbb\qb)\Eb}
             = 0 \formPeriod
    \end{align*}
    Since $ \qb^2= \frac{\Tr\qb^2}{2}\IdS $ is valid, see \cite[Cor~A.4.]{Nitschke_2018}, we obtain
    \begin{align*}
        \inner{\tangentQS}{\projQS(\sbb^2\qb), \qb}
            &= \inner{\tangentS[^2]}{\sbb^2\qb, \qb}
             = \inner{\tangentS[^2]}{\sbb^2, \qb^2}
             = \frac{1}{2} \Tr\sbb^2\Tr\qb^2 \formPeriod
    \end{align*}
    Assuming $\qb\neq 0$ everywhere without loss of generality, we can span the space of Q-tensor fields by $\tangentQS = \operatorname{Span}_{\tangentS[^0]}\{ \qb , \qb\Eb\}$.
    Due to this we get
    \begin{align*}
        \projQS(\sbb^2\qb) 
            &= \frac{\inner{\tangentQS}{\projQS(\sbb^2\qb), \qb}}{\normsq{\tangentQS}{\qb}}\qb
               + \frac{\inner{\tangentQS}{\projQS(\sbb^2\qb), \qb\Eb}}{\normsq{\tangentQS}{\qb\Eb}}\qb\Eb
             = \frac{1}{2} \normsq{\tangentSymS}{\sbb} \qb \formComma
    \end{align*}
    since $\Tr\sbb^2 = \normsq{\tangentSymS}{\sbb}$ for all $\sbb\in\tangentSymS$.
\end{proof}

\section{Outsourced Calculations}

\subsection{Time Derivative on Scalar Fields}\label{sec:scalar_timeder_scratch}

Local observer coordinate parameters $ (y_{\ofrak}^1,y_{\ofrak}^2) $ can be given by
\begin{align*}
    y_{\ofrak}^i 
        &= y_{\ofrak}^i(t, y_{\mfrak}^1, y_{\mfrak}^2)
        = (\para_{\ofrak}\vert_{t}^{-1}\circ\para_{\mfrak})(t,y_{\mfrak}^1,y_{\mfrak}^2)
\end{align*}
depended on local material coordinate parameters $ (y_{\mfrak}^1,y_{\mfrak}^2) $  at time $ t $.
Therefore, with relation \eqref{eq:tensor_observer_relation}, a scalar field $ f[\para_\mfrak]\in\tangentS[^0] $ and the pullback \eqref{eq:scalar_pullback} yields
\begin{align*}
    f[\para_{\mfrak}](t,y_{\mfrak}^1, y_{\mfrak}^2)
        &= f[\para_{\ofrak}](t, y_{\ofrak}^1(t, y_{\mfrak}^1, y_{\mfrak}^2),y_{\ofrak}^2(t, y_{\mfrak}^1, y_{\mfrak}^2))\\
    (\Phi^{*_0}_{t,\tau}f[\para_{\mfrak}]\vert_{t+\tau})(t,y_{\mfrak}^1,y_{\mfrak}^2) 
        &= f[\para_{\ofrak}](t+\tau, y_{\ofrak}^1(t+\tau, y_{\mfrak}^1, y_{\mfrak}^2),y_{\ofrak}^2(t+\tau, y_{\mfrak}^1, y_{\mfrak}^2)) \formPeriod
\end{align*}
Taylor expansion of the pullback at $ \tau=0 $ gives
\begin{align}
    \MoveEqLeft f[\para_{\ofrak}](t+\tau, y_{\ofrak}^1(t+\tau, y_{\mfrak}^1, y_{\mfrak}^2),y_{\ofrak}^2(t+\tau, y_{\mfrak}^1, y_{\mfrak}^2)) \notag\\
        &= f[\para_{\ofrak}](t, y_{\ofrak}^1(t, y_{\mfrak}^1, y_{\mfrak}^2),y_{\ofrak}^2(t, y_{\mfrak}^1, y_{\mfrak}^2))
            + \tau \partial_t f[\para_{\ofrak}](t, y_{\ofrak}^1(t, y_{\mfrak}^1, y_{\mfrak}^2),y_{\ofrak}^2(t, y_{\mfrak}^1, y_{\mfrak}^2)) \notag\\
         &\quad + \tau \partial_t y_{\ofrak}^i(t, y_{\mfrak}^1, y_{\mfrak}^2) 
                        \partial_i f[\para_{\ofrak}](t, y_{\ofrak}^1(t, y_{\mfrak}^1, y_{\mfrak}^2),y_{\ofrak}^2(t, y_{\mfrak}^1, y_{\mfrak}^2))
            +\landau(\tau^2) \formPeriod
  \label{eq:eq0001}
\end{align}
To express $\partial_t y_{\ofrak}^i(t, y_{\mfrak}^1, y_{\mfrak}^2)$ also in terms of $y_{\ofrak}^i(t, y_{\mfrak}^1, y_{\mfrak}^2)$ we calculate
\begin{align*}
    \MoveEqLeft 
     \partial_t y_{\ofrak}^i(t, y_{\mfrak}^1, y_{\mfrak}^2) \partial_i \para_{\ofrak}(t, y_{\ofrak}^1(t, y_{\mfrak}^1, y_{\mfrak}^2),y_{\ofrak}^2(t, y_{\mfrak}^1, y_{\mfrak}^2))\\
        &= \frac{d}{dt} \para_{\ofrak}(t, y_{\ofrak}^1(t, y_{\mfrak}^1, y_{\mfrak}^2),y_{\ofrak}^2(t, y_{\mfrak}^1, y_{\mfrak}^2)) 
                - \partial_t \para_{\ofrak}(t, y_{\ofrak}^1(t, y_{\mfrak}^1, y_{\mfrak}^2),y_{\ofrak}^2(t, y_{\mfrak}^1, y_{\mfrak}^2)) \\
        &= \partial_t \para_{\mfrak}(t,y_{\mfrak}^1, y_{\mfrak}^2) 
            - \partial_t \para_{\ofrak}(t, y_{\ofrak}^1(t, y_{\mfrak}^1, y_{\mfrak}^2),y_{\ofrak}^2(t, y_{\mfrak}^1, y_{\mfrak}^2))\\
        &= \Vb_{\mfrak}[\para_\mfrak](t,y_{\mfrak}^1, y_{\mfrak}^2) 
            - \Vb_{\ofrak}[\para_\ofrak](t, y_{\ofrak}^1(t, y_{\mfrak}^1, y_{\mfrak}^2),y_{\ofrak}^2(t, y_{\mfrak}^1, y_{\mfrak}^2))\\
        &= \ub[\para_\ofrak, \para_\mfrak](t, y_{\ofrak}^1(t, y_{\mfrak}^1, y_{\mfrak}^2),y_{\ofrak}^2(t, y_{\mfrak}^1, y_{\mfrak}^2)) \formComma
\end{align*}
\ie\  it holds 
$ \partial_t y_{\ofrak}^i(t, y_{\mfrak}^1, y_{\mfrak}^2)
        = u^i[\para_\ofrak, \para_\mfrak](t, y_{\ofrak}^1(t, y_{\mfrak}^1, y_{\mfrak}^2),y_{\ofrak}^2(t, y_{\mfrak}^1, y_{\mfrak}^2)) $
\wrt\ the observer frame induced by $ \para_{\ofrak} $.
With $ \dot{f}:= \Dt\vert_{\Phi^{*}_{t,\tau}=\Phi^{*_0}_{t,\tau}}f $ and Taylor expansion \eqref{eq:eq0001}, 
the time derivative \eqref{eq:tensor_timeD_general} becomes
\begin{align*}
    \dot{f}[\para_{\mfrak}](t,y_{\mfrak}^1, y_{\mfrak}^2)
     &=\dot{f}[\para_{\ofrak}](t, y_{\ofrak}^1(t, y_{\mfrak}^1, y_{\mfrak}^2),y_{\ofrak}^2(t, y_{\mfrak}^1, y_{\mfrak}^2))\\
        &= \partial_t f [\para_{\ofrak}](t, y_{\ofrak}^1(t, y_{\mfrak}^1, y_{\mfrak}^2),y_{\ofrak}^2(t, y_{\mfrak}^1, y_{\mfrak}^2))\\
        &\quad + u^i[\para_\ofrak, \para_\mfrak](t, y_{\ofrak}^1(t, y_{\mfrak}^1, y_{\mfrak}^2),y_{\ofrak}^2(t, y_{\mfrak}^1, y_{\mfrak}^2))
                \partial_i f[\para_{\ofrak}](t, y_{\ofrak}^1(t, y_{\mfrak}^1, y_{\mfrak}^2),y_{\ofrak}^2(t, y_{\mfrak}^1, y_{\mfrak}^2))\formPeriod
\end{align*}
Since none of the terms need to use the  local material coordinate parameters $ (y_{\mfrak}^1,y_{\mfrak}^2) $ at time $ t $ anymore, we get
\begin{align*}
    \dot{f}[\para_{\ofrak}](t, y_{\ofrak}^1,y_{\ofrak}^2)
        &= \partial_t f [\para_{\ofrak}](t, y_{\ofrak}^1,y_{\ofrak}^2) 
                + \nabla_{\ub[\para_\ofrak, \para_\mfrak](t, y_{\ofrak}^1,y_{\ofrak}^2) } f[\para_{\ofrak}](t, y_{\ofrak}^1,y_{\ofrak}^2)
\end{align*}
finally by local evaluations at events $ (t, y_{\ofrak}^1,y_{\ofrak}^2) $ instead of $ (t, y_{\mfrak}^1,y_{\mfrak}^2) $.

\subsection{Decomposition of the Surface conforming $\hil$-Gradient Flow \eqref{eq:qtensor_LdG_confomalflow}}\label{sec:qtensor_LdG_confomalflow_decomposed}

In this section we are assuming the orthogonal surface conforming decomposition $ \Qb=\qb + \beta(\normal\otimes\normal - \frac{1}{2}\IdS) \in \tangentCQR $,
with uniquely given $ \qb\in\tangentQS $ and $\beta\in\tangentS[^0]$, 
\ie\ it is $ \Qb=\Qbcal\left[\qb,0,\beta\right] $ valid according to Q-tensor decomposition \eqref{eq:qtensor_decomposition}.
The decomposition of the left-handed side of \eqref{eq:qtensor_LdG_confomalflow} is already clarified with 
\eqref{eq:cqtensor_Dmat} for the surface conforming material derivative and \eqref{eq:cqtensor_Djau} for the Jaumann derivative.
For the elastic part we use the decomposition of the Laplace operator in corollary \ref{col:surface_laplace_decomposition} with $ \rb= \qb-\frac{\beta}{2}\IdS $, 
$ \etab_{L}=\etab_{R}=0 $ and $ \phi=\beta $.
Hence with surface conforming projection \eqref{eq:cqtensor_projection}, this yields
\begin{align*}
    \DeltaCQS\Qb
        &= \projCQR \DeltaS\Qb
         = \Delta\qb - \frac{\Delta\beta}{2}\IdS - \left( \shop^2\qb + \qb\shop^2 \right) 
            + 3\beta\shop^2
            + \left( 2 \shop^2:\qb + \Delta\beta - 3\beta \Tr(\shop^2)\right) \normal\otimes\normal\\
        &=  \Delta\qb - \projQS\left( \shop^2 \left( 2\qb - 3\beta\IdS \right) \right)
            + \left( \Delta\beta + \inner{\tangentS[^2]}{ \shop^2, 2\qb - 3\beta\IdS } \right) \left( \normal\otimes\normal - \frac{1}{2}\IdS \right) \\
        &= \Delta\qb - \Tr(\shop^2)\qb + 3\beta \projQS(\shop^2)
            + \left( \Delta\beta + \inner{\tangentS[^2]}{ \shop^2, 2\qb - 3\beta\IdS } \right) \left( \normal\otimes\normal - \frac{1}{2}\IdS \right)
            \in\tangentCQR \formPeriod
\end{align*}
The bottom line follows from $ 2\projQS( \shop^2\qb ) = \Tr(\shop^2)\qb $ (lemma \ref{lem:ssq}).
For the thermotropic part we first calculate
\begin{align*}
    \Qb^2 &= \left( \qb + \beta(\normal\otimes\normal - \frac{1}{2}\IdS) \right)^2
        = \qb^2 - \beta\qb + \beta^2\left( \normal\otimes\normal + \frac{1}{4}\IdS \right)\formPeriod
\end{align*}
Its trace and Q-tensor part is
\begin{align*}
    \Tr\Qb^2
        &= \Tr\qb^2 + \frac{3}{2}\beta^2\\
    \Qb^2 - \frac{\Tr(\Qb^2)}{3}\left( \IdS + \normal\otimes\normal \right)
        &= \qb^2 - \beta\qb - \left( \frac{\beta^2}{4} + \frac{\Tr\qb^2}{3} \right)\IdS
                  + \left( \frac{\beta^2}{2} - \frac{\Tr\qb^2}{3} \right) \normal\otimes\normal\\
        &= - \beta\qb +\left( \frac{\beta^2}{2} - \frac{\Tr\qb^2}{3} \right) \left( \normal\otimes\normal - \frac{1}{2}\IdS \right)
        \in\tangentCQR\formComma
\end{align*}
where we use that $ \Id=\IdS +  \normal\otimes\normal $  and $ \qb^2= \frac{\Tr\qb^2}{2}\IdS $ \cite[Cor~A.4.]{Nitschke_2018} hold.
Eventually, orthogonality $ ( \normal\otimes\normal - \frac{1}{2}\IdS)\bot\tangentQS $ results in the 
decomposed surface conforming $ \hil $-gradient flow \eqref{eq:qtensor_LdG_confomalflow_decomposed}.

\vspace*{1cm}
\noindent
{\bf Acknowledgements}: AV was supported by DFG through FOR3013. 

\bibliography{bib}

\end{document}